%% file: Which_Voting_Rules_Are_More_Resilient_to_Coalitional_Manipulation.tex
\documentclass[format=acmsmall, review=false]{acmart}
\usepackage{sscw-26}
\usepackage{booktabs} % For formal tables
\usepackage[ruled]{algorithm2e} % For algorithms

\SetAlFnt{\small}
\SetAlCapFnt{\small}
\SetAlCapNameFnt{\small}
\SetAlCapHSkip{0pt}
\IncMargin{-\parindent}

\input{commands.tex}

\etocdepthtag.toc{main} % tag "begin main" (optional but better)

\setcitestyle{authoryear}

\title[Which Voting Rules Are More Resilient to Coalitional Manipulation?]{Which Voting Rules Are More Resilient to\\Coalitional Manipulation?}

\author{François Durand (Nokia Bell Labs France)}

\begin{abstract}
Which voting rules are more resilient to coalitional manipulation?
We find that a deliberately minimal model, capturing only the degree of advantage of one preference ranking over the others, can predict their relative vulnerability remarkably well.

Extending prior work on three rules, we systematically analyze all standard ordinal voting rules under the Perturbed Culture model, a variant of Impartial Culture parameterized by the extra weight assigned to one ranking.
Each rule exhibits a sharp phase transition: manipulation succeeds with high probability below a critical concentration threshold, and fails above it.
This structure reveals natural families of rules: seemingly distinct methods such as Maximin, Ranked Pairs, Schulze, and Young share identical thresholds, while Baldwin, Nanson, Kemeny, and Dodgson form another.
These groupings are driven by new, strengthened notions of Condorcet winners.
In addition, we identify a third family based on a previously introduced Condorcet notion: Black, Slater, and Copeland.

Empirically, the model displays strong predictive power. Tested on real-world datasets (Netflix and FairVote), it accurately ranks rules by vulnerability, predicts how this ranking evolves with the number of candidates, and explains why empirically similar clusters persist despite large absolute differences in manipulation rates, with a more nuanced picture for Bucklin and veto-based rules. Thus, an extremely parsimonious model with no tuning captures the comparative vulnerability of voting rules: which rules to prefer depends largely on the number of candidates alone.

Presentation video: \textcolor{blue}{\url{https://www.youtube.com/watch?v=hY4233TGUGw}}.
\end{abstract}

\begin{document}

% Title page for title and abstract only.
\begin{titlepage}

\maketitle

% Optionally include a table of contents
\vspace{1cm}
%\setcounter{tocdepth}{2} % adjust to 1 if desired
%\tableofcontents

\begingroup
  \etocsettagdepth{main}{all}  % on affiche tout le main
  \etocsettagdepth{app}{none}  % on masque les appendices
  \etocsettocstyle{\section*{Contents}}{} % ou \chapter* selon ta classe
  \tableofcontents
\endgroup

\end{titlepage}

% Paper body

\section{Introduction}

Which voting rules are more resilient to coalitional manipulation? This question has long motivated research in social choice theory, yet remains difficult to answer in general terms. The Gibbard--Satterthwaite theorem establishes that no non-trivial voting rule can be entirely immune to manipulation, but it leaves open the possibility of meaningful comparisons between rules. This paper develops a principled framework for such comparisons, based on a deliberately simple probabilistic model that captures the essential structure of voting rule vulnerability.

\subsection{Motivation}

A voting rule is \emph{coalitionally manipulable} (CM) in a given profile if a subset of voters could obtain a preferred outcome by misreporting their preferences. 
This property can be interpreted \emph{ex ante} as a vulnerability to strategic voting, or \emph{ex post} as a source of regret for sincere voters, potentially undermining trust in electoral outcomes \cite{eggers2024susceptibility,durand2015towards}.
While the Gibbard--Satterthwaite theorem implies that any non-trivial voting rule is susceptible to this phenomenon \cite{gibbard1973manipulation,satterthwaite1975strategyproofness}, it remains possible to compare the vulnerability of voting rules, in particular through their \emph{CM rate}, defined as the theoretical or empirical proportion of profiles in which the rule is coalitionally manipulable.

To explain the low empirical vulnerability of \emph{Instant-Runoff Voting} (IRV), \citet{durand2025irv} relies on the \emph{Perturbed Culture} model introduced by~\citet{williamson1967social}, in which one ranking is favored by a \emph{concentration parameter}~$\theta$ while all others are equally probable. Comparing IRV with Plurality and Plurality with Runoff, this study identifies a \emph{phase transition} characterized by a critical concentration parameter $\theta_c(f,m)$, depending on the voting rule~$f$ and the number of candidates~$m$. In the subcritical regime $\theta<\theta_c(f,m)$, the CM rate converges to~$1$ for large electorates, whereas in the supercritical regime $\theta>\theta_c(f,m)$ it converges to~$0$. The parameter $\theta_c(f,m)$ thus measures how quickly a rule becomes resistant to coalitional manipulation as preference concentration increases. For IRV and its variants, the introduction of the \emph{Super Condorcet Winner} (SCW) yields $\theta_c(f,m)=0$, showing resilience even under arbitrarily small preference concentration.

The goal of the present study is to determine whether the Perturbed Culture model can also shed light on the comparative performance of different voting rules.
While IRV is particularly resilient, one may wish to use alternative rules that satisfy axiomatic properties violated by IRV, such as monotonicity.
This raises a natural question: among standard voting rules satisfying certain axiomatic properties, which ones should be preferred in order to minimize vulnerability to coalitional manipulation? 
Our objective is to provide a principled way to rank voting rules with respect to this criterion, thereby addressing this question beyond the special case of IRV.

Rather than aiming for a realistic model fitted to empirical data through multiple parameters, we deliberately focus on a minimal model. The ability of such a simple framework to reproduce observed qualitative phenomena provides evidence that it captures key underlying mechanisms.

\subsection{Contributions}

This paper makes three main contributions.

\paragraph{Theoretical framework}

We substantially extend the work of \citet{durand2025irv} through a systematic analysis of all standard ordinal voting rules. We introduce two strengthened notions of Condorcet winners:
\begin{itemize}
\item The \emph{Pair-Safe Condorcet Winner}, characterizing Maximin, Ranked Pairs, Schulze, and Young;
\item The \emph{Set-Safe Condorcet Winner}, characterizing Baldwin, Nanson, Kemeny, and Simplified Dodgson.
\end{itemize}
Together with the existing notion of \emph{Resistant Condorcet Winner} \cite{durand2016condorcet}, which we show underlies the behavior of Black, Slater, and Copeland, these notions form a natural hierarchy, while the remaining voting rules are analyzed on a case-by-case basis. For each rule, we establish the existence of a phase transition and determine the critical concentration parameter $\theta_c(f, m)$.

\paragraph{Empirical validation} 

We test our theoretical predictions on the Netflix and
FairVote datasets using an enhanced version of the SVVAMP package \cite{durand2016svvamp,svvamp}. Our contributions include five new voting rules (Kemeny, Slater, Young, Dodgson, and Simplified Dodgson) and improved manipulation algorithms for five others (Ranked Pairs, Baldwin, Nanson, Copeland, and Kim--Roush). The key empirical findings are:
\begin{itemize}
\item Rules belonging to the same theoretical family exhibit strikingly similar CM rates;
\item The theoretical bounds from our Condorcet notions are almost tight in practice;
\item The model accurately predicts the relative ranking of voting rules by vulnerability, largely independent of the dataset.
\end{itemize}

\paragraph{Predictive power without fitting}

The model uses no tunable parameters beyond the number of candidates~$m$, yet it accurately predicts the relative vulnerability of voting rules across datasets with very different absolute CM rates.

\subsection{Related Work}

Several studies analyze coalitional manipulability from a theoretical perspective
\cite{favardin2002bordacopeland,lepelley1994vulnerability,lepelley1999kimroush,lepelley2003homogeneity}.
In particular, \citet{kim1996manipulability} show that many rules have a limiting CM rate equal to~$1$ under Impartial Culture, which prevents this criterion from meaningfully discriminating between them.

\citet{xia2023impact} identifies a form of phase transition for coalitional manipulation phenomena.
In that work, this transition is studied with respect to a budget~$B$ for the number of manipulators, whereas our analysis focuses on the concentration of voters’ preferences.
Moreover, for many standard voting rules, it is shown that in Impartial Culture, the CM rate with $n$ voters and budget~$B$ is $\Theta(\min\{ \frac{B}{\sqrt{n}}, 1 \})$.
While this result provides valuable insight into the role of the number of manipulators,
it cannot be used to compare voting rules.

Empirically, several studies have compared CM rates across voting rules
\cite{chamberlin1984observed,tideman2006collective,green2014strategic,green2016statistical}.
Analyses based on both artificial cultures and real-world datasets
\cite{durand2015towards,durand2023coalitional}
have highlighted similarities in behavior among rules such as
Maximin, Ranked Pairs, and Schulze.
We extend these results by identifying additional families of voting rules,
first at the theoretical level and then empirically,
thanks to the implementation of more precise manipulation algorithms.

Finally, a large body of work studies the algorithmic complexity of individual or coalitional manipulation, which we do not address here (see \citet[Chapter~6]{brandt2016handbook} for an overview).

\subsection{Limitations}

This study has two main limitations.
First, it excludes non-ordinal rules such as Approval, Range Voting, Majority Judgment, or STAR. 
Second, while the model accurately captures the behavior of most rules, it does not explain the near-identical empirical CM rates of Kim--Roush and Veto, and its predictions are less precise for Bucklin and Coombs (see Section~\ref{sec:simu_variation_m}).

\subsection{Roadmap}

Section~\ref{sec:framework} introduces the framework and notation. Section~\ref{sec:theoretical_results} presents the theoretical results, and Section~\ref{sec:numerical_results} reports the numerical experiments. Section~\ref{sec:future_work} concludes with perspectives for future work.

\section{Framework}\label{sec:framework}

This section introduces the framework and notation used throughout the paper.
Section~\ref{sec:voting_theory_and_model} recalls standard notions from voting theory and defines the Perturbed Culture model used in this work.
Section~\ref{sec:zoology} reviews the voting rules considered.
Finally, Section~\ref{sec:known_results} summarizes the main results established by \citet{durand2025irv}, which serve as a starting point for our analysis.

\subsection{Voting Theory and Model}\label{sec:voting_theory_and_model}

We introduce the basic notions of voting theory and the probabilistic model considered in this work.
Our notation mostly follows that of \citet{durand2025irv}.  

A \emph{discrete profile}~$P$ consists of a finite, non-empty set of candidates $\mathcal{C}(P)$, with $m(P)=|\mathcal{C}(P)|$; a finite, non-empty set of voters $\mathcal{V}(P)$, with $n(P)=|\mathcal{V}(P)|$; and, for each voter $v \in \mathcal{V}(P)$, a preference ranking $P_v$ over $\mathcal{C}(P)$.
For a ranking~$r$, let $w(r,P)$ denote its \emph{weight}, that is, the number of voters whose preference ranking is~$r$.
The \emph{total weight} of~$P$ is then $w(P)=\sum_r w(r,P)=n(P)$.

A \emph{continuous profile}~$P$ consists of a finite, non-empty set of candidates $\mathcal{C}(P)$, with $m(P)=|\mathcal{C}(P)|$; a total weight $w(P)\in(0,\infty)$; and, for each ranking~$r$ over $\mathcal{C}(P)$, a weight $w(r,P)\in[0,\infty)$, satisfying $\sum_r w(r,P)=w(P)$.

For any profile~$P$, discrete or continuous, the \emph{normalized profile}~$\bar{P}$ is the continuous profile defined by
\( w(r,\bar{P}) = \frac{w(r,P)}{w(P)}. \)

We use the following notation for restricted profiles.  
For $K \subseteq \mathcal{C}(P)$, let $P_{K}$ denote the restriction of~$P$ to the candidates in~$K$.
For a candidate~$c$ and a position~$k$, let $P^{r(c)=k}$ (resp.\ $P^{r(c)\leq k}$) denote the sub-profile of voters ranking~$c$ in position~$k$ (resp.\ among their top~$k$ positions).
For $c,d \in \mathcal{C}(P)$, let $P^{c \succ d}$ denote the restriction of~$P$ to voters preferring~$c$ to~$d$, and similarly for multiple comparisons (e.g., $P^{c \succ d \text{ and } c \succ e}$).
These notations can be combined to restrict simultaneously to a subset of candidates and a subset of voters.  

A \emph{voting rule}~$f$ maps any discrete profile~$P$ to a candidate in $\mathcal{C}(P)$.  
Almost all rules considered here extend naturally to continuous profiles and are \emph{homogeneous}, in the sense that their outcome~$f(P)$ depends only on the normalized profile~$\bar{P}$.
The only exceptions are Young, Dodgson, and Simplified Dodgson (see Section~\ref{sec:def_dodgson_young}).

For candidates~$c,d \in \mathcal{C}(P)$, let
\(
W(c,d,P)=w(P^{c \succ d})
\)
denote the total weight of voters preferring~$c$ to~$d$ in profile~$P$.
The matrix~$W(P)$ with entries~$W(c,d,P)$ is the \emph{weighted majority matrix} of~$P$.
The \emph{unweighted majority matrix}~$M(P)$ is obtained from~$W(P)$ by setting
$M(c,d,P)$ equal to $1$ (resp.\ $\tfrac{1}{2}$, $0$) if $W(c,d,P)$ is greater than
(resp.\ equal to, less than) $W(d,c,P)$; by convention, diagonal entries are set to~$0$.
A candidate~$c$ is the \emph{Condorcet winner} of a profile~$P$ if $M(c,d,P)=1$ for every other candidate~$d$.
A voting rule~$f$ is \emph{Condorcet-consistent} if it elects the Condorcet winner whenever one exists.
The \emph{Smith set} of a profile~$P$ is the smallest non-empty subset
$K \subseteq \mathcal{C}(P)$ such that $M(c,d,P)=1$ for all
$c \in K$ and $d \in \mathcal{C}(P)\setminus K$.
In particular, if a Condorcet winner exists, it is the unique member of the Smith set.

For a discrete profile~$P$, we say that a voting rule~$f$ is \emph{coalitionally manipulable} (CM) in~$P$ (or that $P$ is CM in~$f$) if there exists a \emph{target profile}~$Q$ with the same candidates and voters such that $f(Q)\neq f(P)$, and every voter who changes their ballot prefers $f(Q)$ to $f(P)$ according to their true ranking~$P_v$.
For a continuous profile~$P$, assuming that~$f$ is defined on continuous profiles, we say that $f$ is coalitionally manipulable (CM) in~$P$ if there exists a target profile~$Q$ with the same candidates and total weight such that $f(Q)\neq f(P)$, and for every ranking~$r$, whenever $w(r,Q)<w(r,P)$ the ranking~$r$ places $f(Q)$ above $f(P)$. Throughout the paper, we use the terms ``coalitionally manipulable'' (CM) and ``susceptible to coalitional manipulation'' interchangeably. Likewise, we use ``non-coalitionally manipulable'' (non-CM) interchangeably with ``immune to coalitional manipulation''.

\medskip
We now introduce the probabilistic model and the quantitative measure of coalitional manipulability central to our analysis, following \citet{durand2025irv}.

Given integers $n,m>0$ and a \emph{concentration parameter} $\theta\in(0,1]$, the \emph{Perturbed Culture} generates a random discrete profile~$P$ with $\mathcal{C}(P)=\{1,\ldots,m\}$ and $\mathcal{V}(P)=\{1,\ldots,n\}$.  
Each voter independently adopts the reference ranking $(1,\ldots,m)$ with probability~$\theta$, and a uniformly random ranking otherwise.
We exclude the case $\theta=0$, which corresponds to the classical model of \emph{Impartial Culture}.

We denote by $\hat{P}$ the \emph{expected normalized profile}, or simply the \emph{expected profile}, defined as the continuous profile in which each ranking has weight $\tfrac{1-\theta}{m!}$, except for the reference ranking $(1,\ldots,m)$, which has weight $\theta+\tfrac{1-\theta}{m!}$. For concision, we leave the dependence of $\hat{P}$ on $\theta$ and $m$ implicit. The expected profile $\hat{P}$ serves as the asymptotic limit of normalized profiles as the number of voters grows, and plays a central role in the proofs.

For a dataset of profiles, the \emph{CM rate} of a voting rule~$f$ is the fraction
of profiles that are CM in~$f$; for a probabilistic model, it is the probability
that a random profile is CM.
We write $\rho(f,m,n,\theta)$ for the CM rate of~$f$ in the Perturbed Culture
with parameters~$m,n,\theta$.

\subsection{Voting Rules}\label{sec:zoology}

\newcommand{\defrule}[3]{\smallskip\emph{#1 (#2).} #3}

We now review the voting rules considered in this paper.
To keep the presentation concise, rules are grouped into broad categories, although many could naturally belong to more than one.
All rules are assumed to use a tie-breaking mechanism; unless otherwise specified (Slater and Copeland), our theoretical results do not depend on it, and numerical simulations break ties in favor of candidates with smaller indices.
Readers familiar with standard voting rules may safely jump directly to Section~\ref{sec:known_results}. Some additional voting rules are considered in the technical appendix.

\subsubsection{Score-based rules}\label{sec:score_based_rules}

Elect the candidate~$c$ with maximal score $s_f(c, P)$.

\defrule{Plurality}{Plu}{$s_\Plu(c, P) = w(P^{r(c) = 1})$.}

\defrule{Borda}{Bor}{$s_\Bor(c, P) = \sum_{k=1}^{m(P)} (m(P) - k)\,w(P^{r(c) = k})$.}

\defrule{Veto}{Vet}{$s_\Vet(c, P) = -\,w(P^{r(c) = m(P)})$.}

\defrule{Maximin}{Max}{$s_\Max(c, P) = \min_{d \in \mathcal{C}(P) \setminus \{c\}} W(c, d, P)$.}

\defrule{Copeland}{Cop}{$s_\Cop(c,P)=|\{d\in\mathcal{C}(P):M(c,d,P)=1\}|
+ \alpha\,|\{d\in\mathcal{C}(P)\setminus\{c\}:M(c,d,P)=\tfrac{1}{2}\}|$, 
with $\alpha\in[0,1]$. Our theoretical results in the main body do not depend on $\alpha$; in simulations, we set $\alpha=\tfrac{1}{2}$.}

\defrule{Bucklin}{Buc}{$s_\Buc(c,P) = (-\mu(c,P),\, w(P^{r(c)\leq \mu(c,P)}))$, 
where $\mu(c,P)$ is the median rank of~$c$. Scores are compared lexicographically.}

\subsubsection{Penalty-based rules (candidate level)}\label{sec:def_dodgson_young}
Elect the candidate~$c$ with minimal penalty $p_f(c,P)$.

\defrule{Young}{You}{$p_\You(c,P)$ is the minimal number of voters to remove so that $c$ becomes the Condorcet winner. If this is impossible, $p_\You(c,P)=n(P)+1$.}

\defrule{Dodgson}{Dod}{$p_\Dod(c,P)$ is the minimal number of adjacent swaps in voters’ rankings required to make $c$ the Condorcet winner.}

\defrule{Simplified Dodgson}{SD}{$p_\SD(c,P)=\sum_{d\in\mathcal{C}(P)\setminus\{c\}}\max(0, \lfloor\tfrac{n(P)}{2}\rfloor+1-W(c,d,P))$.  
This corresponds to the minimal number of pairwise comparisons by individual voters that must be changed for $c$ to become a Condorcet winner, without requiring ballots to remain transitive as in the Dodgson rule.\footnote{We adopt the terminology of \citet{tideman2006collective}, although the two rules coincide only when the number of voters is odd; in general, our definition matches the rule~$V$ of \citet{caragiannis2014socially}. Our results apply to both.} We include this rule mainly to provide insight into the behavior of the Dodgson rule (Section~\ref{sec:sscw}).}

\subsubsection{Penalty-based rules (ranking level)}\label{sec:def_kemeny_slater}
Find a ranking with minimal penalty $p_f(r,P)$, and elect its top candidate.

\defrule{Kemeny}{Kem}{$p_\Kem(r,P)=\sum_{(c,d)\in\mathcal{C}(P)^2:\,c\succ_r d}W(d,c,P)$.}

\defrule{Slater}{Sla}{$p_\Sla(r,P)=\sum_{(c,d)\in\mathcal{C}(P)^2:\,c\succ_r d}M(d,c,P)$.}

\subsubsection{Elimination rules}

Eliminate one or more candidates in successive rounds until a single winner remains.

\defrule{IRV, Baldwin, Coombs}{IRV, Bal, Coo}{Iteratively eliminate the candidate with the lowest plurality, Borda, or veto score, respectively.}

\defrule{Nanson, Kim--Roush}{Nan, KR}{At each round, eliminate all candidates whose Borda or veto score is below the average, respectively.\footnote{For Kim--Roush, the original definition uses ``strictly below'' as here~\cite{kim1996manipulability}, whereas for Nanson it is ``at most equal''~\cite{niou1987note}. This distinction does not affect our results.}\footnote{We do not include the plurality-based analogue, \emph{IRV-Average}, since it exhibits essentially the same behavior as IRV with respect to coalitional manipulation~\cite{durand2023coalitional,durand2025irv}.}}

\defrule{Plurality with runoff}{PR}{Keep the top two candidates by plurality, then elect the one winning their pairwise contest.}

\subsubsection{Other Condorcet rules}\label{sec:other_condorcet_rules}

This last category gathers standard Condorcet-consistent rules, complementing those already mentioned: Maximin, Copeland, Young, Dodgson, Kemeny, and Slater.

\defrule{Black}{Bla}{Elect the Condorcet winner if one exists; otherwise elect the Borda winner.}

\defrule{Ranked Pairs}{RP}{Order all pairs of candidates by decreasing $W(c,d,P)$.  
Iteratively lock each pair into a directed graph unless this creates a cycle.  
Elect the unique source of the final graph \cite{tideman1987clones}.}

\defrule{Schulze}{Sch}{Define the \emph{strength} of $c$ over $d$ as the width of the widest path from $c$ to $d$ in the weighted majority graph induced by~$W(P)$:
\[
\Strength(c,d,P) = \max_{\text{paths $p$ from $c$ to $d$}} \;\min_{(i,j)\in p} W(i,j,P).
\]
Elect a candidate $c$ such that $\Strength(c,d,P) \geq \Strength(d,c,P)$ for all $d$ \cite{schulze2011method}.}

\subsection{Known Results: Phase Transitions in the Perturbed Culture}\label{sec:known_results}

The Perturbed Culture model was introduced by \citet{williamson1967social} as a conceptual
tool to assess the practical relevance of the Condorcet paradox.
Under Impartial Culture, the probability that no Condorcet winner exists converges to a
positive limit as the electorate grows, suggesting a non-negligible prevalence of this
phenomenon in practice.
By contrast, under the Perturbed Culture, as soon as the concentration parameter~$\theta$
is strictly positive, however small, this probability converges to~$0$, indicating that the Condorcet paradox may be exceptional.
Although both models are highly stylized and make no claim of realism, this latter
prediction turns out to be more consistent with empirical observations, as Condorcet winners
are extremely frequent in real-world datasets \cite{tideman2006collective,durand2023coalitional}.

\citet{durand2025irv} uses the Perturbed Culture model to study coalitional
manipulability for three voting rules: Plurality, Plurality with Runoff, and IRV.
In the case of Plurality, for $m\geq 2$, it is shown that there exists a \emph{critical concentration parameter} $\theta_c(\Plu,m)=\frac{m-2}{3m-2}$ such that
\begin{itemize}
\item if $\theta<\theta_c(\Plu,m)$, then $\lim_{n\to\infty}\rho(\Plu,m,n,\theta)=1$;
\item if $\theta>\theta_c(\Plu,m)$, then $\lim_{n\to\infty}\rho(\Plu,m,n,\theta)=0$.
\end{itemize}
This result exhibits a \emph{phase transition}, that is, an abrupt change in the
asymptotic behavior of the CM rate when the concentration parameter~$\theta$ crosses the
critical threshold. 
This phenomenon is illustrated in Figure~\ref{fig:cm_rate_of_theta_and_n_v_plu}, which displays the CM rate of Plurality as a function of~$\theta$ for increasing values of~$n$.
As $n$ grows, the curve takes on a sigmoidal shape and converges to a step function.
Moreover, the convergence is shown to be exponentially fast, implying that the
asymptotic behavior predicted by the theorem quickly becomes relevant as the electorate grows.
For Plurality with Runoff, similar results hold, but with a smaller critical
parameter, $\theta_c(\PR,m)=\frac{m-3}{5m-3}$ for $m\geq 3$.

\begin{figure}
	\centering
	\input{cm_rate_of_theta_and_n_v_plu.tex}
	\caption{CM rate of Plurality as a function of \( \theta \) for different values of \( n \) with \( m = 4 \). Curves for finite \( n \) are based on Monte Carlo simulations with 1,000,000 profiles per point. The limiting curve as \( n \to \infty \) is given by theory. Source: \citet{durand2025irv}.}
	\label{fig:cm_rate_of_theta_and_n_v_plu}
\end{figure}

To address the case of IRV, \citet{durand2025irv} introduces the notion of a
\emph{Super Condorcet Winner} (SCW), defined as a candidate~$c$ such that, for every
subset $K\subseteq\mathcal{C}(P)$ containing~$c$ with $|K|\geq 2$,
\begin{equation}\label{eq:def_scw}
s_\Plu(c,P_K)>\frac{w(P)}{|K|},
\end{equation}
that is, $c$ has a plurality score strictly above average in all these restricted profiles.
Whenever such a candidate exists, it is elected by IRV and the profile is immune to coalitional manipulation.
As in the case of Condorcet winners, for any concentration parameter~$\theta>0$, an SCW
exists with high probability, namely with probability tending to~$1$ as the electorate
grows, implying that IRV is immune to coalitional manipulation.
As a consequence, the critical concentration parameter for IRV is
$\theta_c(\IRV,m)=0$.
Moreover, although the existence of an SCW is only a sufficient condition, it is also observed in empirical datasets that it explains most cases in which IRV is immune to coalitional manipulation.

In all cases, the proof strategy follows the same general pattern.
One first analyzes the expected profile, which is susceptible to coalitional manipulation when the concentration parameter~$\theta$ is small enough and immune to coalitional manipulation otherwise.
This property is then shown to be stable in a neighborhood of the expected profile, with technical subtleties arising in some cases from the need to control stability both around the sincere profile and around potential manipulated profiles.
Finally, the result is extended to discrete profiles in the limit $n\to\infty$ using the weak law of large numbers.
Altogether, this approach shows that a simple and mathematically tractable model suffices to explain qualitative phenomena such as the low vulnerability of IRV observed in empirical data.

The paper also shows that a deliberately constructed voting rule may fail to exhibit a phase transition with a well-defined critical concentration parameter $\theta_c(f,m)$.
Nevertheless, it is always possible to define a \emph{lower critical concentration parameter} $\theta_{\ell}(f,m) \in [0, 1]$ and an \emph{upper critical concentration parameter} $\theta_u(f,m) \in [0, 1]$ as the largest and smallest values, respectively, such that the following holds for every $\theta \in (0, 1]$:
\begin{itemize}
\item if $\theta<\theta_\ell(f,m)$, then $\lim_{n\to\infty}\rho(f,m,n,\theta)=1$;
\item if $\theta>\theta_u(f,m)$, then $\lim_{n\to\infty}\rho(f,m,n,\theta)=0$.
\end{itemize}
The critical concentration parameter $\theta_c(f, m)$ is then simply defined as their common value when it exists.

\section{Theoretical Results}\label{sec:theoretical_results}

\citet{durand2025irv} established the existence of a phase transition for three voting rules, while observing that one can deliberately construct exotic voting rules to serve as counterexamples.
From a theoretical perspective, our main objective is to continue this research program by establishing the existence of a phase transition for each voting rule studied in this paper, and by identifying the corresponding critical concentration parameter.

In the same spirit as the IRV analysis based on the Super Condorcet Winner (SCW), we rely on several strengthened notions of the Condorcet winner to capture the behavior of other rules.
We introduce the \emph{Pair-Safe Condorcet Winner} (PSCW) in Section~\ref{sec:pscw} to analyze Maximin, Ranked Pairs, Schulze, and Young.
We then define the \emph{Set-Safe Condorcet Winner} (SSCW) in Section~\ref{sec:sscw} for Baldwin, Nanson, Kemeny, and Simplified Dodgson.
In Section~\ref{sec:rcw}, we rely on the existing notion of a \emph{Resistant Condorcet Winner} (RCW)~\cite{durand2016condorcet} to study Black, Slater, and Copeland.
These notions form a hierarchy of implications:
\[
\text{RCW} \;\Rightarrow\; \text{SSCW} \;\Rightarrow\; \text{PSCW} \;\Rightarrow\; \text{CW},
\]
whereas the Super Condorcet Winner (SCW) only implies CW and is logically independent of the others.
When dealing with a Condorcet notion, we will also use the notation $\theta_c(\cdot, m)$
to denote the critical concentration parameter above which such a winner exists with high probability, and below which it does not.
For example, $\theta_c(\textrm{SCW}, m) = 0$.

Finally, in Section~\ref{sec:main_theorem}, we briefly discuss the remaining rules and present our main theorem, which shows that every voting rule considered in this paper undergoes a phase transition and characterizes the associated critical concentration parameter.

For the sake of clarity and concision, some proofs are deferred to the technical appendix.

\subsection{Pair-Safe Condorcet Winner (PSCW)}\label{sec:pscw}

To build intuition, we start with Maximin.
Let $c$ be the winner, and consider a manipulation attempt in favor of some candidate~$d$.
Manipulators cannot increase the pairwise score of $d$ against $c$, which is $w(P^{d \succ c})$.
Therefore, for the manipulation to succeed, they must reduce the score of $c$ against some candidate~$e$ to at most this value.  
Note that if $c$ is not a Condorcet winner, then candidate $e$ may coincide with $d$; otherwise, $e$ must be a third candidate.
In the extreme case where all manipulators rank $e$ above $c$, the score of $c$ against $e$ drops to $w(P^{c \succ d \text{ and } c \succ e})$, which comes only from sincere voters preferring $c$ to $e$.  
Manipulation therefore requires
\(
w(P^{c \succ d \text{ and } c \succ e}) \leq w(P^{d \succ c}).
\)
The negation of this inequality yields a sufficient condition under which Maximin is immune to coalitional manipulation, motivating the following definition.

\begin{definition}
A candidate $c$ is a \emph{Pair-Safe Condorcet Winner} (PSCW) if, for every pair of other candidates $(d,e)$ (not necessarily distinct),
\begin{equation}\label{eq:def_pscw}
w(P^{c \succ d \text{ and } c \succ e}) > w(P^{d \succ c}).
\end{equation}
Intuitively, for any opponent~$d$, manipulators supporting~$d$ cannot make the pairwise contest of $c$ against some candidate~$e$ appear as unfavorable as the comparison of $d$ against~$c$.
\end{definition}

Taking $d=e$ in Equation~\eqref{eq:def_pscw} shows that every PSCW is a Condorcet winner.  
The converse does not hold: a candidate may be a Condorcet winner without being a PSCW, as illustrated in Table~\ref{tab:ex_scw_no_pscw_rules_not_CM}. 
This profile will be reused throughout the paper; the technical appendix collects the claims it supports for easy reference.

\newcommand{\tabExScwNoPscwRulesNotCM}[2]{% Argument: #1 for \label, #2 main/app
\begin{table}[ht]
\centering
\ifthenelse{\equal{#2}{main}}{%
  \caption{A profile where candidate~$A$ is the CW, but not a PSCW (see Appendix~\ref{sec:appendix_analysis_table_1}).}%
}{%
  \caption{A profile where (1) $A$ is the SCW; (2) $A$ is not a PSCW; (3) \Max, \RP, \Sch, \SC, \Vie, \You, \Bal, \Dod, \Kem, \Nan, and \SD\ are immune to CM. Adding a candidate $A'$ below $A$ in all rankings yields the same result for \Cop\ and \Sla.}%
}
\label{#1}
\(\begin{array}{ccc}
	\hline
	5 & 4 & 2 \\ \hline
	B & A & C \\
	A & C & A \\
	C & B & B \\ \hline
\end{array}\)
% Cf. draft Young p. 16
\end{table}%
}
\tabExScwNoPscwRulesNotCM{tab:ex_scw_no_pscw_rules_not_CM}{main}

To extend our analysis beyond Maximin, we introduce the notion of a \emph{Maximin-like} voting rule, which clearly includes Ranked Pairs and Schulze.

\begin{definition}
A rule $f$ is \emph{Maximin-like} if for every profile $P$ and pair of candidates $(c,d)$,
\[
\min_{e \in \mathcal{C}(P)\setminus\{c\}} W(c,e,P) > W(d,c,P)
\;\;\;\Rightarrow\;\;\;
f(P)\neq d.
\]
In words, if the pairwise score of $c$ against every opponent exceeds the score of $d$ against $c$, then $d$ cannot win.
\end{definition}

\begin{restatable}{proposition}{ThmMaximinLikeWhichRules}
\label{thm:maximin_like_which_rules}
Maximin, Ranked Pairs, and Schulze are Maximin-like voting rules.
\end{restatable}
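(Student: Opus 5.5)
Throughout, fix a profile $P$ and candidates $c\neq d$. Let $\mu=\min_{e\neq c}W(c,e,P)$, and assume the Maximin-like hypothesis $\mu>W(d,c,P)$. For each of the three rules we show that $d$ cannot be elected, whatever the tie-breaking.

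\emph{Maximin.} This case is immediate. We have $s_\Max(c,P)=\mu$. Also $s_\Max(d,P)\le W(d,c,P)<\mu$, since $c$ is one of $d$'s opponents. So $d$ has a strictly smaller score than $c$ and cannot be a maximizer, even after tie-breaking.

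\emph{Schulze.} For any path from $c$ to $d$, its first edge leaves $c$ and so has weight at least $\mu$. The direct edge $(c,d)$ is itself such a path, with weight $W(c,d,P)\ge\mu$. Hence $\Strength(c,d,P)\ge\mu$.

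Conversely, every path from $d$ to $c$ ends with some edge $(e,c)$. That edge has weight $W(e,c,P)$, and we want to bound this by $W(d,c,P)$.

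This is the delicate point: a last edge $(e,c)$ with $e\neq d$ is not directly controlled by the hypothesis. The fix is to use the constant total weight. Since $W(e,c,P)+W(c,e,P)=w(P)$ and $W(c,e,P)\ge\mu$, we get $W(e,c,P)\le w(P)-\mu$. Similarly, $\mu\le W(c,d,P)=w(P)-W(d,c,P)$.

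This does not immediately give a strict comparison with $\mu$. What is needed is $w(P)-\mu<\mu$, i.e.\ $\mu>w(P)/2$. That holds because $W(d,c,P)<\mu\le W(c,d,P)$ gives $W(d,c,P)<w(P)/2$, hence $W(c,d,P)>w(P)/2$. But it only controls the pair $(c,d)$, not every pair $(c,e)$.

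So instead we compare $\Strength(d,c,P)$ with $\Strength(c,d,P)$ directly. Take any path $d=x_0,\dots,x_k=c$, and suppose its minimum edge weight is at least $\Strength(c,d,P)\ge\mu$.
\begin{itemize}
\item If $k=1$, the path is the single edge $(d,c)$, of weight $W(d,c,P)<\mu$, a contradiction.
\item If $k\ge 2$, the last edge $(x_{k-1},c)$ has weight at least $\mu$. Then $W(c,x_{k-1},P)\le w(P)-\mu$. Combined with $W(c,x_{k-1},P)\ge\mu$, this forces $\mu\le w(P)/2$. Since $W(d,c,P)<\mu$ gives $W(c,d,P)>w(P)-\mu$, the contradiction needs further care.
\end{itemize}
The cleaner route avoids complements altogether. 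Note that $\Strength(c,d,P)\ge\mu$, while $\Strength(d,c,P)$ is at most the weight of the last edge into $c$. If that last edge $(e,c)$ had weight at least $\mu$, we would need a contradiction. I expect this step to be the main obstacle, and I would resolve it by splitting into the homogeneous-majority case versus the case of ties. The hope is that, since $d$ must satisfy $\Strength(d,x)\ge\Strength(x,d)$ for all $x$, it suffices to exhibit a single $x$ where this fails. One option is to take $x=e$ for the relevant $e$ and use the widest-path property on the path through $c$. The other is to use the transitivity of the strength relation, so that $d$ Schulze-wins only if it beats $c$.

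\emph{Ranked Pairs.} Every pair $(c,e)$ has weight at least $\mu>W(d,c,P)$, so all edges out of $c$ precede the edge $(d,c)$ in the processing order. First, each such edge $(c,e)$ is locked unless a path $e\to\dots\to c$ is already locked, made of edges of weight at least $\mu$.

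Second, even if some $(c,e)$ is skipped, $d$ could be the source only if the pair $\{c,d\}$ ends up oriented as $d\to c$ in the final graph. We argue that $(c,d)$, whose weight is at least $\mu$, is processed before $(d,c)$. It can be skipped only because of a locked path $d\to\dots\to c$ using edges of weight at least $W(c,d,P)\ge\mu$. Such a path ends with an edge $(e,c)$ of weight at least $\mu$, and this is the same obstacle as in the Schulze case.

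Third, the argument can be closed by induction on the processing order. Since $(c,e)$, of weight at least $\mu$, is considered before or at the same level as $(e,c)$, we compare the two weights, and in the worst case both exceed $\mu$. So I would follow the standard argument that Ranked Pairs and Schulze winners are never ``beaten strongly'': $d$ wins only if there is a path from $d$ to $c$ whose minimum is at least $W(c,d,P)$.
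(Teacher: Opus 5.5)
Your Maximin case is correct, but the Schulze and Ranked Pairs cases are never closed. In both you correctly reduce the problem to excluding a path $d=x_0,\dots,x_k=c$ whose edges all weigh at least $W(c,d,P)$:
\begin{itemize}
\item for Schulze, a path realizing $\Strength(d,c,P)\ge\Strength(c,d,P)\ge W(c,d,P)$;
\item for Ranked Pairs, a locked path that would cause $(c,d)$ to be skipped.
\end{itemize}
Each time, however, you then weaken this lower bound to $\mu$. Knowing only that the last edge $(x_{k-1},c)$ weighs at least $\mu$ gives no contradiction unless $\mu>w(P)/2$, i.e.\ unless $c$ is a Condorcet winner. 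The hypothesis does not provide this. The profile of Table~\ref{tab:ex_you_is_not_maximin_like} is such a case: $A$ is not a Condorcet winner, yet $\min_e W(A,e,P)=78>72=W(B,A,P)$. So your diagnosis that ``what is needed is $w(P)-\mu<\mu$'' is mistaken. The fallback ideas (splitting on ties, transitivity of strength, induction on the processing order) are not carried out, so two of the three rules remain unproved.

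The missing step is one line that combines two facts you already wrote down. For every $e\neq c$,
\[
W(e,c,P)=w(P)-W(c,e,P)\le w(P)-\mu<w(P)-W(d,c,P)=W(c,d,P),
\]
so every edge entering $c$ is strictly lighter than the direct edge $(c,d)$. For Schulze this gives $\Strength(d,c,P)\le\max_{e\neq c}W(e,c,P)<W(c,d,P)\le\Strength(c,d,P)$. Hence $d$ is not in the Schulze winning set, whatever the tie-breaking. For Ranked Pairs, every edge into $c$ is processed strictly after $(c,d)$, however equal weights are ordered. So when $(c,d)$ is considered, no locked edge enters $c$, and locking $(c,d)$ cannot close a cycle. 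Thus $(c,d)$ is locked and $d$ is not a source. Equivalently, in your bullet with $k\ge 2$: $W(x_{k-1},c,P)\ge W(c,d,P)$ forces $W(c,x_{k-1},P)\le W(d,c,P)<\mu$, contradicting the definition of $\mu$. The paper treats these verifications as straightforward, and with this line they are.
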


By the same reasoning as for Maximin, we obtain the following result.
\begin{restatable}{theorem}{ThmMaximinLikeRulesProperty}
\label{thm:maximin_like_rules_property}
Let $f$ be a Maximin-like voting rule.  
If a candidate $c$ is the PSCW of a profile $P$, then $f(P)=c$ and $P$ is non-CM.
\end{restatable}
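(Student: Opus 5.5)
We prove the two claims in turn, using only the definition of a Maximin-like rule and inequality~\eqref{eq:def_pscw}.

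\textbf{Step 1: $f(P)=c$.} Suppose $c$ is the PSCW of $P$, and fix any $d\neq c$. We want to apply the Maximin-like implication with the pair $(c,d)$, so we must check that $\min_{e\neq c} W(c,e,P) > W(d,c,P)$. Fix $e\neq c$. Every voter who prefers $c$ to both $d$ and $e$ in particular prefers $c$ to $e$, so
\[
W(c,e,P)=w(P^{c\succ e}) \geq w(P^{c \succ d \text{ and } c \succ e}) > w(P^{d\succ c}) = W(d,c,P),
\]
where the strict inequality is~\eqref{eq:def_pscw}. Taking the minimum over $e$, the Maximin-like property gives $f(P)\neq d$. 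Since $d$ was arbitrary, $f(P)=c$.

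\textbf{Step 2: non-manipulability.} Suppose a target profile $Q$ has $f(Q)=d\neq c$, with the same voters (or the same total weight in the continuous case). Call a voter, or a quantity of weight, \emph{sincere} if its ballot is left unchanged. Any voter who changes ballot must prefer $d$ to $c$, since $c=f(P)$. Hence every voter with $c\succ d$ in $P$ is sincere. In the continuous case, $w(r,Q)\geq w(r,P)$ for every ranking $r$ placing $c$ above $d$.

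\textbf{Step 3: bounds in $Q$.} We compare the pairwise scores in $Q$ with those in $P$.
\begin{itemize}
\item \emph{Lower bound on $W(c,e,Q)$.} For any $e\neq c$, all voters of $P^{c \succ d \text{ and } c \succ e}$ keep their ballots, so $W(c,e,Q)\geq w(P^{c \succ d \text{ and } c \succ e})$.
\item \emph{Upper bound on $W(d,c,Q)$.} Voters in $P^{c\succ d}$ are sincere and still rank $c$ above $d$ in $Q$. So the weight preferring $d$ to $c$ in $Q$ is at most the total weight minus $w(P^{c\succ d})$, which equals $w(P^{d\succ c})$.
\end{itemize}
In the continuous case the same conclusions follow ranking by ranking: rankings with $c\succ d$ do not lose weight, and the total weight is unchanged.

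\textbf{Step 4: contradiction.} Combining the two bounds with~\eqref{eq:def_pscw}, for every $e\neq c$,
\[
W(c,e,Q) \geq w(P^{c \succ d \text{ and } c \succ e}) > w(P^{d\succ c}) \geq W(d,c,Q).
\]
So $\min_{e\neq c} W(c,e,Q) > W(d,c,Q)$, and the Maximin-like property gives $f(Q)\neq d$. This contradicts the choice of $Q$, so $P$ is non-CM.

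The only delicate point is the upper bound on $W(d,c,Q)$. Manipulators may move $d$ above $c$ only if they already preferred $d$ to $c$, so this pairwise score cannot grow. In the continuous setting this must be argued through weight conservation rather than by tracking individual voters.
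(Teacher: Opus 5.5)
Your proof is correct and follows the paper's own argument. The paper obtains the theorem ``by the same reasoning as for Maximin'': manipulators for $d$ cannot push $W(d,c,\cdot)$ above $w(P^{d \succ c})$, while sincere voters keep $W(c,e,\cdot)\ge w(P^{c \succ d \text{ and } c \succ e})$, and the Maximin-like property then excludes $d$ — exactly your Steps 3--4, with your explicit treatment of $f(P)=c$ and of the continuous case filling in details the paper leaves implicit.
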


Young fits into this picture as a particular case.

\begin{restatable}{proposition}{ThmYoungNotMaximinLike}
\label{thm:young_not_maximin_like}
The Young rule is not Maximin-like.
However, it satisfies the conclusion of Theorem~\ref{thm:maximin_like_rules_property}:
if $c$ is the PSCW of a profile $P$, then $\You(P)=c$ and $P$ is non-CM.
\end{restatable}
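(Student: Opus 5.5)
The statement has two parts, and I would handle them separately. For the negative part, I would build an explicit small discrete profile~$P$ with three (or four) candidates and a pair $(c,d)$ such that $\min_{e\neq c} W(c,e,P) > W(d,c,P)$ and yet $\You(P)=d$. In such a profile $c$ beats $d$, so $d$ is not a Condorcet winner. Since $\You$ is Condorcet-consistent, $c$ cannot be a Condorcet winner either, so there is no Condorcet winner and the profile contains a cycle.

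The idea is to exploit the integer, non-homogeneous nature of the Young penalty.
\begin{itemize}
\item The margin of $c$ over $d$ should be small, so that $d$ becomes the Condorcet winner after removing very few voters who rank $c\succ d$. These should preferably be voters who also rank $d$ low elsewhere, so the removal does not destroy $d$'s other victories.
\item Candidate $c$ should be beaten by some $e$ by a moderate margin. Its maximin score stays high because all its pairwise scores exceed the small $W(d,c,P)$, but its Young penalty is larger than $d$'s.
\end{itemize}
I would search with a cycle $c\succ d\succ e\succ c$ plus a few filler voters. I would then verify the three Young penalties and the maximin inequality by direct computation and record the profile as a table.

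For the positive part, let $c$ be the PSCW of~$P$. Taking $d=e$ in~\eqref{eq:def_pscw} shows that $c$ is the Condorcet winner. Hence $p_\You(c,P)=0$, while every other candidate has $p_\You\geq 1$, so $\You(P)=c$ regardless of tie-breaking. For non-manipulability, suppose a target profile $Q$ elects $d\neq c$.
\begin{itemize}
\item Every voter who changes their ballot prefers $d$ to~$c$. So the voters of $P^{c\succ d}$ are sincere in~$Q$, and the set of voters who are either manipulators or sincerely rank $d\succ c$ has size $x:=w(P^{d\succ c})$.
\item Lower bound for $d$: to make $d$ beat $c$, at least $W(c,d,Q)-W(d,c,Q)$ voters ranking $c\succ d$ must be removed. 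This gives $p_\You(d,Q)\geq w(P^{c\succ d})-x$, up to a $+1$ or parity adjustment.
\item Upper bound for $c$: I would exhibit a removal set making $c$ the Condorcet winner in~$Q$ of size strictly below this lower bound. I would try the smallest of the following sets, for each $e\neq c$: the voters of $Q$ ranking $e\succ c$, minus enough of them to keep the margin positive. I would use that $W(c,e,Q)\geq w(P^{c\succ d \text{ and } c\succ e}) > x \geq W(e,c,Q)-(\text{sincere }e\succ c\text{ voters})$.
\end{itemize}
The goal is the strict inequality $p_\You(c,Q)<p_\You(d,Q)$, which gives the contradiction $\You(Q)\neq d$.

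I expect this comparison of the two penalties in~$Q$ to be the main obstacle. Young penalties are not additive across opponents: a single removed voter can help $c$ against several $e$ simultaneously, and the removals for different $e$ must be combined. If the direct bound fails, my first fallback is to argue via the set $S$ of all voters in $P^{d\succ c}$. Removing $S$ costs at most~$x$ and leaves only sincere voters of $P^{c\succ d}$, among whom, by~\eqref{eq:def_pscw}, $c$ should beat each $e$. I would then check that $x<w(P^{c\succ d})-x$, which follows from~\eqref{eq:def_pscw} with $e=d$ only if the PSCW margin is large enough. If not, I would refine the removal set to only the manipulators.
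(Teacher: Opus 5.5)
Your argument that $\You(P)=c$ is correct, but the non-manipulability part stalls exactly at the upper bound on $p_\You(c,Q)$, and neither route you sketch closes it.

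\textbf{Upper bound on $p_\You(c,Q)$.} A removal set that repairs $c$'s contest against one $e$ does not make $c$ a Condorcet winner, so taking the smallest of the per-opponent sets proves nothing. Removing all of $P^{d\succ c}$ only gives $p_\You(c,Q)\le x$. That beats $p_\You(d,Q)\ge w(P^{c\succ d})-x+1$ only if $w(P^{c\succ d})\ge 2x$, which fails for many PSCWs: in the expected Perturbed Culture profile it fails for every $\theta\in(\tfrac17,\tfrac13)$. Restricting to manipulators changes nothing, since every voter of $P^{d\succ c}$ may manipulate.

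The missing step is to keep \emph{all} sincere voters and then re-admit as many of the other $x$ voters as the sincere margins allow. Write $s=w(P^{c\succ d})$ and $a_e=w(P^{c\succ d\text{ and }c\succ e})$. Among sincere voters, $c$ beats $d$ by $s$ and each $e$ by $2a_e-s>0$ (PSCW for $(e,d)$). A re-admitted voter, whatever its ballot in $Q$, lowers each of these margins by at most one. This handles all opponents at once, so the non-additivity you worried about disappears, and it gives $p_\You(c,Q)\le\max\bigl(0,\,x+s+1-2\min_{e} a_e\bigr)$. This is strictly below $s-x+1$ exactly because $a_e>x$, i.e.\ PSCW for the pairs $(d,e)$. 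You wrote this inequality down but never used it. The paper expresses the same count through the Young score: after normalizing manipulators to rank $d$ first and $c$ last, it shows $s_\You(c,Q)=2\min_{e\notin\{c,d\}} w(Q^{c\succ e})-1>2\,w(Q^{d\succ c})-1\ge s_\You(d,Q)$.

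\textbf{The counterexample.} The first part is not established either: you give no profile, and your design heuristic contradicts the hypothesis. Since $W(c,e,P)+W(e,c,P)=n(P)$, the condition $W(c,e,P)>W(d,c,P)$ is equivalent to $W(c,d,P)>W(e,c,P)$. In other words, $c$'s margin over $d$ must exceed its margin of defeat against every $e$, so $c$ cannot have a small margin over $d$ while losing to $e$ by a moderate one. Since $p_\You(d,P)$ is at least $c$'s margin over $d$ plus one, a counterexample needs $p_\You(c,P)$ to exceed each of $c$'s individual defeat margins substantially. That requires several opponents of $c$ whose defeats cannot be repaired by shared removals.

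In particular, three candidates never work. By parity, $c$'s margin over $d$ exceeds $W(e,c,P)-W(c,e,P)$ by at least $2$. Deleting any $W(e,c,P)-W(c,e,P)+1$ voters who rank $e\succ c$ therefore makes $c$ the Condorcet winner with fewer deletions than $d$ needs.

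Nor is this an integrality effect: the paper's profile still works after scaling. What matters there is the three symmetric candidates $C_1,C_2,C_3$. No voter ranks all of them above $A$, so each deletion improves the sum of $A$'s three margins by at most one, against a total deficit of $54$. Four candidates can also work, for example when every voter ranks $C_1$ or $C_2$ above $c$ (so $c$ can never be made a Condorcet winner) and $d$ beats both $C_i$ by a wide margin.
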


The first statement is illustrated by Table~\ref{tab:ex_you_is_not_maximin_like}, while the second relies on more subtle arguments and is proved in Appendix~\ref{sec:young_and_pscw}.

\newcommand{\tabExYouIsNotMaximinLike}[2]{% Argument: #1 for \label, #2 main/app
\begin{table}[ht]
\centering
\ifthenelse{\equal{#2}{main}}{%
  \caption{\You\ is not Maximin-like. Although $\min_e W(A,e,P) > W(B,A,P)$, candidate~$B$ is elected. Each column denotes a voter block with uniform permutations of $C_1, C_2, C_3$ (see Appendix~\ref{sec:appendix_analysis_table_2}).}%
}{%
  \caption{\You\ and \Vie\ are not Maximin-like. Although $\min_e W(A,e,P) > W(B,A,P)$, both elect $B$. Each column denotes a voter block with uniform permutations of $C_1, C_2, C_3$.}%
}
\label{#1}% Cf. draft Young p. 3
\(\begin{array}{c c c}
	\hline
	30        & 72        & 72        \\ \hline
	A         & C_\bullet & B         \\
	B         & C_\bullet & C_\bullet \\
	C_\bullet & A         & C_\bullet \\
	C_\bullet & B         & A         \\
	C_\bullet & C_\bullet & C_\bullet \\ \hline
\end{array}\)
\end{table}%
}
\tabExYouIsNotMaximinLike{tab:ex_you_is_not_maximin_like}{main}

For Maximin, Ranked Pairs, Schulze, and Young, the existence of a PSCW is only a sufficient condition for being immune to coalitional manipulation.
Indeed, the profile in Table~\ref{tab:ex_scw_no_pscw_rules_not_CM} admits no PSCW, yet all these rules remain immune to coalitional manipulation; the verification is provided in the technical appendix.

We now examine how the previous results apply to the Perturbed Culture model.

\begin{restatable}{theorem}{ThmThetaCriticalPSCW}
\label{thm:theta_critical_pscw}
In the Perturbed Culture model, the critical concentration parameter for the existence of a Pair-Safe Condorcet Winner is
\[
\theta_c(\PSCW, m) = \frac{1}{7}.
\]
\end{restatable}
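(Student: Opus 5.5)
The plan is to follow the three-step pattern of \citet{durand2025irv}: analyze the expected profile $\hat{P}$, show that the relevant conditions are stable in a neighborhood of $\hat{P}$, then pass to random discrete profiles by the law of large numbers. I assume $m\geq 3$. For $m=2$, condition~\eqref{eq:def_pscw} only involves $d=e$, so PSCW coincides with CW, which exists with high probability for every $\theta>0$; that case would be treated separately.

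\emph{Step 1: the expected profile.} In $\hat{P}$, by symmetry of the uniform part, the only plausible PSCW is candidate~$1$.
\begin{itemize}
\item For any $c\neq 1$, the reference ranking places $1$ above $c$. Hence $W(c,1,\hat{P})=\frac{1-\theta}{2}<\frac12$, so $c$ is not even a Condorcet winner of $\hat{P}$, and therefore not a PSCW.
\item For candidate $1$ with $d=e$, the condition reads $\theta+\frac{1-\theta}{2}>\frac{1-\theta}{2}$, which always holds.
\item For $d\neq e$ (possible because $m\geq3$), a uniformly random ranking puts $1$ above both $d$ and $e$ with probability $\frac13$, and puts $d$ above $1$ with probability $\frac12$. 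Condition~\eqref{eq:def_pscw} becomes
\[
\theta+\frac{1-\theta}{3}>\frac{1-\theta}{2},
\]
which is equivalent to $7\theta>1$.
\end{itemize}
So candidate $1$ is a PSCW of $\hat{P}$ if and only if $\theta>\frac17$. When $\theta<\frac17$, the inequality is violated strictly, since the left side is strictly smaller than the right side.

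\emph{Step 2: stability.} Every quantity involved, namely $w(P^{c\succ d\text{ and }c\succ e})$ and $w(P^{d\succ c})$, is a linear function of the weights of the normalized profile $\bar{P}$. Each conclusion of Step~1 is a finite conjunction or disjunction of \emph{strict} linear inequalities, so it persists on an open neighborhood $U$ of $\hat{P}$ in the simplex of normalized profiles. Concretely, consider continuous profiles in $U$.
\begin{itemize}
\item If $\theta>\frac17$, candidate $1$ satisfies all the inequalities~\eqref{eq:def_pscw}, hence is a PSCW.
\item If $\theta<\frac17$, two things hold at once. Each $c\neq1$ satisfies $W(c,1,\cdot)<W(1,c,\cdot)$, so it is not a CW and hence not a PSCW. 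Candidate $1$ violates~\eqref{eq:def_pscw} for some fixed pair $(d,e)$ with $d\neq e$, for instance $(2,3)$.
\end{itemize}
So no PSCW exists on $U$ in the second case. PSCW existence depends only on $\bar{P}$, because both sides of~\eqref{eq:def_pscw} scale with $w(P)$.

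\emph{Step 3: discrete profiles.} The normalized random profile $\bar{P}$ is the vector of empirical frequencies of $n$ i.i.d.\ draws whose mean is $\hat{P}$. By the weak law of large numbers, $\Pr(\bar{P}\in U)\to1$ as $n\to\infty$, and Hoeffding's inequality even gives exponential convergence. Hence a PSCW exists with probability tending to $1$ when $\theta>\frac17$, and with probability tending to $0$ when $\theta<\frac17$. This gives $\theta_c(\PSCW,m)=\frac17$, independently of $m\geq3$.

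I expect no serious obstacle here. The only point needing care is the non-existence direction: one must rule out \emph{every} candidate, not just candidate $1$, and check that all the relevant inequalities in $\hat{P}$ are strict so the neighborhood argument applies. The boundary value $\theta=\frac17$ is not claimed by the statement.
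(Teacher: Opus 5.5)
Your proof is correct and follows the same route as the paper. Like the paper, you compare $w(\hat{P}^{1\succ d \text{ and } 1\succ e})=\theta+\tfrac13(1-\theta)$ with $w(\hat{P}^{d\succ 1})=\tfrac12(1-\theta)$ in the expected profile, extend the strict inequality to a neighborhood, and conclude with the weak law of large numbers; your explicit check that no $c\neq 1$ can be a PSCW near $\hat{P}$, and your remark that the statement implicitly needs $m\geq 3$, fill in details the paper leaves implicit.
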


The proof, which is given in Appendix~\ref{sec:critical_thetas_rule_related_to_pscw}, proceeds by determining whether a PSCW exists in a neighborhood of the expected profile, and then applying the weak law of large numbers.
A direct consequence is the following.

\begin{restatable}{corollary}{ThmThetaUIfProtectedByPSCW}
\label{thm:theta_u_if_protected_by_pscw}
Let $f$ be a voting rule that is non-CM whenever a PSCW exists.
Then 
\[
\theta_u(f, m) \le \frac{1}{7}.
\]
\end{restatable}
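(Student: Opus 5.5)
The plan is to derive the corollary directly from Theorem~\ref{thm:theta_critical_pscw} and the definition of the upper critical concentration parameter $\theta_u(f,m)$. By definition, $\theta_u(f,m)$ is the smallest value such that $\lim_{n\to\infty}\rho(f,m,n,\theta)=0$ for every $\theta>\theta_u(f,m)$. So it suffices to show the following: for every $\theta\in(\frac{1}{7},1]$, the CM rate $\rho(f,m,n,\theta)$ tends to~$0$ as $n\to\infty$. Minimality of $\theta_u(f,m)$ then forces $\theta_u(f,m)\le\frac17$.

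Fix $\theta>\frac17$. Theorem~\ref{thm:theta_critical_pscw} states that $\theta_c(\PSCW,m)=\frac17$, i.e., above this threshold a PSCW exists with probability tending to~$1$ as $n\to\infty$. Hence the probability $\pi_n$ that the random profile $P$ drawn from the Perturbed Culture with parameters $m,n,\theta$ admits a PSCW satisfies $\pi_n\to1$. By hypothesis on~$f$, every profile admitting a PSCW is non-CM. The event ``$P$ is CM in~$f$'' is therefore contained in the event ``$P$ has no PSCW''. This gives
\[
\rho(f,m,n,\theta)\le 1-\pi_n \xrightarrow[n\to\infty]{} 0 .
\]
Since $\theta>\frac17$ was arbitrary, the required property holds on $(\frac17,1]$, and the conclusion follows. (For Maximin-like rules and Young, the hypothesis is supplied by Theorem~\ref{thm:maximin_like_rules_property} and Proposition~\ref{thm:young_not_maximin_like}.)

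There is no substantive obstacle here; the only point needing care is interpretive. We must read ``critical concentration parameter for the existence of a PSCW'' in Theorem~\ref{thm:theta_critical_pscw} as giving existence with probability tending to~$1$ for each $\theta>\frac17$, for every fixed $m$ where the theorem applies. We must also note that the bound concerns $\theta_u$ only: no lower bound on $\theta_\ell(f,m)$ follows, because the absence of a PSCW below $\frac17$ does not imply manipulability.
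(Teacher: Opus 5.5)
Your proposal is correct and follows the paper's own route. The paper presents the corollary as an immediate consequence of Theorem~\ref{thm:theta_critical_pscw}: for every $\theta > \frac{1}{7}$ a PSCW exists with high probability, so the CM event has vanishing probability, $\rho(f,m,n,\theta)\to 0$, and the definition of $\theta_u$ gives $\theta_u(f,m)\le\frac{1}{7}$.
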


This applies in particular to Maximin, Ranked Pairs, Schulze, and Young.
In Appendix~\ref{sec:critical_thetas_rule_related_to_pscw}, we further show that, conversely, for $\theta < \tfrac{1}{7}$, these rules are coalitionally manipulable with high probability, which implies that $\theta_\ell(f, m) \ge \tfrac{1}{7}$.
Altogether, this establishes that $\theta_\ell(f, m) = \theta_u(f, m) = \tfrac{1}{7}$ for these rules; this result will be integrated into the main theorem (Theorem~\ref{thm:critical_thetas}).

\subsection{Set-Safe Condorcet Winner (SSCW)}\label{sec:sscw}

To build intuition, we first consider Baldwin and Nanson.
For a manipulation in favor of some candidate $d$ to succeed, the current winner $c$ must be eliminated while $d$ remains in contention.
This requires the existence of a subset $S$ of candidates containing both $c$ and $d$ such that, after manipulation, the Borda score of $c$ within $S$ is no greater than the average (a reasoning similar to that underlying the notion of Super Condorcet Winner for IRV).
By contraposition, this yields a sufficient condition for Baldwin and Nanson to be immune to coalitional manipulation, motivating the definition below in its original form, given in Equation~\eqref{eq:def_sscw_s}.

\begin{definition}
A candidate $c$ is a \emph{Set-Safe Condorcet Winner} (SSCW) if any (and hence all) of the following equivalent conditions are satisfied.
\begin{itemize}
\item For every candidate $d \neq c$ and every subset $S \subseteq \mathcal{C}(P)$ containing $c$ and $d$,  
	\begin{equation}\label{eq:def_sscw_s}
	\sum_{e \in S \setminus \{c\}} w(P^{c \succ d \text{ and } c \succ e}) \;>\; \frac{|S|-1}{2}\, w(P).
	\end{equation}
\item For every candidate $d \neq c$ and every subset $T \subseteq \mathcal{C}(P)\setminus\{c\}$ containing $d$,  
	\begin{equation}\label{eq:def_sscw_t}
	\sum_{e \in T} w(P^{c \succ d \text{ and } c \succ e}) \;>\; \frac{|T|}{2}\, w(P).
	\end{equation}
\item For every candidate $d \neq c$ and every subset $T \subseteq \mathcal{C}(P)\setminus\{c\}$ containing $d$,  
	\begin{equation}\label{eq:def_sscw_kemeny_style}
	\Big( w(P^{c \succ d}) - \tfrac{w(P)}{2} \Big) 
	+ \sum_{e \in T \setminus \{d\}} \Big( w(P^{c \succ d \text{ and } c \succ e}) - \tfrac{w(P)}{2} \Big) > 0.
	\end{equation}
\item For every candidate $d \neq c$,
	\begin{equation}\label{eq:def_sscw_computable}
	\Big( w(P^{c \succ d}) - \tfrac{w(P)}{2} \Big) 
	+ \sum_{e \notin \{c, d\}} 
	\min\Big(0,\, w(P^{c \succ d \text{ and } c \succ e}) - \tfrac{w(P)}{2}\Big) > 0.
	\end{equation}
\end{itemize}
\end{definition}

In summary, Equations~\eqref{eq:def_sscw_s} and~\eqref{eq:def_sscw_t} are convenient for Baldwin and Nanson, Equation~\eqref{eq:def_sscw_kemeny_style} for Kemeny, and Equation~\eqref{eq:def_sscw_computable} for Simplified Dodgson.

Equation~\eqref{eq:def_sscw_s} reflects our initial reasoning, while Equation~\eqref{eq:def_sscw_t} is merely a reformulation. In both cases, note that the summation includes the case $e=d$. Equation~\eqref{eq:def_sscw_t} can be equivalently rewritten as Equation~\eqref{eq:def_sscw_kemeny_style}, which can be interpreted as follows: the margin of $c$ over $d$ (typically a win) cannot be offset by negative margins against other opponents~$e$ that may arise after manipulation.

The definitions given in Equations~\eqref{eq:def_sscw_s}, \eqref{eq:def_sscw_t}, and~\eqref{eq:def_sscw_kemeny_style} are computationally costly, as they require considering all subsets of opponents. In Equation~\eqref{eq:def_sscw_kemeny_style}, for a fixed~$d$, the worst case is obtained by selecting precisely those candidates~$e$ that can defeat~$c$ after manipulation. This observation leads to Equation~\eqref{eq:def_sscw_computable}, an equivalent formulation that can be tested in polynomial time.

By considering sets $T$ of size one or two in Equation~\eqref{eq:def_sscw_t}, and using the identity $w(P) = w(P^{c \succ d}) + w(P^{d \succ c})$, Equation~\eqref{eq:def_pscw} is recovered, showing that every SSCW is also a PSCW.
This observation also motivates the terminology: \emph{set-safe} refers to arbitrary sets of opponents~$T$, whereas \emph{pair-safe} restricts attention to pairs (possibly identical).
In the same spirit, the usual Condorcet winner can be seen as \emph{single-opponent-safe}, as it corresponds to the case where $T$ contains only one opponent.
The implication is strict: in Table~\ref{tab:ex_pscw_no_sscw}, candidate~$A$ is the PSCW (and hence a CW) but not an SSCW.

\newcommand{\tabExPscwNoSscw}[2]{% Argument: #1 for \label, #2 main/app
\begin{table}[ht]
\centering
\ifthenelse{\equal{#2}{main}}{%
  \caption{A profile where candidate~$A$ is the PSCW, but not an SSCW (see Appendix~\ref{sec:appendix_analysis_table_3}).}%
}{%
  \caption{A profile where (1) $A$ is the PSCW; (2) $A$ is not an SSCW; (3) \Vie\ is CM in favor of $B$.}%
}
\(\begin{array}{c c c c}
	\hline
	2 & 6 & 5 & 6 \\ \hline
	A & B & C & D \\
	B & A & A & A \\
	C & C & B & B \\
	D & D & D & C \\ \hline
\end{array}\)
\label{#1}
\end{table}%
}
\tabExPscwNoSscw{tab:ex_pscw_no_sscw}{main}

For Kemeny, consider a manipulation attempt in favor of candidate~$d$.
The resulting winning ranking would then have to take the form \((d, e_1, \ldots, e_k, c, f_1, \ldots, f_\ell)\).
If Equation~\eqref{eq:def_sscw_kemeny_style} holds for the opponent set \(T = \{d, e_1, \ldots, e_k\}\), however, then moving $c$ to the top of the ranking strictly reduces the Kemeny penalty, and the manipulation fails.
Hence, the existence of an SSCW also makes Kemeny immune to coalitional manipulation.
Intuitively, the condition ensures that manipulators cannot interpose other candidates $e_1, \ldots, e_k$ between $d$ and $c$ in the winning ranking so as to prevent $c$ from resurfacing above $d$. 

For Simplified Dodgson, in Equation~\eqref{eq:def_sscw_computable}, the first term can be interpreted (up to rounding) as the number of points candidate~$d$ must recover against~$c$, while the second term is the opposite of the points that $c$ must recover in its defeats after manipulation.
If the inequality holds, the score of $d$ after manipulation remains below that of~$c$.
Hence, the existence of an SSCW also makes Simplified Dodgson immune to coalitional manipulation.

\begin{restatable}{proposition}{ThmBaldwinNansonKemenySSCW}
\label{thm:baldwin_nanson_kemeny_sscw}
Let $f$ be one of Baldwin, Nanson, Kemeny, or Simplified Dodgson.
If a candidate $c$ is the SSCW of a profile $P$, then $f(P)=c$ and $P$ is non-CM.
\end{restatable}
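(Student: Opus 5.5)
The plan is to handle the four rules separately, with one common skeleton. Fix an SSCW $c$ of $P$, a candidate $d\neq c$, and a target profile $Q$ in which every ranking whose weight decreases places $d$ above $c$. In each case we show two things: first that $f(P)=c$, and then that $f(Q)\neq d$. The key quantitative fact is a bound on what survives in $Q$. All voters of $P^{c\succ d}$ are sincere in $Q$, so for every $e\neq c$,
\[
W(c,e,Q)\;\ge\; w(P^{c\succ d \text{ and } c\succ e}),
\qquad
W(c,d,Q)\;\ge\; w(P^{c\succ d}).
\]
The sincere case $f(P)=c$ reduces to the same computation with $Q=P$, because $W(c,e,P)\ge w(P^{c\succ d\text{ and }c\succ e})$ for any $d$. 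So it suffices to prove the $Q$ statement; taking $Q=P$ is allowed since the case split below does not use $f(Q)\neq f(P)$.

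\textbf{Baldwin and Nanson.} Consider any round in which the remaining set $S$ contains both $c$ and $d$. The Borda score of $c$ in $Q_S$ is $\sum_{e\in S\setminus\{c\}}W(c,e,Q)$, and this is at least the left-hand side of Equation~\eqref{eq:def_sscw_s}. The average Borda score in $S$ is $\frac{|S|-1}{2}w(P)$. Hence $c$ is strictly above average, so it is neither the minimum (Baldwin) nor below or at the average (Nanson), and it is not eliminated while $d$ survives. Therefore $d$ cannot be the last candidate remaining, since $c$ would have to be eliminated at a round where $d$ is still present. For $Q=P$ the same argument, run over all rounds and all $d$, gives $f(P)=c$: at any round $c$ has some opponent $d$ still present. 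This also covers ties.

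\textbf{Kemeny.} Suppose $f(Q)=d$, and let $r=(d,e_1,\dots,e_k,c,f_1,\dots)$ be a minimum-penalty ranking. Let $r'$ be $r$ with $c$ moved to the top, and set $T=\{d,e_1,\dots,e_k\}$. Then
\[
p(r',Q)-p(r,Q)=\sum_{e\in T}\bigl(W(e,c,Q)-W(c,e,Q)\bigr)=\sum_{e\in T}\bigl(w(P)-2W(c,e,Q)\bigr),
\]
which is strictly negative by Equation~\eqref{eq:def_sscw_t} and the bounds above. This contradicts minimality. A tie-breaking subtlety remains: we must show that every minimal ranking puts $c$ on top. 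The same computation gives this, because any ranking with $c$ not first is strictly improvable.

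\textbf{Simplified Dodgson.} This is the step I expect to be the main obstacle, because of the floor $\lfloor n/2\rfloor+1$ and the fact that the rule is not homogeneous. Write $h=\lfloor n/2\rfloor+1$. The penalty of $d$ in $Q$ is at least $\max(0,h-W(d,c,Q))$, and this is at least $h-(n-w(P^{c\succ d}))$. The penalty of $c$ in $Q$ is $\sum_{e\neq c}\max(0,h-W(c,e,Q))$, and this is at most $\sum_e\max(0,h-w(P^{c\succ d\text{ and }c\succ e}))$; the term $e=d$ vanishes since $w(P^{c\succ d})>n/2$. I would then compare the two using Equation~\eqref{eq:def_sscw_computable}. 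The term $w(P^{c\succ d})-n/2$ versus $h-(n-w(P^{c\succ d}))=w(P^{c\succ d})-n+h$ differs by the rounding $h-n/2\in\{\tfrac12,1\}$. Similarly, each $\min(0,\cdot-n/2)$ is shifted by at most $h-n/2$ per opponent. Because all weights are integers, the strict inequality in \eqref{eq:def_sscw_computable} gives slack of at least $\tfrac12$. I would try first to show that the rounding shifts cancel in the right direction (both sides shift by the same $h-n/2$ per active term). If that fails, the fallback is to use the integer slack and handle $n$ even and odd separately. Finally, $f(P)=c$ follows because $c$ is a Condorcet winner with penalty $0$, while every other candidate has positive penalty.
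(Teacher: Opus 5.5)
Your arguments for Baldwin, Nanson and Kemeny are correct and match the sketch the paper gives in Section~\ref{sec:sscw}. After manipulation, $c$ stays strictly above the Borda average in every $S\ni c,d$. Moving $c$ to the top of any ranking that places $d$ above $c$ strictly lowers the Kemeny penalty. In $Q$ you only need that no optimal ranking puts $d$ above $c$, which your computation gives. Your stronger claim that every ranking with $c$ not first is improvable is justified in $P$ (choose $d\in T$), but not in $Q$.

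The genuine gap is Simplified Dodgson, and neither of your two plans can close it. Write $h=\lfloor n/2\rfloor+1$, $\epsilon=h-n/2$, $A=w(P^{c\succ d})$ and $B_e=w(P^{c\succ d\text{ and }c\succ e})$. Let $K$ be the set of $e\notin\{c,d\}$ with $B_e<h$. Your bounds give $p_\SD(d,Q)\ge (A-n/2)+\epsilon$, but only $p_\SD(c,Q)\le\sum_{e\in K}(n/2-B_e)+|K|\epsilon$. The rounding enters $d$'s penalty once, through its single deficit against $c$, but enters $c$'s penalty once per active opponent. So the shifts do not cancel. You would need the left-hand side of \eqref{eq:def_sscw_computable} to exceed $(|K|-1)\epsilon$, whereas integrality only guarantees a slack of $\tfrac12$ or $1$.

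This is not a weakness of your method: the Simplified Dodgson part of the proposition is false. Take $n=10$ voters and candidates $c,d,e_1,e_2,e_3$:
\begin{itemize}
\item for each $i$, one voter $e_i\succ c\succ d\succ\cdots$;
\item three voters $c\succ d\succ e_1\succ e_2\succ e_3$;
\item four voters $d\succ c\succ e_1\succ e_2\succ e_3$.
\end{itemize}
Then $A=6$, $B_{e_i}=5$, $W(c,e_i,P)=9$ and $w(P^{c\succ e_i\text{ and }c\succ e_j})=8$. So \eqref{eq:def_sscw_computable} equals $1$ for opponent $d$ and $4$ for each $e_i$: $c$ is the SSCW and wins with penalty~$0$. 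Now let the four $d$-supporters switch to $d\succ e_1\succ e_2\succ e_3\succ c$.
\begin{itemize}
\item $W(c,e_i,Q)=5<h=6$ gives $p_\SD(c,Q)=3$.
\item $W(d,c,Q)=4$ and $W(d,e_i,Q)=9$ give $p_\SD(d,Q)=2$.
\item $W(e_i,d,Q)=1$ gives $p_\SD(e_i,Q)\ge 5$.
\end{itemize}
So $d$ wins and $P$ is CM. An analogous example exists for odd $n$ (e.g.\ $n=13$, $m=6$), where the two definitions in the footnote coincide. The paper's own justification hides exactly this step in the words ``up to rounding''.

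What your two bounds do prove is immunity under a rounded condition: $A\ge h$ and $h-(n-A)>\sum_{e\in K}(h-B_e)$ for every $d$. In the Perturbed Culture, the margin in \eqref{eq:def_sscw_computable} grows linearly in $n$, while the rounding costs at most $m$. Hence $\theta_c(\SD,m)=\frac{m-2}{4m-5}$ survives, even though the exact proposition does not.
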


For Baldwin, Nanson, Kemeny, and Simplified Dodgson, the presence of an SSCW is thus a sufficient condition for immunity to coalitional manipulation, but not a necessary one, as illustrated by the profile already introduced in Table~\ref{tab:ex_scw_no_pscw_rules_not_CM}.

As for the PSCW, analyzing a neighborhood of the expected profile yields the following two results (see Appendix~\ref{sec:appendix_baldwin_family}).

\begin{restatable}{theorem}{ThmThetaCriticalSSCW}
\label{thm:theta_critical_sscw}
In the Perturbed Culture model, the critical concentration parameter for the existence of a Set-Safe Condorcet Winner is
\[
\theta_c(\SSCW, m) = \frac{m-2}{4m-5}.
\]
\end{restatable}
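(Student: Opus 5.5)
The plan is to do the computation on the expected profile $\hat{P}$ first, then extend it to a neighbourhood of $\hat{P}$, and finally pass to discrete profiles with the weak law of large numbers, following the same pattern as for Theorem~\ref{thm:theta_critical_pscw}. All conditions defining an SSCW are homogeneous of degree one in the weights. So we may work with normalized profiles, where $w(P)=1$.

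\emph{Step 1: the expected profile.} Every SSCW is a PSCW and hence a Condorcet winner. For any $\theta>0$, candidate~$1$ is the Condorcet winner of $\hat{P}$, since $W(1,d,\hat{P})=\theta+\frac{1-\theta}{2}>\frac12$. So candidate~$1$ is the only possible SSCW, and we test Equation~\eqref{eq:def_sscw_t} for $c=1$. By symmetry of the uniform part, for $d\neq e$ (both different from~$1$) we get $w(\hat{P}^{1\succ d \text{ and } 1\succ e})=\theta+\frac{1-\theta}{3}$, because $1$ is ranked first among three given candidates with probability $\frac13$. For $e=d$ we get $\theta+\frac{1-\theta}{2}$. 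Subtracting $\frac12$ from each term, as in Equation~\eqref{eq:def_sscw_kemeny_style}, the term for $e=d$ is always positive. The terms for $e\neq d$ all share the sign of $\theta-\frac14$.

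If $\theta\ge\frac14$, every admissible $T$ satisfies the condition. If $\theta<\frac14$, the worst set is $T=\mathcal{C}\setminus\{1\}$, which is exactly the reduction behind Equation~\eqref{eq:def_sscw_computable}. The condition then reads
\[
(m-1)\theta+(1-\theta)\Big(\tfrac12+\tfrac{m-2}{3}\Big)>\tfrac{m-1}{2}.
\]
Multiplying by $6$ and simplifying gives $\theta(4m-5)>m-2$, that is, $\theta>\frac{m-2}{4m-5}$. This threshold is at most $\frac14$, so the two cases fit together. Hence candidate~$1$ is an SSCW of $\hat{P}$ if $\theta>\frac{m-2}{4m-5}$. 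If $\theta<\frac{m-2}{4m-5}$, the inequality fails strictly for $T=\mathcal{C}\setminus\{1\}$ and every $d$, and no SSCW exists.

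\emph{Step 2: a neighbourhood of $\hat{P}$.} Every quantity involved, namely $w(P^{1\succ d \text{ and } 1\succ e})$ and $W(1,d,P)$, is a linear, hence continuous, function of the normalized weights. There are finitely many strict inequalities to track.

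In the supercritical case, all inequalities of Equation~\eqref{eq:def_sscw_t} for $c=1$ hold strictly at $\hat{P}$. They therefore hold on an open neighbourhood, where candidate~$1$ is an SSCW.

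In the subcritical case, two facts hold on an open neighbourhood of $\hat{P}$:
\begin{itemize}
\item the violated inequality for $c=1$, $T=\mathcal{C}\setminus\{1\}$ remains strictly violated;
\item each $d\neq 1$ still loses strictly to~$1$ pairwise, so $d$ is not a Condorcet winner and hence not an SSCW.
\end{itemize}
Thus no SSCW exists anywhere in this neighbourhood.

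\emph{Step 3: the law of large numbers.} The weak law of large numbers gives $\bar{P}\to\hat{P}$ in probability as $n\to\infty$. Hence the probability that $\bar{P}$ lies in the relevant neighbourhood tends to~$1$, which yields $\theta_c(\SSCW,m)=\frac{m-2}{4m-5}$.

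The main obstacle is Step~1: identifying the worst-case opponent set $T$ and checking that the sign change of the $e\neq d$ terms at $\theta=\frac14$ is consistent with the threshold. Steps~2 and~3 are routine, because SSCW is defined by finitely many strict linear inequalities in the normalized profile.
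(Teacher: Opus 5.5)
Your proposal is correct and follows essentially the paper's proof: evaluate the SSCW condition for candidate~1 at $\hat{P}$ with every opponent included in $T$ (equivalently, Equation~\eqref{eq:def_sscw_computable}), reduce it to the sign of $(4m-5)\theta-(m-2)$, and extend to a neighbourhood via finitely many strict linear inequalities and the weak law of large numbers. Your explicit check that every $d\neq 1$ still loses pairwise to~$1$, and so cannot be an SSCW in the subcritical neighbourhood, spells out a point the paper leaves implicit.
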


\begin{restatable}{corollary}{ThmThetaUIfProtectedBySSCW}
\label{thm:theta_u_if_protected_by_sscw}
Let $f$ be a voting rule that is non-CM whenever an SSCW exists.
Then 
\[
\theta_u(f, m) \le \frac{m-2}{4m-5}.
\]
\end{restatable}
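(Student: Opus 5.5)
The plan mirrors the derivation of Corollary~\ref{thm:theta_u_if_protected_by_pscw} from Theorem~\ref{thm:theta_critical_pscw}, with Theorem~\ref{thm:theta_critical_sscw} as the key input. Fix $m$ and take any $\theta > \frac{m-2}{4m-5}$. Recall that $\theta_u(f,m)$ is defined as the smallest value such that, for every $\theta > \theta_u(f,m)$, the limit $\lim_{n\to\infty}\rho(f,m,n,\theta)$ equals $0$. It therefore suffices to show that every $\theta$ above $\frac{m-2}{4m-5}$ has this vanishing property. Then $\frac{m-2}{4m-5}$ belongs to the set of admissible upper thresholds, and the minimal one, $\theta_u(f,m)$, is at most it.

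The first step is to read off from Theorem~\ref{thm:theta_critical_sscw} its supercritical half. For $\theta > \theta_c(\SSCW,m)$, the probability that the random Perturbed Culture profile $P$ with $n$ voters admits an SSCW tends to $1$ as $n\to\infty$. This is exactly the meaning of the notation $\theta_c(\cdot,m)$ for Condorcet notions introduced at the start of Section~\ref{sec:theoretical_results}.

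The second step uses the hypothesis on $f$ to get the event inclusion
\[
\{P \text{ is CM in } f\} \subseteq \{P \text{ has no SSCW}\},
\]
which yields
\[
\rho(f,m,n,\theta) = \Pr(P \text{ CM in } f) \le 1 - \Pr(P \text{ has an SSCW}) \xrightarrow[n\to\infty]{} 0 .
\]
Since $\rho\ge 0$, the limit exists and equals $0$. This holds for every $\theta>\frac{m-2}{4m-5}$ in $(0,1]$, which concludes the argument. If $\frac{m-2}{4m-5} \ge 1$, the claim is vacuous because $\theta_u \in [0,1]$; this can only happen for small degenerate $m$.

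The only point needing care is the first step. One has to make sure that Theorem~\ref{thm:theta_critical_sscw} is stated, or proved in the appendix, in the "with high probability" sense for discrete profiles drawn from the Perturbed Culture. It is not enough for it to hold only for the expected profile $\hat P$. If only the statement about $\hat P$ were available, I would add the standard bridge. In the supercritical regime the strict inequalities~\eqref{eq:def_sscw_t} hold at $\hat P$ with candidate $1$. They involve finitely many linear functionals of the normalized profile, so they persist on a neighborhood of $\hat P$. The weak law of large numbers then puts the normalized random profile $\bar P$ in that neighborhood with probability tending to $1$. Once this is in place, everything else is a direct monotonicity argument.
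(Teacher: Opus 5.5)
Your proposal is correct and follows the paper's argument: the paper calls this corollary an immediate consequence of Theorem~\ref{thm:theta_critical_sscw}. That theorem's proof already contains the neighborhood-plus-weak-law step you sketch, so above $\frac{m-2}{4m-5}$ an SSCW exists with high probability, and your event inclusion then forces $\rho(f,m,n,\theta)\to 0$.
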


This applies in particular to Baldwin, Nanson, Kemeny, and Simplified Dodgson. Conversely, Appendix~\ref{sec:appendix_baldwin_family} shows that, for $\theta < \tfrac{m-2}{4m-5}$, these rules are susceptible to coalitional manipulation with high probability. Altogether, this establishes that $\theta_c(f, m) = \tfrac{m-2}{4m-5}$, a result that will be incorporated into the main theorem (Theorem~\ref{thm:critical_thetas}).

Finally, Dodgson stands apart within this family, since the presence of an SSCW does not guarantee immunity to coalitional manipulation (Table~\ref{tab:ex_sscw_no_rcw_dod_cm}). On the other hand, there also exist profiles with no SSCW in which Dodgson is still immune to coalitional manipulation (Table~\ref{tab:ex_scw_no_pscw_rules_not_CM}).
Dodgson would be protected by the SSCW condition, like Simplified Dodgson, if the number of swaps required always equaled the number of points to be regained.
In practice, however, so-called ``useless swaps'' may be needed.\footnote{Useless swaps also explain why it is NP-hard to determine the winner for the Dodgson rule \cite{bartholdi1989voting}.}  
For example, suppose that $c$ already defeats $d$ in pairwise comparison but loses to $e$, and consider a ranking $c \succ d \succ e$.
Moving $e$ above $c$ then requires two swaps, yet this changes the outcome of $c$ versus $e$ by only one point.
Such situations, however, do not arise in the proofs under the Perturbed Culture given in Appendix~\ref{sec:appendix_baldwin_family}.
As a result, Dodgson nevertheless shares the same critical concentration parameter as Simplified Dodgson, as we will state in the main theorem (Theorem~\ref{thm:critical_thetas}).

\newcommand{\tabExSscwNoRcwDodCM}[2]{% Argument: #1 for \label, #2 main/app
\begin{table}[ht]
\centering
\ifthenelse{\equal{#2}{main}}{%
  \caption{A profile where candidate~$A$ is the SSCW, yet Dodgson is CM for~$B$ (see Appendix~\ref{sec:appendix_analysis_table_4}).}%
}{%
  \caption{A profile where (1) $A$ is the SSCW; (2) $A$ is not an RCW; (3) \Dod\ is CM for $B$.}%
}
\label{#1}
\(\begin{array}{c c c c}
	\hline
	40 & 36 & 12 & 12 \\ \hline
	A  & B  & C  & C  \\
	B  & A  & D  & D' \\
	C  & C  & A  & A  \\
	D  & D  & B  & B  \\
	D' & D' & D' & D  \\ \hline
\end{array}\)
\end{table}
}
\tabExSscwNoRcwDodCM{tab:ex_sscw_no_rcw_dod_cm}{main}

\subsection{Resistant Condorcet Winner (RCW)}\label{sec:rcw}

We finally consider the existing notion of Resistant Condorcet Winner (RCW), which completes our hierarchy of Condorcet notions.

\begin{definition}\cite{durand2016condorcet}
A candidate~$c$ is a \emph{Resistant Condorcet Winner} (RCW) if, for every pair of other candidates $(d,e)$ (possibly identical),\footnote{The case $d=e$ only matters when $m=2$; for $m\geq 3$, it can be omitted since it is implied by the case $d\neq e$.}
\begin{equation}\label{eq:def_rcw}
w(P^{c \succ d \text{ and } c \succ e}) > \frac{w(P)}{2}.
\end{equation}
Intuitively, no coalition preferring $d$ to $c$ can make $c$ appear as defeated by another candidate~$e$.

Equivalently, in any Condorcet-consistent rule, $c$ is elected and the profile is immune to coalitional manipulation.
\end{definition}

Whenever a candidate is an RCW, Equation~\eqref{eq:def_sscw_kemeny_style} implies that it is also an SSCW.
This implication is strict: in Table~\ref{tab:ex_sscw_no_rcw_dod_cm}, candidate~$A$ is the SSCW, but not an RCW.

\begin{restatable}{theorem}{ThmThetaCriticalRCW}
\label{thm:theta_critical_rcw}
In the Perturbed Culture model, the critical concentration parameter for the existence of a Resistant Condorcet Winner is
\[
\theta_c(\RCW, m) = \frac{1}{4}.
\]
\end{restatable}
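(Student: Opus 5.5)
The plan is to follow the same three-step pattern as for Theorems~\ref{thm:theta_critical_pscw} and~\ref{thm:theta_critical_sscw}: compute the relevant quantities in the expected profile $\hat{P}$, show that strict inequalities persist in a neighborhood of $\hat{P}$, and conclude with the weak law of large numbers. I will assume $m\geq 3$. For $m=2$, the condition reduces to candidate~$1$ being a Condorcet winner, which happens with high probability for every $\theta>0$. So the value $\tfrac14$ is to be understood for $m\geq 3$, as in the footnote to the definition.

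\textbf{Step 1 (expected profile).} Fix $c=1$ and two distinct opponents $d,e$. In $\hat{P}$, the reference ranking contributes $\theta$ to $w(\hat{P}^{1\succ d \text{ and } 1\succ e})$. The uniform part contributes $\tfrac{1-\theta}{3}$, since $1$ is the best of $\{1,d,e\}$ in a third of all rankings. So the left-hand side of Equation~\eqref{eq:def_rcw} equals $\theta+\tfrac{1-\theta}{3}=\tfrac{1+2\theta}{3}$ for every pair $(d,e)$. When $d=e$, it is even larger: $\theta+\tfrac{1-\theta}{2}$. Hence candidate~$1$ satisfies Equation~\eqref{eq:def_rcw} in $\hat{P}$ if and only if $\tfrac{1+2\theta}{3}>\tfrac12$, that is, $\theta>\tfrac14$. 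Every other candidate $c\neq 1$ is not even a Condorcet winner in $\hat{P}$, because $W(c,1,\hat{P})=\tfrac{1-\theta}{2}<\tfrac12$. So for $\theta<\tfrac14$ there is no RCW in $\hat{P}$: every candidate strictly violates some inequality. For $c=1$ take, e.g., $d=2$ and $e=3$; for $c\neq 1$ take $d=e=1$.

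\textbf{Step 2 (stability).} Each quantity $w(P^{c\succ d \text{ and } c\succ e})-\tfrac{w(P)}{2}$, evaluated on the normalized profile, is a linear function of the finitely many weights $w(r,\bar P)$. It is therefore continuous in the normalized profile. For $\theta>\tfrac14$, all the inequalities for $c=1$ hold strictly at $\hat{P}$, so there is an $\varepsilon>0$ such that every normalized profile within $\varepsilon$ of $\hat{P}$ (in sup norm over rankings) has candidate~$1$ as an RCW. For $\theta<\tfrac14$, the strict violations identified in Step~1 likewise persist in some $\varepsilon$-neighborhood, so no candidate is an RCW there.

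\textbf{Step 3 (law of large numbers).} The counts $w(r,P)$ are multinomial with probabilities $w(r,\hat{P})$. By the weak law of large numbers, $\bar{P}$ lies within $\varepsilon$ of $\hat{P}$ with probability tending to~$1$ as $n\to\infty$. Combining this with Step~2 gives the two halves of the phase transition, hence $\theta_c(\RCW,m)=\tfrac14$.

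There is no serious obstacle here. The only point needing care is to state the neighborhood argument with strict inequalities on both sides, which is why the boundary value $\theta=\tfrac14$ itself is left undetermined, exactly as allowed by the definition of $\theta_c$.
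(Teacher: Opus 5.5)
Your proposal is correct and follows essentially the same route as the paper: compute $w(\hat{P}^{1 \succ d \text{ and } 1 \succ e}) - \tfrac{1}{2} = \tfrac{1}{6}(4\theta - 1)$ for distinct opponents, note that the strict inequality persists in a neighborhood of $\hat{P}$, and conclude with the weak law of large numbers. The paper leaves two of your details implicit: your explicit check that no $c \neq 1$ can be an RCW (via $d = e = 1$), which it covers by candidate~1 being the Condorcet winner, and your remark that the value $\tfrac{1}{4}$ requires $m \geq 3$, which it covers by its standing assumption.
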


\begin{restatable}{corollary}{ThmThetaUIfProtectedByRCW}
\label{thm:theta_u_if_protected_by_rcw}
Let $f$ be a Condorcet-consistent voting rule.
Then 
\[
\theta_u(f, m) \le \frac{1}{4}.
\]
\end{restatable}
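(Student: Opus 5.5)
The statement to prove is Corollary~\ref{thm:theta_u_if_protected_by_rcw}. The plan is to reduce it to Theorem~\ref{thm:theta_critical_rcw}, in the same way as Corollaries~\ref{thm:theta_u_if_protected_by_pscw} and~\ref{thm:theta_u_if_protected_by_sscw}.

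\textbf{Step 1: RCW implies non-CM.} Let $f$ be Condorcet-consistent, and let $c$ be an RCW of a discrete profile~$P$. Since an RCW is a Condorcet winner, $f(P)=c$.

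Suppose a coalition tries to elect some $d\neq c$. Every manipulator must prefer $d$ to $c$, so all voters in $P^{c\succ d}$ keep their sincere ballots. In the target profile~$Q$, for any $e\neq c$, candidate $c$ therefore keeps at least $w(P^{c \succ d \text{ and } c\succ e})$ voters ranking $c$ above~$e$. By Equation~\eqref{eq:def_rcw}, this is more than $w(P)/2$.

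Hence $c$ is still the Condorcet winner of~$Q$, and $f(Q)=c$. So no manipulation succeeds and $P$ is non-CM. This is exactly the ``equivalently'' clause of the RCW definition.

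\textbf{Step 2: probability bound.} Fix $\theta>\tfrac14$. By Theorem~\ref{thm:theta_critical_rcw}, for such $\theta$, a random Perturbed Culture profile has an RCW with probability tending to~$1$ as $n\to\infty$. By Step~1, this gives
\[
\rho(f,m,n,\theta)\le \Pr[\text{no RCW}]\to 0 .
\]

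\textbf{Step 3: conclusion.} Since this holds for every $\theta>\tfrac14$, the definition of the upper critical parameter as the smallest such threshold gives $\theta_u(f,m)\le \tfrac14$.

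The only point needing care is whether Theorem~\ref{thm:theta_critical_rcw} covers all $m\geq 2$, including the degenerate case $m=2$ where $d=e$ matters. I take it as stated for every~$m$. Otherwise the argument is immediate, and no step is a real obstacle.
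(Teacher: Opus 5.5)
Your proof is correct and follows the paper's route. The paper also derives this corollary directly from Theorem~\ref{thm:theta_critical_rcw}, together with the fact (stated in the RCW definition) that an RCW is elected by every Condorcet-consistent rule and makes the profile non-CM; your Step~1 simply verifies that fact explicitly, and your caveat about $m=2$ is harmless because there the RCW condition reduces to being a Condorcet winner, which exists with high probability for every $\theta>0$.
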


In Appendix~\ref{sec:appendix_black_family}, it is further shown that for Black, Slater ($m \geq 4$), and Copeland ($m \geq 5$), if $\theta < \tfrac{1}{4}$, then a manipulation exists with high probability.
Hence, these rules attain $\theta_c(f, m) = \tfrac{1}{4}$, the largest—and therefore the worst—critical concentration parameter among Condorcet-consistent rules.
This result will be incorporated into the main theorem (Theorem~\ref{thm:critical_thetas}).

For Slater with $m=3$ and Copeland with $m \in \{3,4\}$, the result depends on the tie-breaking rule.
When $m=3$, these rules are only required to elect the Condorcet winner whenever one exists; depending on the tie-breaking rule, they may coincide, for instance, with \emph{Condorcet-IRV} (IRV with a precondition to elect the Condorcet winner when one exists), yielding $\theta_c(f,3)=0$ \cite{durand2025irv}, or with the Black rule, yielding $\theta_c(f,3)=\tfrac{1}{4}$. In Appendix~\ref{sec:appendix_black_family}, we provide a detailed analysis of Copeland with $m=4$, showing that the critical concentration parameters may take values in the same interval, depending on the tie-breaking rule and on the value of the parameter~$\alpha$ of the rule.

\medskip

Now that we have established the chain of strict implications
\[
RCW \Rightarrow SSCW \Rightarrow PSCW \Rightarrow CW,
\]
we compare these notions with the Super Condorcet Winner (SCW), another strengthening of the Condorcet winner.
Table~\ref{tab:ex_rcw_no_scw} shows that a candidate may be an RCW (and thus also an SSCW and a PSCW) without being an SCW.
Conversely, in Table~\ref{tab:ex_scw_no_pscw_rules_not_CM}, candidate~$A$ is the SCW, but not a PSCW (and thus neither an SSCW nor an RCW).
This confirms that SCW is logically independent of RCW, SSCW, and PSCW.

\newcommand{\tabExRcwNoScw}[2]{% Argument: #1 for \label, #2 main/app
\begin{table}[ht]
\centering
\ifthenelse{\equal{#2}{main}}{%
  \caption{A profile where candidate~$A$ is the RCW but not an SCW. The example remains valid for any completion of the truncated rankings (see Appendix~\ref{sec:appendix_analysis_table_5}).}%
}{%
  \caption{A profile where candidate~$A$ is the RCW but not an SCW. The example remains valid for any completion of the truncated rankings.}%
}
\label{#1}
\(\begin{array}{c c c c c}
	\hline
	1      & 2      & 2      & 2      & 2      \\ \hline
	A      & B_1    & B_2    & B_3    & B_4    \\
	       & A      & A      & A      & A      \\
	\vdots & \vdots & \vdots & \vdots & \vdots \\ \hline
\end{array}\)
\end{table}%
}
\tabExRcwNoScw{tab:ex_rcw_no_scw}{main}

\subsection{Critical Concentration Parameters}\label{sec:main_theorem}

The analysis of Sections~\ref{sec:pscw}, \ref{sec:sscw}, and~\ref{sec:rcw} determines the critical concentration parameters for a large collection of voting rules, complementing those studied by \citet{durand2025irv}.
At this point, it remains to address Borda, Bucklin, and veto-based rules, namely Veto, Coombs, and Kim--Roush.

This is done on a case-by-case basis in Appendix~\ref{sec:appendix_proof_of_theoretical_results}.
For each rule, the argument follows the general strategy of \citet{durand2025irv}.
We show that there exists a threshold $\theta_c(f,m)$ such that any profile in a neighborhood of the expected profile~$\hat{P}$ is coalitionally manipulable below this value and immune to coalitional manipulation above it.
The result is then lifted to large finite electorates using the weak law of large numbers.
Some rules raise specific subtleties, such as Bucklin and Veto. In all cases, the convergence is exponentially fast, by the same argument as \citet{durand2025irv} (see Section~\ref{sec:known_results} and Appendix~\ref{sec:appendix_manipulation_lemmas}).

We are now ready to state our main theorem.

\begin{restatable}{theorem}{ThmCriticalThetas}
\label{thm:critical_thetas}
Each voting rule~$f$ defined in Section~\ref{sec:zoology} admits a critical concentration parameter $\theta_c(f,m)$, given in Table~\ref{tab:theo_results} for $m \geq 3$ ($m \geq 4$ for Slater and $m \geq 5$ for Copeland).
\end{restatable}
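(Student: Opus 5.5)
The proof is a case-by-case assembly. Each rule gets one upper bound $\theta_u(f,m)\le\theta^*$ and one matching lower bound $\theta_\ell(f,m)\ge\theta^*$. The rules fall into families.

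\textbf{Upper bounds from the Condorcet hierarchy.} For Maximin, Ranked Pairs and Schulze, Proposition~\ref{thm:maximin_like_which_rules} and Theorem~\ref{thm:maximin_like_rules_property} apply; for Young, Proposition~\ref{thm:young_not_maximin_like} does. Each shows the rule is non-CM whenever a PSCW exists, so Corollary~\ref{thm:theta_u_if_protected_by_pscw} gives $\theta_u\le 1/7$. For Baldwin, Nanson, Kemeny and Simplified Dodgson, Proposition~\ref{thm:baldwin_nanson_kemeny_sscw} with Corollary~\ref{thm:theta_u_if_protected_by_sscw} gives $\theta_u\le\frac{m-2}{4m-5}$. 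Every Condorcet-consistent rule (Black, Slater, Copeland, Dodgson, \dots) gets $\theta_u\le 1/4$ from Corollary~\ref{thm:theta_u_if_protected_by_rcw}. Plurality, Plurality with Runoff and IRV are taken from \citet{durand2025irv}.

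\textbf{Dodgson upper bound.} The SSCW does not protect Dodgson in general (Table~\ref{tab:ex_sscw_no_rcw_dod_cm}), so I would argue directly near $\hat P$. For $\theta>\frac{m-2}{4m-5}$, the inequality \eqref{eq:def_sscw_computable} holds with a uniform margin near $\hat P$. I would then show that any manipulation toward $d$ can be executed without useless swaps. The point is that in $\hat P$ every swap the manipulators need is realizable by a ranking from the uniform part of the profile, where the relevant candidates are adjacent. This reduces the Dodgson penalty to the Simplified Dodgson one up to $o(n)$.

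\textbf{Lower bounds.} For $\theta<\theta^*$, I would exhibit an explicit manipulation of $\hat P$ that survives in a neighborhood. The plan is to take the extremal witness of the violated condition and let all voters preferring $d=2$ (or the relevant candidate) to $1$ vote in a designed way:
\begin{itemize}
\item For the PSCW family, the witness is the violated pair condition with $(d,e)=(2,3)$.
\item For the SSCW family, the witness is the violated set condition with $T=\mathcal{C}\setminus\{1\}$.
\item For the RCW family, it is the pair $(d,e)$ where $e$ is made to beat $1$.
\end{itemize}
Black then falls back to Borda, which manipulators can steer to $d$. Slater ($m\ge4$) and Copeland ($m\ge5$) need enough candidates to build a majority tournament in which $d$ wins.

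For the remaining non-Condorcet rules (Borda, Bucklin, Veto, Coombs, Kim--Roush), I would compute the scores in $\hat P$ explicitly as functions of $\theta$. The threshold comes from comparing the winner's sincere score against the best score a challenger can reach when the maximal coalition votes for it.

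\textbf{Lifting to finite electorates.} In every case, stability on an open neighborhood of $\hat P$ (margins are strict away from $\theta^*$) plus the weak law of large numbers lifts the result to $n\to\infty$. For the non-homogeneous rules (Young, Dodgson, Simplified Dodgson), I would check that the rounding terms are $O(1)$ relative to linear margins.

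\textbf{Main obstacle.} I expect the hard part to be the lower-bound constructions: each rule needs a concrete manipulation whose success is robust. The most delicate are Dodgson and Young, with their combinatorial penalties; Slater and Copeland, with their small-$m$ restrictions; and Bucklin and Veto, where ties in $\hat P$ (equal scores among the uniform candidates) must be broken robustly by perturbation.
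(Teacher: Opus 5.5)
Your architecture matches the paper's: upper bounds from the PSCW, SSCW and RCW corollaries, a separate argument for Dodgson, explicit manipulations near $\hat P$ for the lower bounds, and lifting by the law of large numbers. Your Dodgson upper bound takes a different route. The paper bounds $p_{\Dod}(1,Q)\le (m-2)\big(n\frac{1-4\theta}{6}+1\big)$ directly, using a circularly symmetric swap procedure in sincere ballots, and compares this with $p_{\Dod}(c,Q)\ge \frac{n\theta}{2}$. You instead reduce $p_{\Dod}(1,Q)$ to $p_{\SD}(1,Q)$ through adjacent swaps and then invoke the SSCW margin. That route works, but only after a count you omit. In the window $\theta\in(\frac{m-2}{4m-5},\frac14]$, each deficit of 1 is about $n\frac{1-4\theta}{6}<\frac{n}{2(4m-5)}$. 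Sincere ballots with a given $e$ immediately above 1 have mass about $\frac{(1-\theta)n}{2m}\ge\frac{3n}{8m}$, which suffices; outside this window it would not.

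The genuine gap is the lower-bound half. It carries most of the content, and you only describe it as a manipulation \emph{designed along the extremal witness}. That recipe controls the duel between 1 and 2. Several rules, however, also require the coalition to keep candidates $3,\dots,m$ in check, and the natural ballots fail.

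\textbf{Baldwin.} The SSCW witness with $T=\mathcal{C}\setminus\{1\}$ only puts 1 below the average, but Baldwin needs 1 to be the minimum. With the unison ballot $(2\succ3\succ\cdots\succ m\succ1)$, Baldwin eliminates 1 before 2 only when $\theta<\frac{1}{13}$, whereas $\frac{m-2}{4m-5}\ge\frac17$. The missing idea is to split the coalition so as to cancel the Dirac part's bias among $3,\dots,m$. A fraction $\frac{2\theta}{1-\theta}$ casts $(m\succ\cdots\succ2\succ1)$, and the rest splits evenly between $(2\succ m\succ\cdots\succ3\succ1)$ and $(2\succ3\succ\cdots\succ m\succ1)$. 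This puts every $j\ge3$ exactly at the average Borda score and makes 1 the strict minimum precisely when $\theta<\frac{m-2}{4m-5}$.

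\textbf{Borda and Black.} The same problem arises: for large $m$ and small $\theta$, both unison ballots $(2\succ3\succ\cdots\succ m\succ1)$ and $(2\succ m\succ\cdots\succ3\succ1)$ let some $j\ge3$ overtake 2 in Borda score. Balancing mixtures are therefore required.

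\textbf{Copeland.} It needs a genuinely constructed regular (circulant) tournament among the candidates $j\ge3$, not just enough candidates.

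\textbf{Bucklin and Veto.} The ties in $\hat P$ are decided by $O(\sqrt{n})$ fluctuations, so they cannot be broken robustly. You must instead show manipulability whichever candidate wins: toward 2 if 1 wins, toward 1 otherwise. For Veto, the latter is done by letting the Dirac voters spread their vetoes.
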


For $m \leq 2$, all voting rules considered here coincide with Plurality and are therefore immune to coalitional manipulation, so that $\theta_c(f,m)=0$.
As noted in Section~\ref{sec:rcw}, for Slater with $m=3$ and for Copeland with $m \in \{3,4\}$, the critical concentration parameter may range from $0$ to $\tfrac{1}{4}$ depending on the tie-breaking rule.

Thus, although pathological voting rules may fail to admit a critical concentration parameter, Theorem~\ref{thm:critical_thetas} shows that the phase transition identified by \citet{durand2025irv} extends to essentially all standard ordinal voting rules and identifies their critical thresholds.

\begin{table}[ht]
\caption{Critical concentration parameters $\theta_c(\cdot,m)$ for $m \geq 3$ and their limits $\theta_c(\cdot,\infty)$. When a Condorcet notion is present, the corresponding rules share the same critical concentration parameter.}
\label{tab:theo_results}
\begin{minipage}{\columnwidth}
\renewcommand{\thefootnote}{\emph{\alph{footnote}}}
\begin{center}
\renewcommand{\arraystretch}{1.3}
\begin{tabular}{l c c c}
	\toprule
	\textbf{Rule $f$}                                    & \textbf{Notion} & $\boldsymbol{\theta_c(\cdot, m)}$ & $\boldsymbol{\theta_c(\cdot, \infty)}$ \\ \midrule
	IRV \footnotemark[1]                                 & SCW             & $0$                           & $0$                                \\ 
	Maximin, Ranked Pairs, Schulze, Young                & PSCW            & $\frac{1}{7}$                 & $\frac{1}{7}$                      \\ 
	Plurality with Runoff \footnotemark[1]               &                 & $\frac{m-3}{5m-3}$            & $\frac{1}{5}$                      \\ 
	Baldwin, Nanson, Kemeny, Dodgson, Simplified Dodgson & SSCW            & $\frac{m-2}{4m-5}$            & $\frac{1}{4}$                      \\ 
	Black, Slater ($m \geq 4$), Copeland ($m \geq 5$)    & RCW             & $\frac{1}{4}$                 & $\frac{1}{4}$                      \\ 
	Plurality \footnotemark[1]                           &                 & $\frac{m-2}{3m-2}$            & $\frac{1}{3}$                      \\ 
	Coombs                                               &                 & $\frac{m-1}{3m-1}$            & $\frac{1}{3}$                      \\ 
	Bucklin                                              &                 & $\frac{m-2}{2m-2}$            & $\frac{1}{2}$                      \\ 
	Borda                                                &                 & $\frac{m-2}{m+1}$             & $1$                                \\ 
	Kim--Roush                                            &                 & $\frac{m-2}{m}$               & $1$                                \\ 
	Veto                                                 &                 & $1$                           & $1$                                \\ \bottomrule
\end{tabular}
\end{center}
\footnotetext[1]{The results on IRV, Plurality with Runoff, and Plurality are due to~\citet{durand2025irv}.}
\end{minipage}
\end{table}

Table~\ref{tab:theo_results} reveals the existence of three clusters of voting rules sharing the same critical concentration parameter $\theta_c(f,m)$, in addition to the \emph{IRV family}, consisting of IRV and its variants, already identified by \citet{durand2025irv}.
These clusters can be referred to as the \emph{Maximin family} (Maximin, Ranked Pairs, Schulze, and Young), the \emph{Baldwin family} (Baldwin, Nanson, Kemeny, Dodgson, and Simplified Dodgson), and the \emph{Black family} (Black, Slater, and Copeland).

That some rules are grouped together is not entirely surprising, as several of them share closely related mechanisms—for instance Maximin, Ranked Pairs, and Schulze; Baldwin and Nanson; or Slater and Copeland.
However, other aspects of the classification are less obvious a priori.
In particular, it is not immediate that Young should belong to the Maximin family, or that Dodgson and Kemeny should be separated from it and instead cluster with Baldwin and Nanson.
For Young, as discussed above, the connection with the Pair-Safe Condorcet Winner is non-trivial.
For Kemeny and Dodgson, a closer inspection reveals that the number of swaps in the Dodgson rule, or the Kemeny score itself, can be related to scores in the weighted majority matrix, and therefore to the Borda scores used in Baldwin and Nanson.
Finally, the position of the Black rule is particularly intriguing: it differs both from other Condorcet-consistent rules based on Borda elimination (Baldwin and Nanson) and from Borda itself.

Thus, the proposed classification is not immediately apparent from the axiomatic definitions of the rules.
Whether it translates into similar behavior with respect to coalitional manipulation in real-world datasets is an empirical question, which we investigate next.

\section{Numerical Results}\label{sec:numerical_results}

The goal of this section is to test whether our simplified theoretical model retains explanatory power when confronted with real preference data.
All code used in our experiments will be made available as a GitHub repository once this paper is published.
We use two datasets, both introduced by \citet{durand2023coalitional}.
\begin{itemize}
\item The \emph{Netflix dataset} contains $11{,}215$ profiles obtained by perturbing $2{,}243$ base profiles, spanning $m=3$ to $m=11$ candidates and $1{,}000$ to $91{,}880$ voters.
Preferences are cardinal and complete, and uniform random noise is used to break ties between equal ratings.
Its size and coverage make it particularly suitable for detailed analyses that depend on~$m$.
\item The \emph{FairVote dataset} contains $10{,}044$ profiles obtained by perturbing $162$ base profiles, spanning $m=3$ to $m=11$ candidates and $1{,}560$ to $299{,}107$ voters.
Preferences are ordinal and generally truncated, and uniform random noise is used to complete the truncated rankings.
Although less rich, especially for larger values of~$m$, this dataset serves as a useful complement, as it describes real-world political elections.
\end{itemize}
Additional results on the more heterogeneous PrefLib dataset \cite{mattei2013preflib} are provided in the code repository, confirming the robustness of our findings.

Our numerical evaluations rely on \mbox{\textsc{SVVAMP}}, a Python package dedicated
to studying the manipulability of voting rules \cite{durand2016svvamp,svvamp}.
For each rule, the implemented manipulation algorithm classifies a profile as
CM, non-CM, or \emph{undecided}, meaning that the chosen algorithm
cannot certify either outcome for that profile (algorithmic uncertainty).
Compared to the version used by \citet{durand2023coalitional}, the present work relies on a substantially improved version of the package.
Namely, we implemented the notions of PSCW and SSCW, added five voting rules (Kemeny, Slater, Young, Dodgson, and Simplified Dodgson), and significantly improved the manipulation algorithms for five others (Ranked Pairs, Baldwin, Nanson, Copeland, and Kim--Roush).

We proceed in three steps.
First, we examine the overall CM rates.
Second, we compare how these rates vary with~$m$ against the theoretical predictions given by~$\theta_c(f,m)$.
Finally, we focus on a fixed value of~$m$ to evaluate how well the theory predicts which rules are more resilient to coalitional manipulation.

\subsection{Overall CM Rates}\label{sec:overall_cm_rates}

\renewcommand{\axisWidth}{14cm}
\renewcommand{\axisHeight}{8.1cm}
\begin{figure}[t]
\begin{subfigure}[t]{\linewidth}
  \centering
  \input{netflix_selected_rules_cm_rate_bar_plot_n_c_ge=5.tex}
  \caption{Netflix dataset.}
  \label{fig:netflix_cm_rate_bar_plot}
\end{subfigure}

\vspace{4mm}
\begin{subfigure}[t]{\linewidth}
  \centering
  \input{fairvote_selected_rules_cm_rate_bar_plot_n_c_ge=5.tex}
  \caption{FairVote dataset.}
  \label{fig:fairvote_cm_rate_bar_plot}
\end{subfigure}

\caption{CM rates for $m \geq 5$. Solid bars show the fraction of coalitionally manipulable profiles, with thin vertical lines indicating algorithmic uncertainty. Colors group rules by critical concentration parameter $\theta_c(f,m)$ (Table~\ref{tab:theo_results}). Dashed horizontal lines indicate, from top to bottom, the fraction of profiles without a Resistant, Set-Safe, Pair-Safe, or Super Condorcet winner.}
\label{fig:cm_rate_bar_plot}
\end{figure}
\renewcommand{\axisWidth}{\axisWidthMemory}
\renewcommand{\axisHeight}{\axisHeightMemory}

Figure~\ref{fig:cm_rate_bar_plot} reports CM rates for the two datasets, restricted to profiles with at least five candidates, ensuring that our theoretical results apply to all voting rules. Additional figures for all values of~$m$ are provided in Appendix~\ref{sec:appendix_additional_figures}; conclusions are similar, except for Slater and Copeland as expected.

Compared to earlier work \cite{durand2023coalitional}, our computations reduce uncertainty to very low levels for almost all rules, including NP-hard ones such as Kemeny, Slater, Dodgson, and Young \cite{bartholdi1989voting,rothe2003exact,conitzer2006computing}. Dodgson remains the least precise, with about 5\% undecided cases in both datasets, followed by Copeland in the FairVote dataset (4\%).

As already noted by~\citet{durand2023coalitional}, the absolute levels differ markedly across the two datasets, with profiles from the FairVote dataset being on average substantially less susceptible to coalitional manipulation. However, several common conclusions emerge.

First, the theoretical families identified in Table~\ref{tab:theo_results} also appear empirically: rules within a given family display very similar CM rates.
Within the Maximin and Baldwin families, differences remain below 0.5\% in both datasets.
The only deviation arises from the upper bound of the uncertainty interval for Dodgson,
while its lower bound remains consistent with the other rules in the Baldwin family.
Within the Black family, the difference is at most 1\% in the Netflix dataset, and ranges between 1\% and 4\% in the FairVote dataset, depending on uncertainty.

Second, just as the notion of SCW explains most non-CM cases for IRV \cite{durand2025irv}, the notions of PSCW, SSCW, and RCW provide tight explanations for specific families. For example, the CM rates of the Maximin family closely align with the absence of a PSCW. In the particular case of Dodgson, we recall that despite sharing the same critical concentration parameter as the rest of the Baldwin family, it is theoretically unaffected by the SSCW bound (Section~\ref{sec:sscw}). Unfortunately, algorithmic uncertainty prevents us from determining whether counterexamples in which Dodgson is susceptible to coalitional manipulation despite the existence of an SSCW are frequent.

Third, an unexpected pattern emerges: in both datasets, Kim--Roush and Veto exhibit very similar, near-maximal CM rates. We return to this observation below.

\subsection{Variation with the Number of Candidates}\label{sec:simu_variation_m}

\begin{figure}[t]
\hspace{-1.85mm}\begin{subfigure}[t]{\linewidth}
  \centering
  \input{plot_theoretical_results.tex}
  \caption{Critical concentration parameter $\theta_c(f,m)$ as a function of the number of candidates $m$ (Theorem~\ref{thm:critical_thetas}). ${}^\dagger$~For Slater, valid only if $m \geq 4$; for Copeland, only if $m \geq 5$.}
  \label{plot_theoretical_results}
\end{subfigure}

\bigskip\bigskip
\begin{subfigure}[t]{\linewidth}
  \centering
  \input{netflix_selected_rules_nb_candidates_rate_line_plot.tex}
  \caption{CM rate in the Netflix dataset as a function of the number of candidates $m$. IRV remains below $12\%$ for all values of $m$. Curves show the fraction of profiles certified as coalitionally manipulable. ${}^\ast$~Rules marked with a star have algorithmic uncertainty above $1\%$ for some values of~$m$.}
  \label{netflix_selected_rules_nb_candidates_rate_line_plot}
\end{subfigure}

\caption{Coalitional manipulability as a function of the number of candidates~$m$: theoretical critical concentration parameters (top) and empirical CM rates in the Netflix dataset (bottom).}
\label{fig:cm_as_function_of_m}
\end{figure}

We now examine how coalitional manipulability varies with the number of candidates~$m$, combining theoretical predictions (Fig.~\ref{plot_theoretical_results}) with empirical results (Fig.~\ref{netflix_selected_rules_nb_candidates_rate_line_plot}). We focus on the Netflix dataset, which contains a relatively large number of base profiles even for larger~$m$ (e.g., 72 base profiles for $m=11$). We also include profiles with three or four candidates, noting that tie-breaking then significantly affects Slater and/or Copeland. IRV is not visible in Fig.~\ref{netflix_selected_rules_nb_candidates_rate_line_plot}, as its CM rates remain below~12\%.

Both Figures exhibit common patterns.
Rules within each of the Maximin, Baldwin, and Black families display very similar CM rates across all values of~$m$.
Plurality with Runoff coincides with IRV at $m=3$ (by definition), and then crosses the Maximin family as predicted, with a slightly shifted crossing point ($m=9$ in theory and $m \in [7,8]$ in practice).
The Maximin family, the Baldwin family, and Plurality coincide at $m=3$ but stratify thereafter in that order, with Plurality eventually performing worse than the Black family.
The Black family consistently performs worse than the Baldwin family, while Borda starts at the same level as the Black family but degrades further as $m$ increases.
Overall, the theory provides a faithful qualitative description of the behavior of IRV, the Maximin family, Plurality with Runoff, the Baldwin family, the Black family, Plurality, and Borda.

Some discrepancies nevertheless emerge.
In Figure~\ref{netflix_selected_rules_nb_candidates_rate_line_plot}, Kim--Roush and Veto are almost indistinguishable empirically, a behavior reminiscent of their identical asymptotic CM rate under Impartial Culture~\cite{kim1996manipulability}.
For larger values of~$m$, both rules behave as expected and are more vulnerable than Borda.
For smaller~$m$, however, they appear less manipulable than predicted when compared to other rules.
Coombs is consistently more manipulable than predicted,
whereas Bucklin performs worse than expected at $m=3$ but better for larger~$m$.

For veto-based rules, these deviations likely stem from peculiarities of the Perturbed Culture at the bottom of the rankings: candidate~1 is not favored, receiving as many bottom votes as candidates $\{2,\ldots,m-1\}$, while only candidate~$m$ is disadvantaged. For Bucklin, a similar effect occurs beyond the first rank: at rank~2 the model favors candidate~2, at rank~3 candidate~3, and so on. Alternative models such as Mallows may be better suited to study these rules, though at higher technical cost.

\subsection{Comparison of Voting Rules for Fixed $m$}\label{sec:expe_cm_rate_vs_theta_c}

\begin{figure}[t]
\hspace{-5mm}\begin{subfigure}[t]{0.52\linewidth}
  \centering
  \input{netflix_selected_rules_scatter_rank_theta_c_rank_cm_rate_n_c=5_revised.tex}
  \captionsetup{margin={-1.3cm,0cm}}
  \caption{Netflix dataset.\hspace{-1cm}~}
  \label{fig:netflix_selected_rules_scatter_rank_theta_c_rank_cm_rate}
\end{subfigure}%
\begin{subfigure}[t]{0.46\linewidth}
  \centering
  \input{fairvote_selected_rules_scatter_rank_theta_c_rank_cm_rate_n_c=5_revised.tex}
  \captionsetup{margin={-1.cm,0cm}}
  \caption{FairVote dataset.}
  \label{fig:fairvote_selected_rules_scatter_rank_theta_c_rank_cm_rate}
\end{subfigure}

\caption{Agreement between theoretical and empirical rankings of voting rules by coalitional manipulability for $m=5$. The $x$-axis ranks rules according to the theoretical critical concentration parameter $\theta_c(f,5)$, while the $y$-axis ranks them according to their empirical CM rate. Vertical bars indicate uncertainty.}
\label{fig:scatter_rank_theta_c_rank_cm_rate}
\end{figure}

For a fixed number of candidates~$m$, the critical concentration parameter~$\theta_c(f,m)$ induces a theoretical ordering of voting rules.
A natural question is whether this theoretical ordering has predictive power for the ordering induced by their empirical CM rates.
Figure~\ref{fig:netflix_selected_rules_scatter_rank_theta_c_rank_cm_rate} confronts these two rankings for the Netflix dataset, while Figure~\ref{fig:fairvote_selected_rules_scatter_rank_theta_c_rank_cm_rate} does so for the FairVote dataset.
We fix $m=5$, the smallest value for which all our theoretical results apply, including Slater and Copeland.
Analogous plots for larger values of~$m$ are provided in the code repository.

By convention, the minimal rank is set to~0.
For the theoretical ranking based on $\theta_c(f,5)$, ties correspond to exact equalities and are assigned the average rank (e.g., the four rules tied for ranks~2 to~5 each receive rank~3.5).
For empirical CM rates, uncertainty arises from two sources: algorithmic limitations and differences below~1\%, which are treated as non-significant; both effects are reflected in the error bars.
Finally, for readability, rules sharing the same value of~$\theta_c(f,5)$ are displayed with a slight horizontal offset.

In both figures, the agreement is strong: the ranking predicted by the critical concentration parameter closely matches the empirical ranking.
The main exceptions are Bucklin and Coombs, in line with the discrepancies already identified in the previous subsection.
For Veto and Kim--Roush, as already observed for larger values of $m$, the theory correctly predicts their position in the ranking at $m=5$, but not their near-identical empirical CM rates.
Finally, the ordinal conclusions are remarkably similar across the two datasets, despite large differences in absolute CM rates (Section~\ref{sec:overall_cm_rates}).

Overall, this experiment shows that the model has genuine predictive power over the relative vulnerability of voting rules to coalitional manipulation, despite the absence of any parameter fitting: the only input provided to the model is the number of candidates~$m$.
The key intuition is that elections typically feature a candidate who is globally stronger than the others in the preferences of the voters, a phenomenon captured in the model by the concentration parameter.
When this advantage is sufficiently large, a voting rule becomes immune to coalitional manipulation; what counts as “sufficiently large” depends on the rule and is captured in the theoretical model by its critical concentration parameter.
This explains why, for a fixed~$m$, we recover essentially the same ordering of voting rules across datasets, even when their absolute CM rates differ substantially.
It is very plausible that other probabilistic models built on the same basic ingredients, such as the Mallows model, would lead to similar conclusions.
The strength of the Perturbed Culture lies in its ability to deliver these insights while remaining particularly tractable mathematically.

\section{Future work}\label{sec:future_work}

A first direction for future work is to test the robustness of our results under alternative preference models, such as Mallows, Bradley--Terry, Plackett--Luce, spatial models, or mixtures thereof, and to examine whether some of these models better explain the behavior of Bucklin, Veto, Coombs, and Kim--Roush.
Another natural step is to design probabilistic models that are also suitable for non-ordinal rules, such as Range Voting.
It would also be valuable to study the critical regime $\theta=\theta_c(f,m)$ in more detail, in particular to determine whether the bounds derived from our strengthened Condorcet notions remain tight at the phase transition.
Beyond coalitional manipulation, phase transitions in parametric cultures could further shed light on other voting paradoxes, such as violations of monotonicity, participation, or independence of irrelevant alternatives.

% In the interest of anonymization, please do not include acknowledgements in your submission.
%
%\begin{acks}
%
%\end{acks}

% Bibliography
\bibliographystyle{ACM-Reference-Format}
\bibliography{sample}

% Appendix

\newpage
\appendix
\etocdepthtag.toc{app} % tag "appendices"

\section*{\Large Technical Appendix}

In this technical appendix, we elaborate on some points of the paper that are omitted from the main text for the sake of conciseness and clarity. We also include two additional Condorcet-consistent voting rules, defined as follows.

\defrule{Split Cycle}{SC}{With the same notation as for the Schulze rule (Section~\ref{sec:other_condorcet_rules}), elect a candidate~$c$ such that $\Strength(c,d,P) \geq W(d,c,P)$ for all $d$ \cite{holliday2020split}.}

\defrule{Viennot}{Vie}{At each round, select the two candidates with the lowest plurality scores and eliminate the one losing their pairwise contest \cite{durand2015towards,durand2023coalitional}.}

\medskip
We will see that these two rules naturally belong to the Maximin family, both theoretically and experimentally.
Split Cycle relies on a mechanism very close to that of the Schulze rule; we chose
to exclude it from the main paper to avoid redundancy, but include it here for
the sake of completeness.
The Viennot rule is less studied, but provides a particularly interesting case in our framework: although the existence of a PSCW is neither necessary nor sufficient for immunity to coalitional manipulation, the rule shares the same critical concentration parameter as the Maximin family.
The fact that it exhibits very similar empirical CM rates (Section~\ref{sec:appendix_additional_figures}) shows that this empirical
clustering of voting rules cannot be attributed solely to the PSCW notion, but also reflects the predictive power of the critical concentration parameter itself.

\medskip
Before proceeding, we briefly recall the main notation used throughout the appendices.
For any (discrete or continuous) profile~$P$, $w(P)$ denotes its total weight.  
We also write $w(P^{\psi})$ for the weight of the subpopulation of voters whose rankings satisfy a condition~$\psi$; typical examples include
$w(P^{c \succ d})$ and $w(P^{c \succ d \text{ and } c \succ e})$.
The notation \(s_f\) denotes a notion of score for rule~$f$, whereas $p_f$ denotes a notion of penalty.
For any pair of candidates $(c,d)$, $W(c,d,P) = w(P^{c \succ d})$ denotes the total weight of voters preferring~$c$ to~$d$; the matrix $W(P)$ with entries $W(c,d,P)$ is the \emph{weighted majority matrix} of~$P$.

For reference, we recall where the strengthened notions of Condorcet winner used in the paper are defined: the SCW in Equation~\eqref{eq:def_scw}; the PSCW in Equation~\eqref{eq:def_pscw}; the SSCW in Equations \eqref{eq:def_sscw_s}, \eqref{eq:def_sscw_t}, \eqref{eq:def_sscw_kemeny_style}, and \eqref{eq:def_sscw_computable}; and the RCW in Equation~\eqref{eq:def_rcw}.

Finally, we introduce the notation $b(\epsilon)$ (read “bounded by epsilon”) to denote a real number whose absolute value is at most~$\epsilon$.

\medskip
Appendix~\ref{sec:appendix_analysis_of_the_examples} provides a detailed analysis
of the examples used throughout the paper.
Appendix~\ref{sec:appendix_theo_preliminaries} presents the theoretical
preliminaries, while Appendix~\ref{sec:appendix_proof_of_theoretical_results} contains the
proofs of the paper’s main results.
Finally, Appendix~\ref{sec:appendix_additional_figures} reports additional
figures showing global CM rates, including the cases $m \in \{3,4\}$ and the two
additional voting rules.

\begingroup
  \etocsettagdepth{main}{none} % hide the main
  \etocsettagdepth{app}{all}   % display the appendices
  \etocsettocstyle{\section*{Contents of the Technical Appendix}}{}
  \tableofcontents
\endgroup

\section{Examples}\label{sec:appendix_analysis_of_the_examples}

\newcounter{tableBackup}
\newcommand{\tableForceNumber}[2]{%
  \setcounter{tableBackup}{\value{table}}%
  \setcounter{table}{\getrefnumber{#1}}%
  \addtocounter{table}{-1}%
  #2%
  \setcounter{table}{\value{tableBackup}}%
}

This section provides a detailed verification of all examples discussed in the
main body of the paper.
Each table is recalled with an extended caption summarizing its relevant
properties, which are then checked explicitly.
This analysis also serves to illustrate the strengthened Condorcet notions discussed in the paper.

\subsection{Analysis of Table~\ref{tab:ex_scw_no_pscw_rules_not_CM}}\label{sec:appendix_analysis_table_1}

\tableForceNumber{tab:ex_scw_no_pscw_rules_not_CM}{%
\tabExScwNoPscwRulesNotCM{tab:ex_scw_no_pscw_rules_not_CM_remind}{app}%
}

The profile in Table~\ref{tab:ex_scw_no_pscw_rules_not_CM_remind} consists of $n(P) = w(P) = 11$ voters.

Candidate $A$ is the SCW, as defined in Equation~\eqref{eq:def_scw}, since
\[
\begin{aligned}
s_\Plu(A, P_{\{A,B,C\}}) &= 4 > 11/3,\\
s_\Plu(A, P_{\{A,B\}})   &= 6 > 11/2,\\
s_\Plu(A, P_{\{A,C\}}) &= 9 > 11/2.
\end{aligned}
\]

Candidate $A$ is not a PSCW, as defined in Equation~\eqref{eq:def_pscw}, since
\[
w(P^{A \succ B \text{ and } A \succ C}) = 4
\quad\text{and}\quad
w(P^{B \succ A}) = 5,
\]
hence \( w(P^{A \succ B \text{ and } A \succ C}) \le w(P^{B \succ A}) \).
Intuitively, if all voters preferring $B$ to $A$ demote $A$ below $C$ in their ranking, only the voters in $P^{A \succ B \text{ and } A \succ C}$ still support $A$ in the pairwise comparison against $C$, which then becomes a defeat more severe than that of $B$ against $A$.

If a candidate $A'$ is added just below $A$ in every ranking, it is straightforward to verify that $A$ remains an SCW but is still not a PSCW.

Since all the rules mentioned here are Condorcet-consistent, $A$ is elected. We now show that the profile is immune to coalitional manipulation for all these rules.

Consider a manipulation attempt in favor of $B$. Only the voters in the first column can participate in such a manipulation. However, they cannot prevent $B$ from being a Condorcet loser, i.e., a candidate who suffers pairwise defeats against all others. Since $m=3$ and $n$ is odd, this implies that another candidate must be the Condorcet winner and is therefore elected. For Copeland and Slater, the example includes the additional candidate $A'$; it then suffices to note that these rules can never elect a Condorcet loser, which prevents $B$ from winning.

We now consider a manipulation attempt in favor of $C$, which may involve the voters in the third column. Table~\ref{tab:ex_scw_no_pscw_rules_not_CM_manipulated_wmm} reports bounds on the entries of the weighted majority matrix of a potential target profile~$Q$.
When displaying a majority matrix, we always omit the diagonal coefficients for legibility.
\begin{table}[ht]
\centering
\caption{Bounds on the weighted majority matrix of the target profile $Q$ in a manipulation attempt for $C$ from Table~\ref{tab:ex_scw_no_pscw_rules_not_CM_remind}.}
\(\begin{array}{c|c|c|c|}
	  & A    & B    & C     \\ \hline
	A &      & [4, 6] & [9, 11] \\ \hline
	B & [5,7] &      & [5,7]  \\ \hline
	C & [0,2] & [4,6] &       \\ \hline
\end{array}\)
\label{tab:ex_scw_no_pscw_rules_not_CM_manipulated_wmm}
\end{table}%

We now show that, for each rule listed in the caption of Table~\ref{tab:ex_scw_no_pscw_rules_not_CM_remind}, this manipulation attempt fails, as candidate $C$ cannot be elected in $Q$.

\paragraph{Max, RP, Sch, SC}
$\min\{W(A,B,Q),\, W(A,C,Q)\} > W(C,A,Q)$. Since all four rules are Maximin-like, $C$ cannot win in $Q$.

\paragraph{Vie}
The manipulators cannot prevent $C$ from being selected for the first-round duel, so $C$ must face $A$ and then $B$, or vice versa, in pairwise comparisons. However, $C$ necessarily loses its pairwise comparison against $A$, which prevents it from being elected.

\paragraph{You}
We bound the penalties after manipulation. For $A$, we can keep at least the voters of the second column and three others, that is, 7 voters in total; hence $p_\You(A, Q) \le 11 - 7 = 4$. For $C$, to win against $A$, at most the two voters of the last column and one additional voter can stay, yielding $p_\You(C, Q) \ge 11 - 3 = 8$. Therefore, $p_\You(C, Q) > p_\You(A, Q)$.

\paragraph{Bal, Nan}
We have $s_\Bor(A, Q) \ge 13$, $s_\Bor(B, Q) \ge 10$, and finally $s_\Bor(C, Q) \le 8$. Hence, $C$ is eliminated in the first round.

\paragraph{Dod, SD}
We bound the penalties after manipulation. Candidate $C$ needs at least four swaps to win against $A$, hence $p_f(C, Q) \ge 4$. Candidate $A$ already beats $C$, and two swaps among the voters in the left column are sufficient for $A$ to win against $B$, so $p_f(A, Q) \le 2$. Therefore, $p_f(C, Q) > p_f(A, Q)$.

\paragraph{Kem}
If $C$ were the winner, moving it to the bottom of the ranking would change the penalty by an amount 
\(
W(C,A,Q) + W(C,B,Q) - W(A,C,Q) - W(B,C,Q) \le 2 + 6 - 9 - 5 = -6 < 0.
\)
There would thus exist a strictly better ranking, contradicting the optimality
of the winning ranking.

\paragraph{Cop, Sla}
As with all Condorcet-consistent voting rules, the existence of an RCW implies that Copeland and Slater are immune to coalitional manipulation. This example will show that the converse is false: these rules are immune to CM, even though the winner~$A$ is not an RCW, and in fact not even a PSCW.

Recall that we consider the version of the example including the additional candidate $A'$. Candidate $C$ has at most one victory (against $B$), and thus cannot win under Copeland, since another candidate has at least two victories.  
For Slater with $m=4$, a candidate with at most one victory can likewise never win, as we now show.  
If a candidate has three victories, it is the Condorcet winner and therefore wins. Otherwise, the vector of numbers of victories (Copeland scores) is either $(2,2,2,0)$ or $(2,2,1,1)$.  
The first case is straightforward, as there is then a Condorcet loser. Let us therefore examine the second one.  
Label the two candidates with two victories $a$ and $b$ such that $a$ beats $b$. Consequently, the two victories of $b$ are against the other two candidates, denoted $c$ and~$d$.  
Without loss of generality, assume that $c$ beats $d$; then the only victory of $d$ must be against $a$.  
It is easy to verify that the order $(a \succ b \succ c \succ d)$ is the unique ranking with minimal Slater penalty, equal to~1.
Hence, none of the candidates with a single victory can be the Slater winner.
Finally, it remains to check that no manipulation in favor of $A'$ is possible, which is trivial since no voter prefers $A'$ to $A$.

\subsection{Analysis of Table~\ref{tab:ex_you_is_not_maximin_like}}\label{sec:appendix_analysis_table_2}

\tableForceNumber{tab:ex_you_is_not_maximin_like}{%
\tabExYouIsNotMaximinLike{tab:ex_you_is_not_maximin_like_remind}{app}%
}

\begin{table}[ht]
\centering
\caption{Weighted majority matrix of the profile in Table~\ref{tab:ex_you_is_not_maximin_like_remind}.}
\(\begin{array}{c|c|c|c|c|c|}
	    & A  & B   & C_1 & C_2 & C_3 \\ \hline
	A   &    & 102 & 78  & 78  & 78  \\ \hline
	B   & 72 &     & 126 & 126 & 126 \\ \hline
	C_1 & 96 & 48  &     & 87  & 87  \\ \hline
	C_2 & 96 & 48  & 87  &     & 87  \\ \hline
	C_3 & 96 & 48  & 87  & 87  &     \\ \hline
\end{array}\)
\label{tab:ex_you_is_not_maximin_like_wmm}
\end{table}%

We now turn to the example of Table~\ref{tab:ex_you_is_not_maximin_like_remind}, whose weighted majority matrix is given in Table~\ref{tab:ex_you_is_not_maximin_like_wmm}. We have $\min_e W(A, e, P) = 78$ and $W(B, A, P) = 72$, hence $\min_e W(A, e, P) > W(B, A, P)$.

\paragraph{You}
For candidate $A$, it is most effective to remove voters from the second column. 
To make $A$ win against each candidate $C_i$, more than $96 - 78 = 18$ points
must be removed in each of the three corresponding pairwise comparisons, that
is, more than $54$ points in total.  
However, each ballot removed deprives the $C_i$’s of two points while also
costing $A$ one point against a $C_i$.
Hence, more than 54 ballots must be removed, so $p_\You(A, P) > 54$.
For candidate $B$, it suffices to remove the voters of the first column and one additional arbitrary voter, giving $p_\You(B, P) = 31$.  
For each candidate $C_i$, it is necessary to win against $B$, so $p_\You(C_i, P) > 126 - 48 = 78$.  
Therefore, $B$ is elected.

\paragraph{Vie}
The first-round plurality scores are $\{A: 30, B: 72, C_1: 24, C_2: 24, C_3: 24\}$. One candidate $C_k$ loses to some $C_j$.  
At the second round, the plurality scores are $\{A: 30, B: 72, C_i: 36, C_j: 36\}$.  
Candidates $A$ and, say, $C_j$ are selected, and $A$ is eliminated.  
Since $B$ is the Condorcet winner in the remaining profile, it is elected.

\subsection{Analysis of Table~\ref{tab:ex_pscw_no_sscw}}\label{sec:appendix_analysis_table_3}

\tableForceNumber{tab:ex_pscw_no_sscw}{%
\tabExPscwNoSscw{tab:ex_pscw_no_sscw_remind}{app}%
}

In the example of Table~\ref{tab:ex_pscw_no_sscw_remind}, we have
\[
\begin{aligned}
w(P^{A \succ B \text{ and } A \succ C}) &= 8, \qquad\qquad w(P^{B \succ A}) = 6,\\
w(P^{A \succ B \text{ and } A \succ D}) &= 7, \qquad\qquad w(P^{C \succ A}) = 5,\\
w(P^{A \succ C \text{ and } A \succ D}) &= 8, \qquad\qquad w(P^{D \succ A}) = 6,\\
\end{aligned}
\]
hence $w(P^{A \succ B \text{ and } A \succ C}) > w(P^{B \succ A})$, and similarly for any pair of candidates distinct from~$A$.  
Therefore, $A$ is the PSCW, as defined in Equation~\eqref{eq:def_pscw}.

On the other hand, evaluating Equation~\eqref{eq:def_sscw_computable} for the candidate of interest $c = A$ and the opponent $d = B$, we obtain
\[
\begin{aligned}
& w(P^{A \succ B}) - \tfrac{19}{2} + w(P^{A \succ B \text{ and } A \succ C}) - \tfrac{19}{2} + w(P^{A \succ B \text{ and } A \succ D}) - \tfrac{19}{2} \\
&= 13 - \tfrac{19}{2} + 8 - \tfrac{19}{2} + 7 - \tfrac{19}{2} \\
&= - \tfrac{1}{2},
\end{aligned}
\]
hence $A$ is not an SSCW.
Intuitively, in a manipulation attempt in favor of $B$, the manipulators can make
the cumulative defeats of $A$ against $C$ and $D$ outweigh the defeat that $B$
suffers against $A$.

In Viennot, candidate~$A$ wins since it is the Condorcet winner.  
If the voters in the second column move $A$ to the bottom of their ranking, the first round still selects $A$ and $C$, but $A$ is then eliminated.  
Candidate~$B$ becomes the Condorcet winner in the remaining profile and is therefore elected.

\subsection{Analysis of Table~\ref{tab:ex_sscw_no_rcw_dod_cm}}\label{sec:appendix_analysis_table_4}

\tableForceNumber{tab:ex_sscw_no_rcw_dod_cm}{%
\tabExSscwNoRcwDodCM{tab:ex_sscw_no_rcw_dod_cm_remind}{app}%
}

We now examine the example of Table~\ref{tab:ex_sscw_no_rcw_dod_cm_remind}. Evaluating Equation~\eqref{eq:def_sscw_computable} for the candidate of interest $c = A$ and the opponent $d = B$, we obtain
\[
\begin{aligned}
& w(P^{A \succ B}) - 50 + \sum_e \min(0,\, w(P^{A \succ B \text{ and } A \succ e}) - 50) \\
&= (64 - 50) + \min(0, 40 - 50) + \min(0, 52 - 50) + \min(0, 52 - 50) \\
&= 4 > 0,
\end{aligned}
\]
which satisfies the SSCW condition. Similarly, the same condition can be verified for opponents $C$ and $D$.  
Therefore, $A$ is an SSCW.

However, we have
\[
w(P^{A \succ B \text{ and } A \succ C}) = 40 \le 50,
\]
hence $A$ is not an RCW.

Let us show that Dodgson is susceptible to coalitional manipulation in favor of $B$.  
Consider a target profile $Q$ where all voters in the second column simply move $A$ to the bottom of their ranking.  
Candidate $B$ already beats $C$, $D$, and $D'$, and overturning the defeat of $B$ against $A$ can be achieved with 15 swaps in the first column. Hence $p_\Dod(B, Q) = 15$.  
Candidate $A$ has only 40 points against $C$, thus requiring at least $51 - 40 = 11$ useful swaps. The most effective operation is to perform them in one of the two last columns, but each useful swap must first be preceded by a useless swap between $A$ and $D$ or $D'$, so $p_\Dod(A, Q) \ge 22$.  
Candidate $C$ has only 24 points against $B$, hence $p_\Dod(C, Q) \ge 27$.  
Candidate $D$ has only 12 points against $B$, hence $p_\Dod(D, Q) \ge 39$, and similarly for $D'$.  
Therefore, $B$ is elected.

\subsection{Analysis of Table~\ref{tab:ex_rcw_no_scw}}\label{sec:appendix_analysis_table_5}

\tableForceNumber{tab:ex_rcw_no_scw}{%
\tabExRcwNoScw{tab:ex_rcw_no_scw_remind}{app}%
}

In the example of Table~\ref{tab:ex_rcw_no_scw_remind}, for any $i\neq j$ we have
$w(P^{A \succ B_i \text{ and } A \succ B_j}) = 5 > 9/2$; hence candidate~$A$ is
the RCW, as defined in Equation~\eqref{eq:def_rcw}. 
On the other hand, $s_\Plu(A, P) = 1 \le 9/5$, so candidate~$A$ violates the SCW condition
defined in Equation~\eqref{eq:def_scw} for the set $K = \mathcal{C}(P)$.

\section{Theoretical Preliminaries}\label{sec:appendix_theo_preliminaries}

Section~\ref{sec:appendix_manipulation_lemmas} states a collection of lemmas that are used repeatedly to establish the asymptotic behavior of the different voting rules.
Section~\ref{sec:analysis_sincere_profile} analyzes the expected profile.
Section~\ref{sec:analysis_sincere_voters} then restricts this profile to voters who prefer candidate~1 to candidate~2 and therefore remain sincere under a manipulation in favor of~2 against~1.
Finally, Section~\ref{sec:analysis_um_2} studies a simple manipulation strategy by voters preferring~2 to~1, which suffices to establish the sub-critical regime for a significant number of voting rules. We assume throughout that $m \ge 3$, as the case $m=2$ is trivial for all rules in our analysis.

\subsection{Manipulation Lemmas}\label{sec:appendix_manipulation_lemmas}

The proofs of \citet{durand2025irv} rely on three lemmas.
We restate them here with minor rephrasing, in order to facilitate their later generalization.
Recall that $\hat{P}$ denotes the \emph{expected normalized profile} (or simply the \emph{expected profile}), in which each ranking has weight $\tfrac{1-\theta}{m!}$, except for the reference ranking $(1 \succ \cdots \succ m)$, which has weight $\theta + \tfrac{1-\theta}{m!}$.

\begin{lemma}[Non-CM \cite{durand2025irv}]\label{lem_not_CM_in_neighborhood}
	Assume there exists a neighborhood~$\mathcal{N}$ of the expected normalized profile~$\hat{P}$ such that, for any profile~$P$, if $P$ lies in~$\mathcal{N}$, then the homogeneous rule~$f$ is non-CM.
	
	Then \( \lim_{n \to \infty} \rho(f, m, n, \theta) = 0\).
\end{lemma}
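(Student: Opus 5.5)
Since $f$ is homogeneous, whether a discrete profile $P$ is CM in $f$ depends only on its normalized profile $\bar{P}$. The plan is to reduce $\rho(f,m,n,\theta)$ to the probability that $\bar{P}$ leaves the neighborhood $\mathcal{N}$, and then to bound that probability with the weak law of large numbers. In principle, CM is defined on the discrete profile through target profiles $Q$ with the same voters. We read the hypothesis ``if $P$ lies in $\mathcal{N}$, then $f$ is non-CM in $P$'' as applying to every profile, discrete or continuous, whose normalized version lies in $\mathcal{N}$. So no translation between discrete and continuous manipulability is needed: the hypothesis is applied directly to the random discrete profile.

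\textbf{Step 1: bound the CM rate.} Identify a continuous profile on $m$ candidates with its vector of normalized weights in the simplex $\{x\in[0,1]^{m!}:\sum_r x_r=1\}\subset\mathbb{R}^{m!}$. Since $\mathcal{N}$ is a neighborhood of $\hat{P}$, there is $\epsilon>0$ such that every normalized profile with $\max_r |w(r,\bar{P})-w(r,\hat{P})|<\epsilon$ lies in $\mathcal{N}$. By the hypothesis, for the random profile $P$ drawn from the Perturbed Culture with $n$ voters,
\[
\rho(f,m,n,\theta)\le \Pr\Big(\max_r \big|w(r,\bar{P})-w(r,\hat{P})\big|\ge \epsilon\Big)\le \sum_r \Pr\Big(\big|w(r,\bar{P})-w(r,\hat{P})\big|\ge\epsilon\Big).
\]
The second inequality is a union bound over the $m!$ rankings.

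\textbf{Step 2: apply the law of large numbers.} For each ranking $r$, $w(r,\bar{P})=\frac1n\sum_{v=1}^n \mathbf{1}[P_v=r]$ is an average of i.i.d.\ Bernoulli variables. Each has mean exactly $w(r,\hat{P})$: either $\frac{1-\theta}{m!}$, or $\theta+\frac{1-\theta}{m!}$ for the reference ranking. The weak law of large numbers, or simply Chebyshev's inequality with variance at most $\frac{1}{4n}$, sends each term to $0$. Since $m$ is fixed, the sum has finitely many terms, so $\lim_{n\to\infty}\rho(f,m,n,\theta)=0$. Hoeffding's inequality gives an exponential bound $2\,m!\,e^{-2n\epsilon^2}$ instead, which is the exponential convergence claimed in the main text.

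\textbf{Main obstacle.} The probabilistic part is routine. The only delicate point is the bookkeeping in Step 1: being precise that $\mathcal{N}$ is a neighborhood in the finite-dimensional space of normalized weight vectors, and that the hypothesis covers discrete profiles through their normalization, which homogeneity guarantees. Once this is fixed, the result is the finite union bound combined with concentration of empirical frequencies.
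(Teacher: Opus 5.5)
Your probabilistic core is correct and matches the paper's argument, which it takes from \citet{durand2025irv}: the weak law of large numbers, with Hoeffding giving the exponential rate. Once you know that $\bar{P}\in\mathcal{N}$ forces the discrete profile $P$ to be non-CM, your union bound over the $m!$ rankings plus Chebyshev or Hoeffding finishes the proof.

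The gap is in how you obtain that implication. Your opening claim, that for a homogeneous rule the CM status of a discrete profile depends only on $\bar{P}$, is false. From $P$, the reachable targets have normalized weights that are multiples of $1/n$, whereas from $kP$ they lie on a finer grid, so $P$ can be non-CM while $kP$ is CM. The paper contains an instance. Take $\alpha$-Copeland with $m=4$, $\alpha<1$, the homogeneity-preserving tie-breaking $T'$, and $\theta<\tfrac14$. A profile near $\hat{P}$ with $n$ odd is non-CM, whereas its duplicate $2P$ is CM, because the manipulators can then force a tie between candidates $3$ and $4$.

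You then sidestep the issue by declaring that the hypothesis already applies to discrete profiles. Under that reading homogeneity plays no role, and the statement becomes the generalized Lemma~\ref{lem_not_CM_in_neighborhood_generalized}. That is not how the lemma is used. For Coombs, Borda, Kim--Roush or Viennot, what the paper actually verifies is that continuous profiles near $\hat{P}$ admit no continuous manipulation.

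The missing step is short, and it is exactly where homogeneity enters, in one direction only. Suppose $P$ is CM via a discrete target $Q$ with the same voters. Then $\bar{P}$ is CM in the continuous sense via $\bar{Q}$:
\begin{itemize}
\item $\bar{Q}$ has the same candidates and total weight $1$;
\item homogeneity gives $f(\bar{Q})=f(Q)\neq f(P)=f(\bar{P})$;
\item if $w(r,\bar{Q})<w(r,\bar{P})$, then $w(r,Q)<w(r,P)$, so some voter with sincere ranking $r$ changed ballot, and hence $r$ places $f(\bar{Q})$ above $f(\bar{P})$.
\end{itemize}
Contrapositively, $\bar{P}\in\mathcal{N}$ implies that $P$ is non-CM, which is what your Step~1 needs.

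The converse fails: a continuous manipulation of $\bar{P}$ need not be realizable with $n$ voters. This is why the UM and $\delta$-stable-CM lemmas need extra structure (scale-free unison ballots, robustness to rounding). The Non-CM lemma only requires the easy direction above, but you should state and prove it rather than assume it.
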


Before stating the second lemma, we recall the notion of \emph{unison manipulation} \cite{walsh2010empirical,durand2023coalitional,durand2025irv}.  
A voting rule~\( f \) is said to be \emph{unison-manipulable} (UM) in a profile~\( P \) (or equivalently, \( P \) is UM under~\( f \)) if a manipulation can succeed even when all interested voters cast the same ballot (see Section~\ref{sec:analysis_um_2} for an example).

\begin{lemma}[UM \cite{durand2025irv}]\label{lem_um_in_neighborhood}
	Assume there exists a neighborhood~$\mathcal{N}$ of the expected normalized profile~$\hat{P}$ such that, for any profile~$P$, if $P$ lies in~$\mathcal{N}$, then the homogeneous rule~$f$ is UM.
	
	Then \( \lim_{n\to\infty} \rho(f, m, n, \theta) = 1 \).
\end{lemma}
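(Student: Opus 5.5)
We need to prove Lemma (UM): if every profile in a neighborhood $\mathcal{N}$ of $\hat{P}$ is UM for the homogeneous rule $f$, then $\rho(f,m,n,\theta)\to 1$.

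The plan reduces the statement to the weak law of large numbers, using homogeneity to pass from discrete profiles to normalized ones. First, fix a random discrete profile $P$ with $n$ voters drawn from the Perturbed Culture. Its normalized profile $\bar{P}$ is a vector of $m!$ empirical frequencies of the rankings. Each of these frequencies is an average of $n$ i.i.d.\ Bernoulli variables whose mean is the corresponding weight in $\hat{P}$. By the weak law of large numbers, applied coordinatewise and combined with a union bound over the finitely many rankings, we get $\Pr(\bar{P}\in\mathcal{N})\to 1$ as $n\to\infty$. For this we take $\mathcal{N}$ to contain a box of radius $\epsilon$ around $\hat{P}$ in the space of normalized weight vectors. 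Hoeffding's inequality even yields an exponential rate, which matches the remark in the main text.

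Second, we transfer the UM property from $\bar{P}$ to $P$. Suppose $\bar{P}\in\mathcal{N}$, so that $\bar{P}$ is UM for $f$. By definition, there is a candidate $d\neq f(\bar{P})$ and a single ballot $b$ with the following property: if all the weight of rankings preferring $d$ to $f(\bar{P})$ is moved to $b$, the resulting continuous profile $\bar{Q}$ satisfies $f(\bar{Q})=d$. We build the discrete profile $Q$ by letting exactly the voters of $P$ who prefer $d$ to $f(P)$ cast $b$. By homogeneity $f(P)=f(\bar{P})$, so these are the same interested voters. The normalized profile of $Q$ is then exactly $\bar{Q}$, hence $f(Q)=d$. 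Every voter who changed their ballot prefers $d$ to $f(P)$, so $P$ is CM, and indeed UM. It follows that $\rho(f,m,n,\theta)\ge\Pr(\bar{P}\in\mathcal{N})\to 1$.

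The main point needing care is this second step. The unison manipulation must be defined so that it commutes with normalization: the manipulating coalition consists of all voters preferring $d$ to the winner, and they all move to one ballot. Otherwise the continuous target might need fractional weights that are not realizable with $n$ voters. Because the coalition is "all interested voters" and homogeneity makes $f$ depend only on proportions, the discrete target corresponds exactly to the continuous one, so no rounding issue arises. A minor additional point is that $\mathcal{N}$ is an open set in the simplex of normalized profiles, so it contains such an $\epsilon$-box.
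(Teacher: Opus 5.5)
Your proof is correct and follows essentially the same approach as the paper, which defers to \citet{durand2025irv} and only notes that the result follows from the weak law of large numbers, with an exponential rate via Hoeffding's inequality. Your second step also identifies correctly why no stability condition is needed here, unlike in Lemma~\ref{lem_delta_stable_cm}: a unison manipulation by \emph{all} interested voters commutes with normalization, so homogeneity transfers it exactly to the discrete profile with no rounding issue.
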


Before introducing the third lemma, we recall the notion of \(\delta\)-stable coalitional manipulability \cite{durand2025irv}, for any \(\delta > 0\).
A rule~\(f\) is said to be \emph{\(\delta\)-stable-CM} in a continuous profile~\(P\) (or equivalently, \( P \) is \(\delta\)-stable-CM in~\( f \)) if there exists a continuous profile~\(Q\) such that:
\begin{itemize}
	\item \(f\) is CM from \(P\) to \(Q\), and
	\item for any profile~\(Q'\) with \(d_\infty(Q, Q') < \delta\), we have \(f(Q') = f(Q)\).
\end{itemize}
Here, \(d_\infty(Q, Q')\) denotes the \(\ell^\infty\) distance between profiles, viewed as vectors of weights. Intuitively, $f$ is CM from $P$ to $Q$ with an outcome that is stable close enough to~$Q$.

\begin{lemma}[\(\delta\)-stable-CM \cite{durand2025irv}]\label{lem_delta_stable_cm}
	Assume there exist \(\delta > 0\) and a neighborhood~$\mathcal{N}$ of the expected normalized profile~\(\hat{P}\) such that, for any profile~$P$, if $P$ lies in~$\mathcal{N}$, then the homogeneous rule~\(f\) is \(\delta\)-stable-CM.
	
	Then \(\lim_{n\to\infty} \rho(f, m, n, \theta) = 1\).
\end{lemma}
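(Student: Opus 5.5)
The statement is Lemma~\ref{lem_delta_stable_cm}. I would prove it by combining the weak law of large numbers with a rounding argument that turns the continuous manipulation $P\to Q$ into a discrete one. The first step is to fix $\delta>0$ and the neighborhood $\mathcal{N}$ from the hypothesis. Shrinking $\mathcal{N}$ if needed, I assume it contains an $\ell^\infty$-ball of radius $\eta>0$ around $\hat{P}$. Let $P_n$ be the random discrete profile and $\bar{P}_n$ its normalization. By the weak law of large numbers, applied to each of the finitely many ranking weights, $\Pr(d_\infty(\bar{P}_n,\hat{P})<\eta)\to1$. So it suffices to show the following: for all $n$ large enough, uniformly in the profile, every discrete $P$ with $\bar{P}\in\mathcal{N}$ is CM in $f$.

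For the second step, fix such a discrete $P$ with $n$ voters. By hypothesis there is a continuous $Q$ (normalized, total weight~1) such that $f$ is CM from $\bar{P}$ to $Q$ and $f(Q')=f(Q)$ whenever $d_\infty(Q,Q')<\delta$. Write $c=f(\bar{P})=f(P)$, using homogeneity, and $d=f(Q)\neq c$. The continuous CM condition says: whenever $w(r,Q)<w(r,\bar{P})$, the ranking $r$ places $d$ above $c$. I then build a discrete target $Q_n$ on the same voters as follows. For each ranking $r$ with $w(r,Q)<w(r,\bar{P})$, choose $\lceil n\,(w(r,\bar{P})-w(r,Q))\rceil$ or $\lfloor\cdot\rfloor$ voters of ranking $r$ (never more than $w(r,P)$) to change their ballot. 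Redistribute these voters to the rankings with $w(r,Q)>w(r,\bar{P})$, with integer counts rounding $n\,(w(r,Q)-w(r,\bar{P}))$. The rounding must be chosen so that the totals balance, for example by a largest-remainder scheme, so that each normalized weight of $Q_n$ is within $m!/n$ of $w(r,Q)$. Rankings with equal weights in $\bar{P}$ and $Q$ are left untouched. Only voters whose true ranking prefers $d$ to $c$ change their ballot. Once $m!/n<\delta$, i.e.\ for $n$ large enough independently of $P$, we get $d_\infty(\bar{Q}_n,Q)<\delta$. Hence $f(Q_n)=f(\bar{Q}_n)=d$ by homogeneity and stability, and $P$ is CM.

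The main obstacle I expect is the bookkeeping in this rounding step. The decreases and increases must round to integers that sum to the same number of moved voters, no ranking may lose more voters than it has, and all errors must stay below $\delta$ uniformly in $P$. The uniformity is what allows us to conclude that $\rho(f,m,n,\theta)\ge\Pr(\bar{P}_n\in\mathcal{N})\to1$. Handling rankings where $w(r,Q)$ is only slightly below $w(r,\bar{P})$, so that rounding could give zero, is harmless: leaving them unchanged still keeps the error below $1/n$.
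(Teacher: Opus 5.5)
Your proof is correct and follows the same route the paper relies on (inherited from \citet{durand2025irv}): the weak law of large numbers places $\bar{P}_n$ in $\mathcal{N}$ with high probability, and the $\delta$-stability margin absorbs the $O(1/n)$ error from discretizing the continuous target $Q$ into a target with $n$ voters, uniformly in $P$. The rounding bookkeeping you worry about is immediate: round each $n\,w(r,Q)$ to its floor or ceiling so that the counts sum to $n$. This automatically keeps the count at most $w(r,P)$ when $w(r,Q)<w(r,\bar{P})$, at least $w(r,P)$ when $w(r,Q)>w(r,\bar{P})$, and exactly $w(r,P)$ when the two are equal, with normalized error below $1/n$.
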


The general proof strategy is as follows, for some value~$\theta^*$ conjectured to be the critical concentration parameter:
\begin{itemize}
\item show that candidate~1 wins in a neighborhood of~$\hat{P}$;
\item show that for $\theta > \theta^*$ the rule is non-CM in a neighborhood of~$\hat{P}$, allowing us to apply Lemma~\ref{lem_not_CM_in_neighborhood};
\item show that for $\theta < \theta^*$ the rule is UM or \(\delta\)-stable-CM in a neighborhood of~$\hat{P}$, allowing us to apply Lemma~\ref{lem_um_in_neighborhood} or~\ref{lem_delta_stable_cm}.
\end{itemize}
We may then conclude that the rule admits the critical parameter $\theta_c(f, m) = \theta^*$. Variants of this strategy are sometimes required, for instance when candidate~1 does not win in a neighborhood of~$\hat{P}$ (see Section~\ref{sec:proof_bucklin}).

If a rule is defined only on discrete profiles, as is the case for Young, Dodgson, and Simplified Dodgson, we proceed similarly.  
However, we cannot work directly with~$\hat{P}$ or with continuous profiles in its neighborhood.  
Instead, we consider discrete profiles~$P$ whose normalized version~$\bar{P}$ lies close to~$\hat{P}$.  
We then use the following generalized lemmas, which are proved in exactly the same way using the weak law of large numbers.  
Differences with the original statements are highlighted in bold.

\begin{lemma}[Non-CM, generalized version]\label{lem_not_CM_in_neighborhood_generalized}
Assume there exists a neighborhood~$\mathcal{N}$ of the expected normalized profile~$\hat{P}$ such that, for any profile~$P$, if \textbf{its normalized version~$\bar{P}$ lies in~$\mathcal{N}$ and $n(P)$ is large enough},  
then the \textbf{(not necessarily homogeneous)} rule~$f$ is non-CM.  

Then \( \lim_{n \to \infty} \rho(f, m, n, \theta) = 0\).
\end{lemma}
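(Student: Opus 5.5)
The plan is to follow the proof of Lemma~\ref{lem_not_CM_in_neighborhood} from \citet{durand2025irv} essentially verbatim. The only change is to replace the condition ``$P\in\mathcal{N}$'' by ``$\bar{P}\in\mathcal{N}$ and $n(P)\ge n_0$''.

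First, fix the neighborhood $\mathcal{N}$ and the threshold $n_0$ given by the hypothesis. Since $\mathcal{N}$ is a neighborhood of $\hat{P}$ in the finite-dimensional space of weight vectors indexed by the $m!$ rankings, there exists $\epsilon>0$ such that every continuous profile $R$ with total weight~$1$ and $d_\infty(R,\hat{P})<\epsilon$ lies in~$\mathcal{N}$.

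Second, let $P$ be the random discrete profile drawn from the Perturbed Culture with parameters $m,n,\theta$. For each ranking~$r$, the weight $w(r,P)$ is a sum of $n$ i.i.d.\ Bernoulli variables with mean $w(r,\hat{P})$, so $w(r,\bar{P})=w(r,P)/n$ is their empirical mean. By the weak law of large numbers, $\Pr\bigl(|w(r,\bar P)-w(r,\hat P)|\ge\epsilon\bigr)\to 0$ for each of the finitely many rankings~$r$. A union bound over the $m!$ rankings then gives $\Pr\bigl(d_\infty(\bar P,\hat P)\ge\epsilon\bigr)\to0$ as $n\to\infty$. If an exponential rate is wanted, Hoeffding's inequality gives it directly.

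Third, for every $n\ge n_0$, the event $\{\bar P\in\mathcal{N}\}$ is contained, by hypothesis, in the event that $f$ is non-CM in~$P$. Note that $n(P)=n$ is deterministic here, so the condition ``$n(P)$ large enough'' is automatically satisfied once $n\ge n_0$. Hence
\[
\rho(f,m,n,\theta)\le \Pr\bigl(\bar P\notin\mathcal{N}\bigr)\le \Pr\bigl(d_\infty(\bar P,\hat P)\ge\epsilon\bigr)\xrightarrow[n\to\infty]{}0 .
\]
Since $\rho\ge0$, the limit is $0$. Homogeneity of $f$ is never used: the hypothesis is stated directly on the discrete profile~$P$, and $\bar P$ serves only as a device to measure closeness to~$\hat P$.

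The only point requiring care is to read ``$n(P)$ large enough'' as a uniform threshold $n_0$, independent of the particular $P$ with $\bar P\in\mathcal{N}$. That is how the hypothesis should be understood. Otherwise, a per-profile threshold could escape to infinity, and the inclusion of events in the third step would fail for every fixed~$n$. With the uniform reading, the rest is routine.
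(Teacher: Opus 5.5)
Your proof is correct and follows the same route as the paper, which only says that the generalized lemma is ``proved in exactly the same way using the weak law of large numbers'' (with Hoeffding's inequality for the exponential rate); your steps are that argument written out: an $\ell^\infty$-ball around $\hat{P}$ inside $\mathcal{N}$, the coordinatewise law of large numbers with a union bound over the $m!$ rankings, and event inclusion for $n \ge n_0$. Your reading of ``$n(P)$ large enough'' as a threshold uniform over profiles with $\bar{P}\in\mathcal{N}$ is the right one, and it is the form the hypothesis takes where the paper uses it (the Dodgson supercritical regime).
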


\begin{lemma}[UM, generalized version]\label{lem_um_in_neighborhood_generalized}
Assume there exists a neighborhood~$\mathcal{N}$ of the expected normalized profile~$\hat{P}$ such that, for any profile~$P$, if \textbf{its normalized version~$\bar{P}$ lies in~$\mathcal{N}$ and $n(P)$ is large enough}, then the \textbf{(not necessarily homogeneous)} rule~$f$ is~UM.

Then \( \lim_{n\to\infty} \rho(f, m, n, \theta) = 1 \).
\end{lemma}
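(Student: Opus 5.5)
We mirror the proof of Lemma~\ref{lem_um_in_neighborhood} from \citet{durand2025irv}, replacing the continuous neighborhood by a condition on the normalized profile. Let $P_n$ denote the random discrete profile drawn from the Perturbed Culture with $n$ voters. Its normalized version $\bar{P}_n$ is the vector of empirical frequencies of the $m!$ rankings. These frequencies are averages of $n$ i.i.d.\ bounded indicator vectors whose expectation is exactly $\hat{P}$.

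The first step applies the weak law of large numbers to each of the finitely many coordinates, with a union bound. This shows that for any neighborhood $\mathcal{N}$ of $\hat{P}$,
\[
\lim_{n\to\infty}\Pr\bigl(\bar{P}_n\in\mathcal{N}\bigr)=1 .
\]
Since $\mathcal{N}$ contains an $\ell^\infty$ ball around $\hat{P}$, it suffices to control each coordinate deviation. Hoeffding's inequality would even give exponential convergence, consistent with the remark in Section~\ref{sec:main_theorem}.

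The second step uses the hypothesis. Let $n_0$ be such that every profile $P$ with $\bar{P}\in\mathcal{N}$ and $n(P)\ge n_0$ is UM under $f$. Being UM implies being CM: the unison ballot together with the unchanged ballots of the other voters is a target profile $Q$ witnessing coalitional manipulation. Hence, for $n\ge n_0$,
\[
\rho(f,m,n,\theta)\ \ge\ \Pr\bigl(\bar{P}_n\in\mathcal{N}\bigr),
\]
and the right-hand side tends to $1$. Since $\rho\le 1$, the limit equals $1$.

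Two points need care. First, homogeneity is never used: the hypothesis is stated directly for discrete profiles with $n(P)$ large, so the only role homogeneity played in the original lemma, transferring a statement about continuous profiles to discrete ones, is absorbed into the assumption. Second, ``large enough'' must mean a single threshold $n_0$ uniform over $\mathcal{N}$. I will read the hypothesis this way; it is how it is established for Young and the Dodgson rules. Beyond this, there is no real obstacle, since the argument is a direct law-of-large-numbers transfer.
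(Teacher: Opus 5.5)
Your proof is correct and follows the same route as the paper, which says these generalized lemmas are proved exactly as the originals via the weak law of large numbers. As you do, the argument places $\bar{P}_n$ in $\mathcal{N}$ with probability tending to $1$ and then uses that UM implies CM once $n$ exceeds a threshold. Your reading of ``large enough'' as a single threshold $n_0$, uniform over $\mathcal{N}$, is the right one, and it matches how the hypothesis is verified for Young, Dodgson, and Simplified Dodgson, where the margins are linear in $n(P)$ and the corrections are bounded constants.
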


In all cases, the convergence is exponentially fast, by the same concentration argument as in \citet{durand2025irv}, relying on Hoeffding’s inequality.

\subsection{Expected Profile}\label{sec:analysis_sincere_profile}

In this section, we analyze the expected profile~$\hat{P}$, which will be used in many subsequent proofs.  
Its weighted majority matrix $W(\hat{P})$ is shown in Table~\ref{tab:wmm_original}.  
Candidate~1 is the Condorcet winner; since this conclusion relies on strict inequalities, it remains valid in a neighborhood of~$\hat{P}$, and therefore also holds for the random profile~$P$ with high probability by the weak law of large numbers.  

\begin{table}[ht]
	\centering
	\caption{Weighted majority matrix $W(\hat{P})$ of the expected profile. Indices $j$ and $k$ denote generic candidates, with $2 < j < k$, and $k$ defined if $m \geq 4$.}\label{tab:wmm_original}
	\(\begin{array}{c|c|c|c|c|}
		& 1                     & 2                              & j                              & k                              \\ \hline
		1 &                      & \frac{1}{2}(1-\theta) + \theta & \frac{1}{2}(1-\theta) + \theta & \frac{1}{2}(1-\theta) + \theta \\ \hline
		2 & \frac{1}{2}(1-\theta) &                               & \frac{1}{2}(1-\theta) + \theta & \frac{1}{2}(1-\theta) + \theta \\ \hline
		j & \frac{1}{2}(1-\theta) & \frac{1}{2}(1-\theta)          &                               & \frac{1}{2}(1-\theta) + \theta \\ \hline
		k & \frac{1}{2}(1-\theta) & \frac{1}{2}(1-\theta)          & \frac{1}{2}(1-\theta)          &                               \\ \hline
	\end{array}\)
\end{table}

For two distinct opponents of candidate~1, denoted $c$ and~$d$, note that the $(1-\theta)$ purely random voters (the “IC part”) treat these candidates symmetrically, whereas the remaining $\theta$ voters (the “Dirac part”) all rank candidate~1 first. Hence,
\[
w(\hat{P}^{1 \succ c \text{ and } 1 \succ d}) = \tfrac{1}{3}(1-\theta) + \theta,
\]
a useful quantity that appears in several strengthened notions of Condorcet winner.

\subsection{Expected Profile Restricted to $1 \succ 2$}\label{sec:analysis_sincere_voters}

For most of the rules considered in this paper, candidate~1 is the winner in the expected profile~$\hat{P}$.  
Candidate~2 then naturally emerges as the main challenger, and we will frequently examine manipulation attempts in favor of~2.  
As a preliminary step, we study the contribution of the sincere voters, that is, those who rank candidate~1 above~2 in~$\hat{P}$.  
The corresponding weighted majority matrix~$W(\hat{P}^{1 \succ 2})$ is given in Table~\ref{tab:wmm_sincere_voters} for later reference.

\begin{table}[ht]
\centering
	\caption{Weighted majority matrix $W(\hat{P}^{1\succ 2})$ of the expected profile restricted to voters who rank candidate~1 above~2. Indices $j$ and $k$ denote generic candidates, with $2 < j < k$, and $k$ defined if $m \geq 4$.}\label{tab:wmm_sincere_voters}
	\(\begin{array}{c|c|c|c|c|}
		& 1                     & 2                              & j                              & k                              \\ \hline
		1 &                       & \frac{1}{2}(1-\theta) + \theta & \frac{1}{3}(1-\theta) + \theta & \frac{1}{3}(1-\theta) + \theta \\ \hline
		2 & 0                     &                                & \frac{1}{6}(1-\theta) + \theta & \frac{1}{6}(1-\theta) + \theta \\ \hline
		j & \frac{1}{6}(1-\theta) & \frac{1}{3}(1-\theta)          &                                & \frac{1}{4}(1-\theta) + \theta \\ \hline
		k & \frac{1}{6}(1-\theta) & \frac{1}{3}(1-\theta)          & \frac{1}{4}(1-\theta)          &                                \\ \hline
	\end{array}\)
\end{table}

\subsection{A Simple Manipulation Strategy}\label{sec:analysis_um_2}

When candidate~1 is the original winner, a simple manipulation strategy in favor of candidate~2 consists in casting the ballot $(2 \succ \cdots \succ m \succ 1)$.  
Let~$Q$ denote the profile obtained from~$\hat{P}$ when all voters who sincerely prefer candidate~2 to candidate~1 cast this ballot.  
The corresponding weighted majority matrix~$W(Q)$ is shown in Table~\ref{tab:wmm_um_2} and will be used in several proofs.

\begin{table}[ht]
\centering
	\caption{Weighted majority matrix $W(Q)$, with $Q$ derived from $\hat{P}$ when all voters sincerely preferring candidate~2 to~1 alter their ballots to $(2 \succ \ldots \succ m \succ 1)$. Indices $j$ and $k$ denote generic candidates, with $2 < j < k$, and $k$ defined if $m \geq 4$.}\label{tab:wmm_um_2}
	\(\begin{array}{c|c|c|c|c|}
		& 1                      & 2                               & j                               & k                               \\ \hline
		1 &                        & \frac{1}{2} (1-\theta) + \theta & \frac{1}{3} (1-\theta) + \theta & \frac{1}{3} (1-\theta) + \theta \\ \hline
		2 & \frac{1}{2} (1-\theta) &                                 & \frac{2}{3} (1-\theta) + \theta & \frac{2}{3} (1-\theta) + \theta \\ \hline
		j & \frac{2}{3} (1-\theta) & \frac{1}{3} (1-\theta)          &                                 & \frac{3}{4} (1-\theta) + \theta \\ \hline
		k & \frac{2}{3} (1-\theta) & \frac{1}{3} (1-\theta)          & \frac{1}{4} (1-\theta)          &                                 \\ \hline
	\end{array}\)
\end{table}

\section{Proofs of Theoretical Results}\label{sec:appendix_proof_of_theoretical_results}

This appendix provides proofs of all theoretical results stated in the paper, and also establishes the corresponding results for Split Cycle and the Viennot rule.
As in the previous section, we always assume $m \geq 3$.

\subsection{Maximin family: Maximin, Ranked Pairs, Schulze, Split Cycle, Young, Viennot}

We analyze here the Maximin family.
We first study the properties of Split Cycle and the Young rule in Sections~\ref{sec:split_cycle_is_maximin_like} and~\ref{sec:young_and_pscw}.
In Section~\ref{sec:critical_thetas_rule_related_to_pscw}, we then determine the common critical concentration parameter of the family, except for the Viennot rule, which is not structurally related to the Pair-Safe Condorcet Winner and is therefore treated separately in Section~\ref{sec:viennot}.

\subsubsection{Split Cycle is Maximin-like}\label{sec:split_cycle_is_maximin_like}

As for Maximin, Ranked Pairs, and Schulze (Proposition~\ref{thm:maximin_like_which_rules}), it is straightforward to verify that Split Cycle is Maximin-like.
Consequently, by Theorem~\ref{thm:maximin_like_rules_property}, if a PSCW exists, then Split Cycle is immune to coalitional manipulation.
Here again, the converse does not hold, as illustrated by Table~\ref{tab:ex_scw_no_pscw_rules_not_CM}, analyzed in Section~\ref{sec:appendix_analysis_table_1}: the profile admits no PSCW, yet Split Cycle remains immune to coalitional manipulation.

%\ThmMaximinLikeWhichRules*
%\ThmMaximinLikeRulesProperty*

\subsubsection{The Young rule and the PSCW}\label{sec:young_and_pscw}

In the main body, the Young rule was defined in the most common way, through the notion of penalty.  
For the purpose of the result below, it is convenient to express it in terms of a \emph{Young score}:
\[
s_\You(c, P) =
\begin{cases}
n(P) - p_\You(c, P), & \text{if $c$ is not a CW,}\\[4pt]
2 \min_{d \neq c}\big(w(P^{c \succ d})\big) - 1, & \text{if $c$ is a CW.}
\end{cases}
\]

The first expression conveys the main intuition: rather than counting the minimal number of voters that must be removed so that $c$ becomes a Condorcet winner, it represents the maximal number of voters that can be kept.  

The second expression can be interpreted through the following thought experiment.
In addition to the voters in~$P$, consider an infinite pool of virtual voters who all rank~$c$ last. We then ask for the maximal number of voters that can be selected so that~$c$ is a Condorcet winner, either from~$P$ or from this additional pool.
This thought experiment also applies to the first case. 

Finally, recall that if $c$ can never be a Condorcet winner, we defined by convention $p_\You(c, P) = n(P) + 1$.
This falls under the first case above and yields $s_\You(c, P) = -1$, which can be interpreted as follows: not only can no voter be selected from~$P$ or from the additional pool, but one would even need to add a hypothetical \emph{anti-voter} from the virtual pool, that is, a voter who ranks~$c$ first.

\medskip

This definition is convenient because it provides simple bounds.  
For any opponent $d$, candidate $c$ must defeat $d$ in pairwise comparison; hence
\begin{equation}\label{eq:young_score_upper_bound}
s_\You(c, P) \le 2\, w(P^{c \succ d}) - 1.
\end{equation}
On the other hand—and this is where the second case of our definition proves useful—considering the rankings where $c$ is placed first yields a lower bound on the Young score:
\begin{equation}\label{eq:young_score_lower_bound}
s_\You(c, P) \ge 2\, w(P^{r(c) = 1}) - 1.
\end{equation}
We will use variants of this bound in the proof below.

\ThmYoungNotMaximinLike*

We already proved that the Young rule is not Maximin-like (Table~\ref{tab:ex_you_is_not_maximin_like}, analyzed in Section~\ref{sec:appendix_analysis_table_2}). It remains to prove the second part of the proposition.

\begin{proof}[Proof of Proposition~\ref{thm:young_not_maximin_like}]
We actually prove a stronger statement: for any candidate~$d \neq c$, voters who prefer $d$ to $c$ cannot even make $d$ obtain a higher Young score than~$c$.  
For a proof by contradiction, assume that there exists a target profile $Q$ in which this occurs.

If we modify the profile $Q$ by moving $d$ up and $c$ down in the ranking of every manipulator, this can only increase the Young score of $d$ and decrease that of $c$.  
Hence, without loss of generality, we may assume that in $Q$ all manipulators rank $d$ first and $c$ last.  
This assumption also ensures that each voter keeps $c$ and $d$ in the same order as in $P$, making it easy to identify sincere voters and manipulators in $Q$ based on their relative ordering of $c$ and $d$.

Let $e \in \mathcal{C}(P) \setminus \{c, d\}$.  
Applying the PSCW condition~\eqref{eq:def_pscw} with opponents $(e, d)$, we obtain
\[
w(P^{c \succ d \text{ and } c \succ e}) > w(P^{e \succ c}) > w(P^{c \succ d \text{ and } e \succ c}).
\]
We now transfer this inequality to profile $Q$.  
The left-hand term represents sincere voters, so its weight cannot decrease:
\[
w(Q^{c \succ d \text{ and } c \succ e}) \ge w(P^{c \succ d \text{ and } c \succ e}).
\]
The right-hand term also corresponds to sincere voters; its weight cannot decrease for the same reason.  
Moreover, since all manipulators rank $c$ last, it cannot increase either. Hence,
\[
w(Q^{c \succ d \text{ and } e \succ c}) = w(P^{c \succ d \text{ and } e \succ c}).
\]
Combining these three relations yields a useful relation between two subsets that partition the sincere voters:
\[
w(Q^{c \succ d \text{ and } c \succ e}) > w(Q^{c \succ d \text{ and } e \succ c}).
\]

Let us now compute the Young score of $c$.  
To defeat $e$, we can of course include all voters in $P^{c \succ d \text{ and } c \succ e}$.  
From the inequality above, we also know that we can include all voters in $P^{c \succ d \text{ and } e \succ c}$.  
Hence, we may freely add manipulators—who rank $c$ last—and possibly some virtual voters from the pool.  
The number of manipulators and virtual voters that can be added depends only on the weakest pairwise contest of $c$ against some third candidate~$e$ in~$Q$.  
Let $e$ denote the candidate that yields this weakest contest.  
We then have
\[
s_\You(c, Q) = 2\, w(Q^{c \succ e}) - 1.
\]
Since the only voters ranking $c$ above $e$ are sincere, it follows that
\[
s_\You(c, Q) = 2\, w(Q^{c \succ d \text{ and } c \succ e}) - 1.
\]

For the score of~$d$, we use the upper bound~\eqref{eq:young_score_upper_bound}:
\[
s_\You(d, Q) \le 2\, w(Q^{d \succ c}) - 1.
\]

Finally, applying the PSCW property with opponents $(d, e)$, we obtain
\[
w(Q^{c \succ d \text{ and } c \succ e}) > w(Q^{d \succ c}),
\]
which implies that
\[
s_\You(c, Q) > s_\You(d, Q),
\]
yielding the desired contradiction.
\end{proof}

\subsubsection{Critical concentration parameters}\label{sec:critical_thetas_rule_related_to_pscw}

We now analyze the asymptotic behavior of the PSCW notion and of the Maximin family under the Perturbed Culture model, excluding the Viennot rule, whose analysis requires a substantially different proof and is therefore deferred to Section~\ref{sec:viennot}.

\ThmThetaCriticalPSCW*

\begin{proof}[Proof of Theorem~\ref{thm:theta_critical_pscw}]
Since $\theta > 0$, candidate~1 is the Condorcet winner in~$\hat{P}$. For $c=d$, the PSCW condition~\eqref{eq:def_pscw} reduces to the Condorcet condition $w(\hat{P}^{1\succ c})>w(\hat{P}^{c\succ 1})$, so it suffices to consider two distinct opponents~$c$ and~$d$.
We have
\[
w(\hat{P}^{1 \succ c \text{ and } 1 \succ d}) = \theta + \tfrac{1}{3}(1-\theta), 
\quad \textrm{and} \quad
w(\hat{P}^{c \succ 1}) = \tfrac{1}{2}(1-\theta).
\]
If $\theta > \frac{1}{7}$, then $w(\hat{P}^{1 \succ c \text{ and } 1 \succ d}) > w(\hat{P}^{c \succ 1})$, hence candidate~1 is a PSCW in~$\hat{P}$.
This property also holds in a neighborhood of~$\hat{P}$, since it relies on strict inequalities involving quantities that vary continuously with the profile.
By the weak law of large numbers, with high probability the random profile~$P$ has its normalized version~$\bar{P}$ in this neighborhood, and therefore admits a PSCW. If $\theta < \frac{1}{7}$, we conclude similarly that a PSCW fails to exist with high probability.
\end{proof}

In the subsequent proofs, unless stated otherwise, we will restrict attention to the normalized profile and leave the end of the argument implicit—namely, that the reasoning relies on strict inequalities involving continuous quantities and therefore holds in a neighborhood.

\medskip
An immediate consequence of Theorem~\ref{thm:theta_critical_pscw} is Corollary~\ref{thm:theta_u_if_protected_by_pscw}.

\ThmThetaUIfProtectedByPSCW*

We can finally prove the existence of a phase transition and determine the corresponding critical concentration parameter for the rules of the Maximin family, except the Viennot rule.

\begin{theorem}\label{thm:critical_theta_maximin_family_except_viennot}
For Maximin, Ranked Pairs, Schulze, Split Cycle, and Young, the critical concentration parameter is
\[
\theta_c(f, m) = \frac{1}{7}.
\]
\end{theorem}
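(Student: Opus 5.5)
The upper bound $\theta_u(f,m)\le\frac17$ is already in hand. By Theorem~\ref{thm:maximin_like_rules_property}, Proposition~\ref{thm:young_not_maximin_like}, and Section~\ref{sec:split_cycle_is_maximin_like}, each of the five rules is non-CM whenever a PSCW exists. Corollary~\ref{thm:theta_u_if_protected_by_pscw} then gives the bound. It remains to show that for $\theta<\frac17$ the rules are CM with high probability, which yields $\theta_\ell(f,m)\ge\frac17$. For the homogeneous rules (Maximin, Ranked Pairs, Schulze, Split Cycle) I will apply Lemma~\ref{lem_um_in_neighborhood}. For Young I will apply Lemma~\ref{lem_um_in_neighborhood_generalized}.

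\textbf{Manipulation.} In a neighborhood of $\hat P$, candidate~1 is the Condorcet winner and hence wins under all these rules. Take the unison ballot $(2\succ 3\succ\cdots\succ m\succ 1)$ of Section~\ref{sec:analysis_um_2}, cast by all voters with $2\succ 1$, and read off Table~\ref{tab:wmm_um_2}. In $Q$ we have $W(2,1)=\frac12(1-\theta)$, $W(2,j)=\frac23(1-\theta)+\theta$, and $W(1,j)=\frac13(1-\theta)+\theta$. Since $\theta<\frac17$, the last quantity is strictly less than $\frac12$, so candidate~1 loses to every $j\ge3$, while 2 beats every $j\ge 3$.

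\textbf{Maximin-type rules.} Candidate~2's weakest contest is $\frac12(1-\theta)$. Candidate~1's weakest is $\frac13(1-\theta)+\theta$, which is smaller exactly when $\theta<\frac17$. Each $j\ge3$ has a weakest contest at most $W(j,2)=\frac13(1-\theta)$. So 2 is the strict Maximin winner. For Ranked Pairs, the edges $2\to j$ with weight $\frac23(1-\theta)+\theta$ and $j\to 1$ with weight $\frac23(1-\theta)$ are all locked before $1\to 2$ with weight $\frac12(1-\theta)+\theta$. This holds because $\frac23(1-\theta)>\frac12(1-\theta)+\theta$ iff $\theta<\frac17$. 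The edges $j\to k$ are acyclic among themselves, so nothing earlier blocks these locks. Hence $1\to 2$ is rejected and 2 becomes the source. For Schulze and Split Cycle, $\Strength(2,1)\ge\min(W(2,3),W(3,1))=\frac23(1-\theta)$ exceeds $W(1,2)$. Every path from 1 must use an edge out of 1 of weight at most $\frac12(1-\theta)+\theta$, and every path from $j$ to 2 uses an edge into 2 of weight at most $\frac12(1-\theta)+\theta$. So 2 wins.

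All the comparisons above are strict inequalities between continuous functions of the profile. They therefore persist when $\hat P$ is replaced by any $P$ in a small neighborhood, with the manipulators being those voters preferring 2 to 1 in $P$. This gives UM in a neighborhood.

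\textbf{Young (main obstacle).} This rule is non-homogeneous and not Maximin-like, so I will use the Young score of Section~\ref{sec:young_and_pscw} on a large discrete profile $Q$ whose normalized version is near the matrix above. For candidate~2, I keep every voter ranking 2 first together with enough other voters; more simply, I take the lower bound from the second case of the definition. Candidate~2 is not a CW, since it loses to 1 with weight $\frac12(1-\theta)+\theta$. I will argue that deleting the non-manipulator voters with $1\succ 2$ down to just below the manipulator count makes 2 a CW. It suffices to keep the manipulators, of normalized weight $\frac12(1-\theta)$, plus a slightly smaller set of sincere $1\succ2$ voters also ranking $2\succ j$. This gives $s_\You(2,Q)\approx 2\cdot\frac12(1-\theta)\,n$. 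By~\eqref{eq:young_score_upper_bound}, $s_\You(1,Q)\le 2W(1,j,Q)-1\approx2(\frac13(1-\theta)+\theta)n$, which is smaller when $\theta<\frac17$. Similarly $s_\You(j,Q)\le 2W(j,2,Q)\approx\frac23(1-\theta)n$, which is smaller still. The delicate point is verifying that the retained set really makes 2 beat every $j$; I expect this to hold with margin because $W(2,j)$ is large. The strict linear inequalities then survive in a neighborhood for $n$ large, and Lemma~\ref{lem_um_in_neighborhood_generalized} concludes.
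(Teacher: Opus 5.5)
Your proposal is correct and follows essentially the same route as the paper. The upper bound comes from Corollary~\ref{thm:theta_u_if_protected_by_pscw}. For $\theta<\tfrac17$, both use the unison ballot $(2\succ\cdots\succ m\succ 1)$ read off Table~\ref{tab:wmm_um_2} with Lemma~\ref{lem_um_in_neighborhood}, and, for Young, Lemma~\ref{lem_um_in_neighborhood_generalized} with the bounds \eqref{eq:young_score_upper_bound} and \eqref{eq:young_score_lower_bound}; your explicit checks for Ranked Pairs, Schulze, Split Cycle and for the candidates $j\ge 3$ under Young spell out what the paper calls a brief inspection.

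The one muddled spot is your ``delicate point'' for Young, which is not delicate: all manipulators rank $2$ first, so keeping them plus any strictly smaller number of other voters already makes $2$ a Condorcet winner, which is exactly why \eqref{eq:young_score_lower_bound} holds. You should drop the extra requirement that the retained sincere voters rank $2\succ j$. Even for a single $j$ there are only about $\bigl(\tfrac16(1-\theta)+\theta\bigr)n<\tfrac12(1-\theta)n$ of them, and the resulting bound, about $\bigl(\tfrac23(1-\theta)+\theta\bigr)n$, would fall below your upper bound $\bigl(\tfrac23(1-\theta)+2\theta\bigr)n$ on candidate~1.
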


\begin{proof}[Proof of Theorem~\ref{thm:critical_theta_maximin_family_except_viennot}]
By Corollary~\ref{thm:theta_u_if_protected_by_pscw}, what remains to prove is 
that for $\theta < \frac{1}{7}$, the rule $f$ is CM w.h.p.

In the expected sincere profile~$\hat{P}$, candidate~1 is the Condorcet winner, and this is also true for all profiles $P$ whose normalized version $\bar{P}$ is sufficiently close to $\hat{P}$. Since $f$ is Condorcet-consistent, candidate~1 is declared the winner with high probability.

Consider the unison manipulation attempt~$Q$ in favor of candidate~2 with ballot $(2 \succ \ldots \succ m \succ 1)$, whose weighted majority matrix is given in Table~\ref{tab:wmm_um_2}.  
If $\theta < \tfrac{1}{7}$, we have
\[
\tfrac{1}{2}(1 - \theta) > \tfrac{1}{3}(1 - \theta) + \theta,
\]
which implies that $W(2, 1, Q) > W(1, j, Q)$ for any third candidate~$j$.  
Applying Maximin, Ranked Pairs, Schulze, and Split Cycle to Table~\ref{tab:wmm_um_2}, a brief rule-by-rule inspection shows that candidate~2 is elected. By Lemma~\ref{lem_um_in_neighborhood}, we conclude that these rules are CM with high probability.

We now turn to the case of Young.  
Let us temporarily assume that there exists a profile~$P$ whose normalized version is equal to~$\hat{P}$.  
Applying the same unison manipulation as above to~$P$ yields the profile~$n(P)Q$.  
The upper bound~\eqref{eq:young_score_upper_bound} gives
\[
s_\You(1, n(P)Q) \le 2\, w(n(P)Q^{1 \succ j}) - 1 = 2n(P) \left( \tfrac{1}{3}(1-\theta) + \theta \right) - 1,
\]
and the lower bound~\eqref{eq:young_score_lower_bound} gives
\[
s_\You(2, n(P)Q) \ge 2\, w(n(P)Q^{r(2) = 1}) - 1 = 2 n(P) \left( \tfrac{1}{2}(1-\theta) \right) - 1.
\]
If $\theta < \frac{1}{7}$, these bounds imply that $s_\You(2, n(P)Q) > s_\You(1, n(P)Q)$.  
It is also straightforward to verify that all other candidates have lower scores than~2.  
If~$P$ has its normalized version sufficiently close to~$\hat{P}$, all the strict inequalities involved still hold, and~$P$ remains CM.  
We then conclude by Lemma~\ref{lem_um_in_neighborhood_generalized}.
\end{proof}

\subsubsection{The Viennot rule}\label{sec:viennot}

Like the Young rule, but unlike the other rules in the Maximin family, the Viennot rule is not Maximin-like (Table~\ref{tab:ex_you_is_not_maximin_like}, analyzed in Section~\ref{sec:appendix_analysis_table_2}). Moreover, unlike the Young rule, the existence of a PSCW does not even guarantee immunity to coalitional manipulation (Table~\ref{tab:ex_pscw_no_sscw}, analyzed in Section~\ref{sec:appendix_analysis_table_3}). Conversely, the Viennot rule may still be immune to coalitional manipulation without a PSCW (Table~\ref{tab:ex_scw_no_pscw_rules_not_CM}, analyzed in Section~\ref{sec:appendix_analysis_table_1}). 
Nevertheless, its critical concentration parameter is equal to~$\tfrac{1}{7}$, but for a different reason. Intuitively, when manipulating in favor of candidate~2, this value corresponds to the threshold below which manipulators can force an elimination duel between candidates~1 and~3. It then becomes relatively easy to eliminate candidate~1 in that pairwise comparison. Once candidate~1 is eliminated, the path is clear for candidate~2.

\begin{theorem}\label{thm:critical_theta_viennot}
For Viennot, the critical concentration parameter is
\[
\theta_c(\Vie, m) = \frac{1}{7}.
\]
\end{theorem}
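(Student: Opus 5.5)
The plan is to follow the general strategy of Section~\ref{sec:appendix_manipulation_lemmas}. For $\theta<\tfrac17$ I will exhibit a unison manipulation in a neighborhood of $\hat{P}$ and apply Lemma~\ref{lem_um_in_neighborhood}. For $\theta>\tfrac17$ I will show non-manipulability in a neighborhood and apply Lemma~\ref{lem_not_CM_in_neighborhood}. Throughout I reason on $\hat{P}$ with strict inequalities, so the conclusions persist near $\hat{P}$. In a small neighborhood the ordering of the plurality scores of $2,\dots,m$ is arbitrary, so no argument may rely on it.

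\textbf{Sincere winner.} Candidate~1 is the Condorcet winner near $\hat{P}$, so it wins every duel it enters and is elected by Viennot.

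\textbf{Subcritical regime ($\theta<\tfrac17$).} All voters preferring $2$ to $1$ (weight $\tfrac12(1-\theta)$) cast $(2\succ3\succ\cdots\succ m\succ1)$. The resulting profile $Q$ has the matrix of Table~\ref{tab:wmm_um_2}. I would verify three facts about $Q$.
\begin{enumerate}
\item Candidate~2 defeats every $j\geq3$ pairwise, since $\tfrac23(1-\theta)+\theta>\tfrac12$.
\item For every remaining set $S$ containing $1$ and $2$, the plurality of $2$ in $Q_S$ is at least $\tfrac12(1-\theta)$. The plurality of $1$ is at most $\theta+\tfrac13(1-\theta)$, which is strictly smaller exactly when $\theta<\tfrac17$. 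Because these two scores plus that of any third candidate above $2$ would exceed $1$, candidate~2 is never among the two lowest while at least three candidates remain.
\item Candidate~1 loses pairwise to every $j\geq3$ in $Q$, since $W(j,1,Q)=\tfrac23(1-\theta)>\tfrac12$ for $\theta<\tfrac14$.
\end{enumerate}
From these facts, candidate~2 survives every round. Candidate~1 is eliminated at the latest in the round where three candidates $\{1,2,x\}$ remain: $1$ and $x$ are then the bottom two, and $1$ loses the duel. Candidate~2 then wins the final duel by fact~1. Since all inequalities are strict, $Q$ is a successful unison manipulation near $\hat{P}$.

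\textbf{Supercritical regime ($\theta>\tfrac17$).} Suppose a coalition manipulates in favor of some $d\neq1$; the manipulators are exactly the voters with $d\succ1$, of weight about $\tfrac12(1-\theta)$. Candidate~1 must be eliminated in some duel. That duel cannot be against $d$, because sincere voters keep $W(1,d)>W(d,1)$. So at some round with remaining set $S\ni1,d$, candidate~1 is in the bottom two and loses to some $e\neq d$. Candidate $d$ is therefore among the top $|S|-2$ plurality scores in $Q_S$. I would bound the two relevant quantities.
\begin{enumerate}
\item Every voter ranking $d$ first in $S$ prefers $d$ to $1$, so the plurality of $d$ in $Q_S$ is at most $\tfrac12(1-\theta)$.
\item The plurality of $1$ in $Q_S$ is at least the sincere weight of voters ranking $1$ first within $S$ and $1\succ d$. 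This is $\theta+(1-\theta)\bigl(\tfrac1{|S|}\bigr)$ for the part of the IC population with $1$ top in $S$, none of whom prefers $d$ to $1$, and it is at least $\theta+\tfrac{1-\theta}{|S|}$.
\end{enumerate}
For $|S|=3$, the plurality of $d$ would have to exceed that of $1$, which gives $\tfrac12(1-\theta)>\theta+\tfrac13(1-\theta)$, that is, $\theta<\tfrac17$. This is a contradiction.

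\textbf{Main obstacle: larger $S$.} For $|S|\geq4$, at least $|S|-2\geq2$ candidates must outscore~1, including $d$. I would argue that the total plurality available to candidates other than $1$ is at most $1-\theta-\tfrac{1-\theta}{|S|}$. Requiring $|S|-2$ of them each to exceed $\theta+\tfrac{1-\theta}{|S|}$ should force a condition at least as restrictive as $\theta<\tfrac17$; for instance, $|S|=4$ gives $\theta<\tfrac19$.

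The care needed is that non-manipulators' ballots in $Q_S$ are sincere, so the plurality of third candidates is also capped. If the crude counting bound fails for some $|S|$, I would refine it by capping the sincere plurality of each $e\neq1,d$ in $Q_S$ (sincere voters with $e$ top in $S$ and $1\succ d$) and assigning all manipulator weight adversarially. This reduces the question to a small linear feasibility check in $\theta$, which should again be infeasible for $\theta>\tfrac17$.
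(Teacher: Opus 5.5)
Your proof is correct and is essentially the paper's. Below $\tfrac17$ you use the same unison ballot $(2\succ 3\succ\cdots\succ m\succ 1)$; above $\tfrac17$ you use the same observation that candidate~1 must lose a duel to a third candidate while the favored candidate, whose first places come only from the $\tfrac12(1-\theta)$ voters preferring it to~1, stays out of the bottom two. The one variation is for $|S|=k\ge 4$. There the paper charges the manipulator budget against all deficits and obtains $\theta\le\frac{1}{2k(k-2)+1}$, whereas your cruder total-weight count $(k-1)\bigl(\theta+\tfrac{1-\theta}{k}\bigr)\le 1$, i.e.\ $\theta\le\frac{1}{(k-1)^2}\le\frac19$, already closes every such case, so the refinement you keep in reserve is not needed.
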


\begin{proof}[Proof of Theorem~\ref{thm:critical_theta_viennot}]
In the expected sincere profile~$\hat{P}$, candidate~1 is the Condorcet winner and is therefore declared the winner.

Assume that $\Vie$ is CM in $\hat{P}$ towards some target profile~$Q$ in favor of a candidate~$c$.  
Since candidate~1 is the Condorcet winner, even after manipulation we have $W(1, c, Q) > W(c, 1, Q)$.  
Hence, to eliminate candidate~1, a duel must be organized between~1 and some third candidate $d \notin \{1, c\}$.  
Let $k$, with $3 \le k \le m$, be the number of candidates present at this round, denoted by $\{1, c, d, j_1, \ldots, j_{k-3}\}$.  
Every candidate except~$d$ must then have at least the same plurality score as candidate~1.  
Therefore, the number of manipulators must be large enough to compensate for the vote deficits of these candidates, yielding
\[
\tfrac{1}{2}(1-\theta) \ge 
\Big[\tfrac{1}{k}(1-\theta) + \theta\Big] 
+ (k-3)\Big[\tfrac{1}{k}(1-\theta) + \theta - \tfrac{1}{2k}(1-\theta)\Big],
\]
which simplifies to $\theta \le \tfrac{1}{2k(k - 2) + 1}$.  
Since this expression is maximized for $k = 3$, we must in particular have $\theta \le \tfrac{1}{7}$.  
By contraposition, if $\theta > \tfrac{1}{7}$, Viennot is non-CM in~$\hat{P}$, and therefore in any neighborhood of it. 
By Lemma~\ref{lem_not_CM_in_neighborhood}, Viennot is therefore non-CM with high probability.

Now assume $\theta < \tfrac{1}{7}$ and consider the unison manipulation attempt~$Q$ in favor of candidate~2 with ballot $(2 \succ \ldots \succ m \succ 1)$, described in Section~\ref{sec:analysis_um_2}.  
Assume a round where candidates~1,~2, and $k-2$ other candidates remain, with $3 \leq k \leq m$.  
Candidate~1 then has a plurality score of $\tfrac{1}{k}(1-\theta) + \theta$, candidate~2 has a score of $\tfrac{1}{2}(1-\theta)$, and any other candidate has a score of $\tfrac{1}{2k}(1-\theta)$, which is always smaller than those of candidates~1 and~2.  
If several candidates other than~1 and~2 are still present, they all have the lowest plurality scores, and one of them is eliminated.  
(The specific candidates selected and eliminated may vary depending on the initial profile within the neighborhood of~$\hat{P}$, but the outcome remains the same.) When only one candidate $j \notin \{1, 2\}$ remains (i.e., when $k=3$), and since $\theta < \tfrac{1}{7}$, we have $\tfrac{1}{3}(1-\theta) + \theta < \tfrac{1}{2}(1-\theta)$.  
Hence, candidates~1 and~$j$ are selected for the elimination duel.  
Table~\ref{tab:wmm_um_2} then shows that candidate~1 is eliminated.  
The final counting round thus involves candidates~2 and~$j$, and Table~\ref{tab:wmm_um_2} again shows that~$j$ is eliminated.  
Candidate~2 is therefore declared the winner.  
Consequently, Viennot is UM in~$\hat{P}$ and in a neighborhood of it.  
By the UM Lemma~\ref{lem_um_in_neighborhood}, Viennot is CM with high probability.
\end{proof}

\subsection{Baldwin family: Baldwin, Nanson, Kemeny, Dodgson, Simplified Dodgson}\label{sec:appendix_baldwin_family}

We now turn to the Baldwin family.  
We recall that the rules in this family are immune to coalitional manipulation whenever a Set-Safe Condorcet Winner exists (Proposition~\ref{thm:baldwin_nanson_kemeny_sscw}), except Dodgson.

%\ThmBaldwinNansonKemenySSCW*

\ThmThetaCriticalSSCW*

\begin{proof}[Proof of Theorem~\ref{thm:theta_critical_sscw}]
For any two distinct opponents $d$ and~$e$, we have
\[
w(\hat{P}^{1 \succ d \text{ and } 1 \succ e}) = \tfrac{1}{3}(1-\theta) + \theta.
\]
If $\theta > \tfrac{1}{4}$, this quantity is higher than $\tfrac{1}{2}$, hence $1$ is a RCW, hence an SSCW, in $\hat{P}$. Otherwise, this quantity is at most $\frac{1}{2}$. Intuitively, this implies that in a manipulation attempt in favor of~$d$, sincere voters cannot guarantee a pairwise victory against~$e$.
Evaluating the SSCW condition~\eqref{eq:def_sscw_computable} for candidate~1 and any opponent~$d$ in the expected profile~$\hat{P}$ then gives
\[
\begin{aligned}
& \Big(w(\hat{P}^{1 \succ d}) - \tfrac{1}{2}\Big) 
+ \sum_{e \notin \{1, d\}} 
\min\Big(0,\, w(\hat{P}^{1 \succ d \text{ and } 1 \succ e}) - \tfrac{1}{2}\Big) \\
&= \left(\tfrac{1}{2}(1-\theta) + \theta - \tfrac{1}{2}\right) 
+ \sum_{e \notin \{1, d\}} 
\min\left(0,\, \tfrac{1}{3}(1-\theta) + \theta - \tfrac{1}{2}\right) \\ 
&= \frac{(4m - 5)\theta - (m - 2)}{6}.
\end{aligned}
\]
If $\theta$ is greater (resp. lower) than $\frac{m-2}{4m-5}$, then this quantity is positive (resp. negative), hence candidate~1 is (resp. is not) the SSCW.
This property also holds in a neighborhood of~$\hat{P}$, and the weak law of large numbers ensures that a random profile~$P$ has its normalized version~$\bar{P}$ in this neighborhood with high probability.
\end{proof}

An immediate consequence of Theorem~\ref{thm:theta_critical_sscw} is Corollary~\ref{thm:theta_u_if_protected_by_sscw}.

\ThmThetaUIfProtectedBySSCW*

We can now establish the phase transition phenomenon for the rules of the Baldwin family.
Although Dodgson is not rendered immune to coalitional manipulation by the existence of an SSCW, its similarity with Simplified Dodgson allows us to reach the same conclusion.

\begin{theorem}\label{thm:critical_theta_baldwin_family}
For Baldwin, Nanson, Kemeny, Dodgson, and Simplified Dodgson, the critical concentration parameter is
\[
\theta_c(f, m) = \frac{m - 2}{4m - 5}.
\]
\end{theorem}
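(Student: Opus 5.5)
The plan is to establish separately that $\theta_u(f,m)\le\frac{m-2}{4m-5}$ and $\theta_\ell(f,m)\ge\frac{m-2}{4m-5}$, writing $\theta^*=\frac{m-2}{4m-5}$.

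\textbf{Supercritical side.} For Baldwin, Nanson, Kemeny and Simplified Dodgson, this is immediate. Proposition~\ref{thm:baldwin_nanson_kemeny_sscw} and Corollary~\ref{thm:theta_u_if_protected_by_sscw} give $\theta_u(f,m)\le\theta^*$. For Simplified Dodgson we use the generalized Lemma~\ref{lem_not_CM_in_neighborhood_generalized}, since this rule is not homogeneous.

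\textbf{Subcritical side for the homogeneous rules.} For $\theta<\theta^*$, candidate~1 wins w.h.p. because all five rules are Condorcet-consistent. I would manipulate in favor of~2. Every voter preferring $2$ to $1$ ranks $2$ first and $1$ last. Rather than a unison ballot, these voters distribute the orders of $3,\dots,m$ uniformly, so that all candidates $j\ge3$ are treated symmetrically in~$Q$. Computing pairwise scores as in Table~\ref{tab:wmm_um_2}, but with $W(j,k,Q)=\tfrac12$ for $j,k\ge3$, the sums $\sum_{e}W(\cdot,e,Q)$, which are Borda scores, satisfy the following:
\begin{itemize}
\item $s_\Bor(j,Q)-s_\Bor(1,Q)=\tfrac{(m-2)-(4m-5)\theta}{6}>0$;
\item candidate~2 is far above average, and $s_\Bor(1,Q)<\tfrac{m-1}{2}$ (average) exactly when $\theta<\theta^*$.
\end{itemize}
Hence candidate~1 is the unique Borda minimum and is below average, so both Baldwin and Nanson eliminate it first. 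Candidate~2 is then the Condorcet winner of the rest, since $W(2,j,Q)=\tfrac23(1-\theta)+\theta>\tfrac12$, and it wins. These inequalities are strict and the outcome does not depend on how ties among the $j$'s are broken. So $Q$ yields a $\delta$-stable manipulation in a neighborhood of $\hat P$, and Lemma~\ref{lem_delta_stable_cm} applies.

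For Kemeny, note that $W(1,j,Q)<\tfrac12$ since $\theta<\theta^*<\tfrac14$. In any ranking, moving~1 down past some $j$ therefore lowers the penalty, so an optimal ranking puts~1 last. Among the rankings that put~1 last, those headed by~2 are optimal. Comparing with rankings headed by~1, the penalty change of moving~1 from bottom to top is exactly the SSCW expression~\eqref{eq:def_sscw_kemeny_style} evaluated in~$Q$, which is negative. So 2 wins.

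\textbf{Subcritical side for (Simplified) Dodgson.} I would compare penalties directly in the same~$Q$. The deficit of~1 is $(m-2)\bigl(\tfrac12-\tfrac13(1-\theta)-\theta\bigr)n$. The deficit of~2 is about $\tfrac{\theta}{2}n$, and the $j$'s have larger deficits. These compare exactly through $\theta<\theta^*$. For Dodgson, $p_\Dod(1,Q)\ge p_\SD(1,Q)$ holds always. For the matching upper bound on $p_\Dod(2,Q)$, I would use reference-ranking voters, in which 1 is immediately above 2; they have weight $\theta>\theta/2$, so each swap is useful. The $\pm1$ rounding is absorbed for large~$n$, and Lemma~\ref{lem_um_in_neighborhood_generalized} (or its $\delta$-stable analogue) concludes.

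\textbf{Supercritical side for Dodgson (main obstacle).} There is no SSCW guarantee for Dodgson, because of useless swaps. The plan is to mimic the Simplified Dodgson argument. Against a manipulation for~$d$, lower-bound $p_\Dod(d,Q)$ by $p_\SD(d,Q)$, which is at least the margin $W(1,d,P)-\tfrac12$. Then upper-bound $p_\Dod(1,Q)$ by the number of points~1 must regain against each $e$ it now loses to. The difficulty is showing those points can be regained \emph{without} useless swaps, i.e.\ that there is enough weight of voters in~$Q$ with $e$ immediately above~1. I would first try to use the manipulators' own ballots: they rank $d$ above~1, but Dodgson may edit any ballot, and lifting~1 past an adjacent $e$ in their rankings is one swap per point. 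If manipulators separate~1 from~$e$ by other candidates, then those insertions themselves cost the manipulators weight elsewhere. I expect an exchange argument to bound the extra swaps by a term that the strict slack $(4m-5)\theta-(m-2)>0$ absorbs, but this step is the one I am least sure of.
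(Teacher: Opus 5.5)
There are two genuine gaps: the Baldwin subcritical construction fails on part of the range, and the Dodgson supercritical step is left open. The rest of your plan works: the supercritical side for the four SSCW rules, Nanson, and the (Simplified) Dodgson penalty comparison.

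\textbf{Baldwin, subcritical side.} Your Borda computation does not match the profile you build. With the orders of $3,\dots,m$ spread uniformly over the manipulators, the sincere voters still give $W(j,k,Q)=\tfrac12(1-\theta)+\theta$ for $3\le j<k$, not $\tfrac12$.

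More fundamentally, no arrangement works once every manipulator ranks $2$ first and $1$ last. Then $s_\Bor(1,Q)$ is fixed, and so is $s_\Bor(2,Q)=\tfrac12(1-\theta)+(m-2)\bigl[\tfrac23(1-\theta)+\theta\bigr]$. Since Borda scores sum to $\tfrac{m(m-1)}{2}$, the mean score of the candidates $j\ge 3$ is $\tfrac{m-1}{2}-\theta$. This exceeds $s_\Bor(1,Q)$ only by $\tfrac{(m-2)-(4m+1)\theta}{6}$.

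Hence for $\theta\in\bigl(\tfrac{m-2}{4m+1},\tfrac{m-2}{4m-5}\bigr)$, some $j$ lies strictly below candidate~$1$, and Baldwin eliminates some $j$ rather than $1$. The restricted profile has the same structure, with threshold $\tfrac{m'-2}{4m'+1}$, which decreases as $m'$ decreases. So $1$ survives to the final duel and beats $2$. For $m=3$ your ballot is just $(2\succ3\succ1)$, and $s_\Bor(3,Q)-s_\Bor(1,Q)=\tfrac{1-13\theta}{6}$, which is negative on $(\tfrac1{13},\tfrac17)$.

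Nanson is unaffected: it only needs $1$ below average and $2$ above, after which $2$ is the Condorcet winner of what remains. For Baldwin, the manipulators must give up some of $2$'s Borda points to lift the $j$'s. The paper sends a fraction $\tfrac{2\theta}{1-\theta}$ of them to $(m\succ\cdots\succ2\succ1)$. It splits the rest evenly between $(2\succ3\succ\cdots\succ m\succ1)$ and $(2\succ m\succ\cdots\succ3\succ1)$. This yields exactly $W(j,k,Q)=\tfrac12$, $s_\Bor(j,Q)=\tfrac{m-1}{2}$, and the margin $\tfrac{(m-2)-(4m-5)\theta}{6}$ you wrote.

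\textbf{Two smaller points on the subcritical side.}
\begin{itemize}
\item For Kemeny, ``moving $1$ down past a $j$ lowers the penalty'' only shows that $1$ is last or immediately above $2$ in an optimal ranking. Rankings topped by some $j$ (e.g.\ $3\succ1\succ2\succ\cdots$) are therefore not covered. Bounding the reduced penalty of any $j$-topped ranking from below by $W(2,j,Q)-W(j,2,Q)=\tfrac13(1-\theta)+\theta>\theta$ fixes this.
\item Your $Q$ is not a unison manipulation, so Lemma~\ref{lem_um_in_neighborhood_generalized} does not apply to (Simplified) Dodgson, and the paper has no $\delta$-stable lemma for non-homogeneous rules. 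Using the unison ballot $(2\succ3\succ\cdots\succ m\succ1)$ avoids this.
\end{itemize}

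\textbf{Dodgson, supercritical side.} This step is left open, and the route you suggest is the wrong one. The manipulators control their own ballots; they can, for instance, put $c$ directly above $1$, so nothing guarantees useful swaps there.

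The missing idea is to do all swaps in the \emph{sincere} ballots, those preferring $1$ to the target $c$, which are unchanged in $Q$. In such a ballot the candidate immediately above $1$ is never $c$. So swapping $1$ with its upper neighbor always gains a point against some $d\ne c$.

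Applying the same swap across the circular permutations of the other $m-2$ candidates gains one point against each $d\ne c$ per $m-2$ swaps. Supply never runs out: with $1$ on top of every sincere ballot, it would score $\tfrac12(1-\theta)+\theta>\tfrac12$ against each $d$.

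Sincere voters alone give $1$ at least $\tfrac13(1-\theta)+\theta$ against each $d\ne c$. For $\theta\le\tfrac14$ this gives $p_\Dod(1,Q)\le(m-2)\bigl(n\tfrac{1-4\theta}{6}+1\bigr)$; above $\tfrac14$, $1$ is an RCW and there is nothing to prove. Meanwhile $p_\Dod(c,Q)\ge\tfrac{n\theta}{2}$. The difference is $\tfrac n6\bigl[(4m-5)\theta-(m-2)\bigr]-(m-2)$, which is positive for large $n$.
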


\begin{proof}[Proof of Theorem~\ref{thm:critical_theta_baldwin_family}]
With high probability, candidate~1 is the Condorcet winner and is therefore elected in the random profile~$P$, since the rule is Condorcet-consistent.

\paragraph{Supercritical regime for Baldwin, Nanson, Kemeny, and Simplified Dodgson}

By Corollary~\ref{thm:theta_u_if_protected_by_sscw}, we already know that for $\theta > \tfrac{m - 2}{4m - 5}$, all these rules are non-CM with high probability.

\paragraph{Supercritical regime for Dodgson} 

For Dodgson, if $\theta > \tfrac{1}{4}$, candidate~1 is the RCW and therefore the profile is non-CM with high probability.  
Let us now assume $\theta \in \big(\tfrac{m - 2}{4m - 5}, \tfrac{1}{4}\big]$.  
Temporarily assume that there exists a profile~$P$ whose normalized version coincides with the expected profile~$\hat{P}$.  
We examine a manipulation attempt to a target profile~$Q$ in favor of some candidate~$c$, assuming that manipulators try to maximize the score of~$c$ and minimize that of~1, disregarding the scores of all other candidates.  
We will show that even in this case, candidate~1 still has a higher score than~$c$.  
To ease reading, one may keep in mind the illustrative case $c = 2$ and refer to Section~\ref{sec:analysis_sincere_voters} for the profile restricted to sincere voters, as well as Section~\ref{sec:analysis_um_2}, which provides an example of such a manipulation.

We first examine the score of candidate~1.  
Observe that~1 still wins its pairwise comparison against~$c$ and only loses to the other opponents, with a uniform score.
Consider the following algorithm: as long as~1 loses to opponents $d \neq c$, pick a sincere voter who does not rank~1 first and swap~1 with the candidate immediately above it (which cannot be~2, since the voter is sincere).
Apply the same swap to all rankings obtained by an arbitrary circular permutation of the candidates $(d_1, \ldots, d_{m-2})$.
Each step of this procedure increases the score of candidate~1 against every opponent $d \neq c$ by~1 point. If we were to continue these swaps as long as such voters exist, candidate~1 would eventually reach a score
\(
n(P)\!\big(\tfrac{1}{2}(1-\theta)+\theta\big)
\)
in each pairwise comparison against any $d \neq c$, hence a victory.  
Therefore, there is a step at which the procedure stops, and at that moment candidate~1 wins all its pairwise comparisons by exactly one point, and no useless swap was needed. It remains to bound the number of swaps required. 

The score of candidate~1 against any candidate $d \neq c$ is
\(
n(P)\big(\tfrac{1}{3}(1-\theta) + \theta\big).
\)
To turn this into a victory, candidate~1 needs a number of swaps at most
\[
\frac{n(P)}{2} - n(P)\left(\tfrac{1}{3}(1-\theta) + \theta\right) + 1
= n(P)\,\frac{1 - 4\theta}{6} + 1.
\]
Since there are $(m-2)$ such pairwise contests to recover, we obtain
\[
p_\Dod(1, Q) \le (m-2)\,n(P)\,\frac{1 - 4\theta}{6} + (m-2).
\]

In the manipulated profile, candidate~$c$ must compensate for its defeat against candidate~1, hence
\[
p_\Dod(c, Q) \ge \tfrac{1}{2} n(P)\,\theta.
\]
We then obtain
\[
p_\Dod(c, Q) - p_\Dod(1, Q)
\ge \frac{n(P)}{6}\big[(4m - 5)\theta - (m - 2)\big] - (m - 2).
\]
For $n(P)$ large enough, this quantity is positive, and the manipulation therefore fails.  
Since this remains true for profiles~$P$ whose normalized version is sufficiently close to~$\hat{P}$, Lemma~\ref{lem_not_CM_in_neighborhood_generalized} implies that Dodgson is non-CM with high probability.

\bigskip
Assume now that $\theta < \tfrac{m-2}{4m-5}$. We shall prove that, under this condition, all the rules of the Baldwin family are coalitionally manipulable with high probability.

\paragraph{Subcritical regime for Baldwin and Nanson}
Let $\zeta > 0$ (its value will be specified later) and consider a profile~$P$ such that $d_\infty(P, \hat{P}) \leq \frac{\zeta}{m!}$. 
If $\zeta < \theta$, candidate~1 remains the Condorcet winner in~$P$ and is therefore elected.  
Construct a new profile~$Q$ obtained from~$P$ by modifying the ballots of voters who prefer candidate~2 to candidate~1 as follows:
\begin{itemize}
	\item A fraction $\tfrac{1-3\theta}{2(1-\theta)}$ of them vote $(2 \succ m \succ \cdots \succ 3 \succ 1)$;
	\item A fraction $\tfrac{1-3\theta}{2(1-\theta)}$ vote $(2 \succ 3 \succ \cdots \succ m \succ 1)$;
	\item A fraction $\tfrac{2\theta}{1-\theta}$ vote $(m \succ \cdots \succ 1)$.
\end{itemize}
Let $\delta > 0$ and consider a profile~$Q'$ satisfying $d_\infty(Q, Q') < \tfrac{\delta}{m!}$. 
The weighted majority matrix (WMM) of~$Q'$ is then as shown in Table~\ref{tab:baldwin_nanson_manipulated_profile}, with all entries given up to an error of at most $\zeta + \delta$.

\begin{table}[ht]
\centering
	\caption{Weighted majority matrix $W(Q')$ in the proof of the subcritical regime for Baldwin and Nanson, up to errors of at most $\zeta + \delta$. Indices $j$ and $k$ denote generic candidates, with $2 < j < k$, and $k$ defined if $m \geq 4$.}\label{tab:baldwin_nanson_manipulated_profile}
	\(\begin{array}{c|c|c|c|c|}
	& 1                     & 2                              & j                              & k                              \\ \hline
	1 &                       & \frac{1}{2}(1-\theta) + \theta & \frac{1}{3}(1-\theta) + \theta & \frac{1}{3}(1-\theta) + \theta \\ \hline
	2 & \frac{1}{2}(1-\theta) &                                & \frac{2}{3}(1-\theta)          & \frac{2}{3}(1-\theta)          \\ \hline
	j & \frac{2}{3}(1-\theta) & \frac{1}{3}(1-\theta) + \theta &                                & \frac{1}{2}                    \\ \hline
	k & \frac{2}{3}(1-\theta) & \frac{1}{3}(1-\theta) + \theta & \frac{1}{2}                    &                                \\ \hline
\end{array}\)
\end{table}

At the first round, the Borda scores are:
\[
\begin{aligned}
&s_\Bor(1, Q') = \tfrac{1}{2}(1 - \theta) + \theta 
    + (m-2) \left[ \tfrac{1}{3}(1 - \theta) + \theta \right]
    + b\big((m-2)(\zeta + \delta)\big),\\
&s_\Bor(2, Q') = \tfrac{1}{2}(1 - \theta)
    + (m-2) \left[ \tfrac{2}{3}(1 - \theta) \right]
    + b\big((m-2)(\zeta + \delta)\big),\\
&s_\Bor(j, Q') = \frac{m-1}{2}
    + b\big((m-2)(\zeta + \delta)\big)
    \quad \text{for all } j \ge 3,\\
\end{aligned}
\]
where we recall that $b(\cdot)$ denotes a real number bounded by the given quantity.
In particular,
\[
s_\Bor(2, Q') - s_\Bor(1, Q') 
    = \frac{(m-2) - (4m-5)\theta}{3}
    + b\big(2(m-2)(\zeta + \delta)\big).
\]
If all error terms are zero, we then deduce that candidate~2 obtains a score above the average, candidate~$j$ a score exactly equal to the average, and candidate~1 a score below the average.  
Returning to the general case of~$Q'$, for $\zeta = \delta$ small enough, these inequalities remain valid: candidate~2 stays above the average, candidate~1 below it, and candidate~1 has the lowest score overall.  
Consequently, under both Baldwin and Nanson, candidate~1 is eliminated while candidate~2 is not.  
If additional rounds occur, candidate~2 is the Condorcet winner in the restricted profile and therefore wins.  
By Lemma~\ref{lem_delta_stable_cm}, we conclude that Baldwin and Nanson are coalitionally manipulable with high probability.

\paragraph{Subcritical regime for Kemeny}
We now consider the unison manipulation described in Section~\ref{sec:analysis_um_2}. 
Let $r$ be an arbitrary ranking. To simplify the analysis, we define the \emph{reduced Kemeny penalty} of $r$ as:
\[
\tilde{p}_\Kem(r,P)=\sum_{(c,d)\in\mathcal{C}(P)^2:\,c\succ_r d}W(d,c,P) - \min\big(W(d,c,P), W(c,d,P)\big),
\] 
which is equal to $p_\Kem(r,P)$ up to an additive constant. The advantage is that in the sum, we only need to take into account the pairwise comparisons for which $r$ disagrees with the majority.
\begin{itemize}
\item For $r = (2 \succ \ldots \succ m \succ 1)$, the only pairwise comparison inconsistent with~$r$ is between candidates~1 and~2. Hence $\tilde{p}_\Kem(r) = \theta$.
\item If the top candidate of~$r$ is some $j \notin \{1,2\}$, then considering the defeat of~$j$ against~2 yields
\[
\tilde{p}_\Kem(r) \ge \tfrac{1}{3}(1-\theta) + \theta > \theta.
\]
\item If the top candidate of~$r$ is~1, then considering the pairs $(1,j)$ for $j \notin \{1,2\}$ we obtain
\[
\tilde{p}_\Kem(r) \ge (m-2)\big[\tfrac{1}{3}(1-\theta) - \theta\big].
\]
Since $\theta < \tfrac{m-2}{4m-5}$, this value exceeds~$\theta$.
\end{itemize}
Therefore, candidate~2 is the winner.  
By Lemma~\ref{lem_um_in_neighborhood}, Kemeny is coalitionally manipulable with high probability.

\paragraph{Subcritical regime for Dodgson and Simplified Dodgson}
We again rely on the unison manipulation described in Section~\ref{sec:analysis_um_2}.  
Assume temporarily that the normalized version of~$P$ coincides with the expected profile~$\hat{P}$.  
Applying the unison manipulation to~$P$ yields the profile $n(P)Q$. We then have:
\[
\begin{aligned}
&p_f(1, n(P) Q) \ge (m-2)\, n(P)\, \frac{1 - 4\theta}{6},\\
&p_f(2, n(P) Q) \le \tfrac{1}{2}\, n(P)\, \theta + 1,\\
\end{aligned}
\]
which leads to
\[
p_f(2, n(P) Q) - p_f(1, n(P) Q)
    \le \frac{n(P)}{6}\big[(4m - 5)\theta - (m - 2)\big] + 1.
\]
For $n(P)$ large enough, this difference is negative, hence $p_f(2, n(P) Q) < p_f(1, n(P) Q)$.  
Moreover, for any $j \notin \{1,2\}$,
\[
p_f(j, n(P) Q) \ge n(P) \left[ \tfrac{1}{6}(1 - \theta) + \tfrac{1}{2}\theta \right],
\]
which, for sufficiently large $n(P)$, also exceeds $p_f(2, n(P) Q)$.  
Therefore, candidate~2 is the winner.  
By Lemma~\ref{lem_um_in_neighborhood_generalized}, we conclude that $f$ (either Dodgson or Simplified Dodgson) is coalitionally manipulable with high probability.
\end{proof}

\subsection{Black family: Black, Slater, Copeland}\label{sec:appendix_black_family}

We continue with the Black family, related to the notion of Resistant Condorcet Winner.

\ThmThetaCriticalRCW*

\begin{proof}[Proof of Theorem~\ref{thm:theta_critical_rcw}]
Since $\theta > 0$, candidate~1 is the Condorcet winner in the expected profile~$\hat{P}$.  
It thus suffices to verify the RCW condition~\eqref{eq:def_rcw} for distinct opponents $d$ and~$e$.  
We have
\[
w(\hat{P}^{1 \succ d \text{ and } 1 \succ e}) - \tfrac{1}{2}
    = \tfrac{1}{3}(1 - \theta) + \theta  - \tfrac{1}{2} 
    = \tfrac{1}{6}(4 \theta - 1).
\]
If $\theta$ is greater (resp. lower) than $\frac{1}{4}$, then this quantity is positive (resp. negative).
This inequality remains valid in a neighborhood of~$\hat{P}$, implying that candidate~1 is (resp. is not) the RCW with high probability.
\end{proof}

An immediate consequence of Theorem~\ref{thm:theta_critical_rcw} is Corollary~\ref{thm:theta_u_if_protected_by_rcw}.

\ThmThetaUIfProtectedByRCW*

We can now establish the phase transition phenomenon for Black, Slater, and Copeland.

\begin{theorem}\label{thm:critical_theta_black_family}
For Black with $m \geq 3$, Slater with $m \geq 4$, and Copeland with $m \geq 5$, the critical concentration parameter is
\[
\theta_c(f, m) = \frac{1}{4}.
\]
\end{theorem}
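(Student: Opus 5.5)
The proof will mirror that of Theorem~\ref{thm:critical_theta_maximin_family_except_viennot}. Black, Slater and Copeland are Condorcet-consistent, so Corollary~\ref{thm:theta_u_if_protected_by_rcw} already gives $\theta_u(f,m)\le\tfrac14$. It therefore remains to show that for $\theta<\tfrac14$ each rule is UM in a neighborhood of $\hat P$, and then apply Lemma~\ref{lem_um_in_neighborhood}. In every case candidate~1 is the Condorcet winner near $\hat P$, hence the sincere winner. Every manipulation will be in favor of candidate~2 and cast by the voters preferring 2 to 1, with total weight $\tfrac12(1-\theta)$. Throughout we argue on the expected profile with strict inequalities, which then persist in a neighborhood.

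\emph{Black and Slater.} Use the unison ballot $(2\succ 3\succ\cdots\succ m\succ 1)$ of Section~\ref{sec:analysis_um_2}, i.e.\ Table~\ref{tab:wmm_um_2}. For $\theta<\tfrac14$ we have $\tfrac23(1-\theta)>\tfrac12$, so every $j\ge3$ beats~1. Also 1 beats 2, and 2 beats every $j$, so there is no Condorcet winner.

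For Black, we compare Borda scores. Candidate~2 gets $\tfrac12(1-\theta)+(m-2)[\tfrac23(1-\theta)+\theta]$, and candidate~1 gets $\tfrac12(1-\theta)+\theta+(m-2)[\tfrac13(1-\theta)+\theta]$. The difference is $\tfrac{m-2}{3}(1-\theta)-\theta>0$. A direct check on Table~\ref{tab:wmm_um_2} shows that each $j\ge3$ also scores strictly less than~2, since it loses $\tfrac13(1-\theta)$ against~2 and at most ties the gains of~2 elsewhere. Hence Black elects~2.

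For Slater, the majority graph is: $1\to2$, $2\to j$, $j\to1$, and $j\to k$ for $3\le j<k$. The ranking $(2,3,\dots,m,1)$ has penalty~1. Any ranking topped by~1 disagrees on all $m-2\ge2$ pairs $(1,j)$. Now consider a ranking topped by some $j\ge3$ with penalty~1. Its only disagreement must be $(j,2)$. It would then need 1 above~2 and every $k\ne j$ above~1. It would also need 2 above every such $k$. For $m\ge4$ such a $k$ exists, which is a contradiction, so the penalty is at least~2. Thus $(2,\dots,m,1)$ is the unique optimum, and 2 wins whatever the tie-breaking. This is exactly where $m\ge4$ is used.

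\emph{Copeland.} The unison ballot above fails here: candidate~3 beats 1 and all $k>3$, tying~2 with $m-2$ victories. I would therefore modify the ballot to $(2\succ m\succ 3\succ 4\succ\cdots\succ m-1\succ 1)$. The analysis of 1 versus 2, 2 versus $j$, and $j$ versus 1 is unchanged, using Table~\ref{tab:wmm_sincere_voters}: for $\theta<\tfrac14$, 2 beats every $j$ and every $j$ beats~1.

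The only pair among the $j$'s whose orientation changes is $(3,m)$. Sincere voters give 3 an advantage of $\theta$ in the $1\succ2$ Dirac part, while the manipulators give $m$ an advantage of $\tfrac12(1-\theta)$. Since $\tfrac12(1-\theta)>\theta$ for $\theta<\tfrac13$, candidate $m$ beats~3 strictly. Every other pair $j<k$ keeps $j\to k$ strictly. Hence each $j\ge3$ loses to at least one other $j$: 3 loses to $m$, and $k>3$ loses to $k-1$.

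So each $j$ has at most $1+(m-4)=m-3$ victories, candidate~1 has one, and candidate~2 has $m-2$. This uses the fact that for $m\ge5$ the $m-2\ge3$ candidates $j$ form a tournament with no source. For $m=4$ the two $j$'s cannot both lose, which is why the statement requires $m\ge5$. There are no ties, so 2 wins independently of $\alpha$ and of the tie-breaking.

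The main obstacle is exactly this Copeland step. We need a single ballot under which no third candidate collects as many victories as~2, while keeping every relevant margin strict so that Lemma~\ref{lem_um_in_neighborhood} applies. The Slater uniqueness argument is the secondary delicate point.
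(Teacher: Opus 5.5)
Your upper bound via Corollary~\ref{thm:theta_u_if_protected_by_rcw} and your Slater argument are fine and match the paper. The Black step, however, fails for larger $m$. In Table~\ref{tab:wmm_um_2}, candidate~3 does not ``at most tie'' the gains of~2: it gets $\tfrac23(1-\theta)$ against~1, versus $\tfrac12(1-\theta)$ for~2. Against each $k\ge4$ it gets $\tfrac34(1-\theta)+\theta$, versus $\tfrac23(1-\theta)+\theta$ for~2. Summing,
\[
s_{\textnormal{Bor}}(2,Q)-s_{\textnormal{Bor}}(3,Q)=\tfrac{5-m}{12}(1-\theta)+\theta,
\]
which is negative for every $\theta<\tfrac14$ once $m\ge9$, and for small $\theta$ already at $m=6$. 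Your inequality only holds for $m\le5$. Black then elects~3. That is not a valid manipulation, because some manipulators prefer~2 to~1 but~1 to~3. Reordering the third candidates inside the single ballot does not help: whichever $j$ sits second collects the manipulators' surplus. Even the best choice ($j=m$) gives $\tfrac{5-m}{12}(1-\theta)+(m-2)\theta<0$ for small $\theta$. The paper instead splits the coalition between $(2\succ3\succ\cdots\succ m\succ1)$ and $(2\succ m\succ\cdots\succ3\succ1)$, in proportions $\tfrac{1-3\theta}{2(1-\theta)}$ and $\tfrac{1+\theta}{2(1-\theta)}$. This makes every duel among third candidates exactly $\tfrac12$. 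All $j\ge3$ then have Borda score $(1-\theta)+\tfrac{m-3}{2}$, which is below that of~2 by $\tfrac{m-2+(2m-1)\theta}{6}$. Since this manipulation is not unison, you must conclude with Lemma~\ref{lem_delta_stable_cm}, not Lemma~\ref{lem_um_in_neighborhood}.

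The Copeland step fails for a structural reason. Your ballot $(2\succ m\succ3\succ\cdots\succ m-1\succ1)$ puts $m$ above every $j\in\{3,\dots,m-1\}$, not only above~3. So every pair $(j,m)$ flips, since $W(m,j)=\tfrac34(1-\theta)>\tfrac14(1-\theta)+\theta$. Candidate $m$ then beats $3,\dots,m-1$ and~1, i.e.\ $m-2$ victories, exactly tying~2, and the outcome is left to tie-breaking. No unison ballot can avoid this. On every pair $j,k\ge3$ the manipulators' margin $\tfrac12(1-\theta)$ exceeds the sincere margin $\theta$. Hence the majority relation on $\{3,\dots,m\}$ is the manipulators' own linear order, a transitive tournament. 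Its top element beats every other third candidate and also~1 (otherwise~1 remains the Condorcet winner), so it always reaches $m-2$. The sourceless tournament you invoke can only come from splitting the coalition. The paper puts a fraction $\tfrac{2\theta}{1-\theta}$ on the reversed order, which cancels the Dirac advantage $\theta$ among third candidates. It then spreads the rest evenly over the circular permutations of $(3,\dots,m')$, where $m'$ is the largest odd integer $\le m$. This yields a regular tournament in which each third candidate has $\lfloor m/2\rfloor<m-2$ victories for $m\ge5$. Again the conclusion goes through Lemma~\ref{lem_delta_stable_cm}.
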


\begin{proof}[Proof of Theorem~\ref{thm:critical_theta_black_family}]
As shown in Section~\ref{sec:analysis_sincere_profile}, with high probability candidate~1 is the Condorcet winner and is therefore elected.  
Moreover, by Corollary~\ref{thm:theta_u_if_protected_by_rcw}, if $\theta > \tfrac{1}{4}$, these rules are immune to coalitional manipulation with high probability.  

\medskip
Consider now the case $\theta < \tfrac{1}{4}$.

\paragraph{Subcritical regime for Black}

Let $\zeta > 0$ and consider a profile~$P$ such that $d_\infty(P, \hat{P}) \leq \tfrac{\zeta}{m!}$.
If $\zeta < \theta$, then candidate~1 is the Condorcet winner hence is elected in~$P$.
Construct a new profile~$Q$ from~$P$ by modifying the ballots of voters who prefer candidate~2 to candidate~1 as follows:
\begin{itemize}
\item A fraction $\tfrac{1 - 3\theta}{2(1 - \theta)}$ of them vote $(2 \succ \ldots \succ m \succ 1)$,  
\item A fraction $\tfrac{1 + \theta}{2(1 - \theta)}$ vote $(2 \succ m \succ \ldots \succ 3 \succ 1)$.  
\end{itemize}
Let $\delta > 0$ and consider a profile~$Q'$ satisfying $d_\infty(Q, Q') < \tfrac{\delta}{m!}$.  
The weighted majority matrix (WMM) of~$Q'$ is then given in Table~\ref{tab:black_manipulated_profile},  
with all entries given up to an error of absolute value $\zeta + \delta$.

\begin{table}[ht]
\centering
	\caption{Weighted majority matrix $W(Q')$ in the proof of the subcritical regime for Black, up to errors of at most $\zeta + \delta$. Indices $j$ and $k$ denote generic candidates, with $2 < j < k$, and $k$ defined if $m \geq 4$.}\label{tab:black_manipulated_profile}
	\(\begin{array}{c|c|c|c|c|}
		& 1                     & 2                              & j                              & k                              \\ \hline
		1 &                       & \frac{1}{2}(1-\theta) + \theta & \frac{1}{3}(1-\theta) + \theta & \frac{1}{3}(1-\theta) + \theta \\ \hline
		2 & \frac{1}{2}(1-\theta) &                                & \frac{2}{3}(1-\theta) + \theta & \frac{2}{3}(1-\theta) + \theta \\ \hline
		j & \frac{2}{3}(1-\theta) & \frac{1}{3}(1-\theta)          &                                & \frac{1}{2}                    \\ \hline
		k & \frac{2}{3}(1-\theta) & \frac{1}{3}(1-\theta)          & \frac{1}{2}                    &                                \\ \hline
	\end{array}\)
\end{table}

For $\zeta = \delta$ small enough, candidate~1 beats candidate~2, candidate~2 beats any candidate~$j > 2$, and each candidate~$j > 2$ beats candidate~1.  
Hence, there is no Condorcet winner.  
We now compute the Borda scores:
\[
\begin{aligned}
&s_\Bor(1, Q') = \tfrac{1}{2}(1 - \theta) + \theta 
    + (m - 2) \left[ \tfrac{1}{3}(1 - \theta) + \theta \right]
    + b\big((m - 1)(\zeta + \delta)\big), \\
&s_\Bor(2, Q') = \tfrac{1}{2}(1 - \theta)
    + (m - 2) \left[ \tfrac{2}{3}(1 - \theta) + \theta \right]
    + b\big((m - 1)(\zeta + \delta)\big),\\
&s_\Bor(j, Q') = \tfrac{2}{3}(1 - \theta) + \tfrac{1}{3}(1 - \theta)
    + (m - 3)\tfrac{1}{2}
    + b\big((m - 1)(\zeta + \delta)\big).\\
\end{aligned}
\]
We deduce:
\[
\begin{aligned}
&s_\Bor(2, Q') - s_\Bor(1, Q') 
    = \frac{(m - 2) - (m + 1)\theta}{3}
      + b\big(2(m - 1)(\zeta + \delta)\big),\\
&s_\Bor(2, Q') - s_\Bor(j, Q') 
    = \frac{m - 2 + (2m - 1)\theta}{6}
      + b\big(2(m - 1)(\zeta + \delta)\big).\\
\end{aligned}
\]
For $\zeta = \delta$ small enough, both differences are positive, 
so candidate~2 obtains the highest Borda score and is thus elected.  
We then conclude by Lemma~\ref{lem_delta_stable_cm}.

\paragraph{Subcritical regime for Slater}
Assume $m \ge 4$ and consider the unison manipulation described in Section~\ref{sec:analysis_um_2}.  
For $\theta < \tfrac{1}{4}$, the unweighted majority matrix $M(Q)$ of~$Q$ is as shown in Table~\ref{tab:umm_um_2}. Recall that this matrix indicates, for each pair of candidates, which one wins their head-to-head contest.

\begin{table}[ht]
\centering
	\caption{Unweighted majority matrix $M(Q)$ in the proof of the subcritical regime for Slater. Indices $j$ and $k$ denote generic candidates, with $2 < j < k$. Null coefficients are omitted for legibility.}\label{tab:umm_um_2}
	\(\begin{array}{c|c|c|c|c|}
		  & 1 & 2 & j & k \\ \hline
		1 &   & 1 &   &   \\ \hline
		2 &   &   & 1 & 1 \\ \hline
		j & 1 &   &   & 1 \\ \hline
		k & 1 &   &   &   \\ \hline
	\end{array}\)
\end{table}

Consider the ranking $r = (2 \succ \ldots \succ m \succ 1)$.  
This order is contradicted only by the victory of candidate~1 over candidate~2, hence $p_\Sla(r) = 1$.  
Assume that another ranking~$r'$ has a penalty of at most~1.  
Then its last candidate must have at most one victory, i.e., it must be either~1 or~$m$.  
If the last candidate is~1, then since all other candidates follow a Condorcet order, we must have $r' = r$.  
If the last candidate is~$m$, the remaining candidates are not in a Condorcet order, and the penalty is therefore at least~2.  
Hence the winning order is~$r$, and the winning candidate is~2.  
We then conclude by Lemma~\ref{lem_um_in_neighborhood}.

\paragraph{Subcritical regime for Copeland}
Let $\zeta > 0$ and consider a profile~$P$ such that $d_\infty(P, \hat{P}) \le \tfrac{\zeta}{m!}$.  
Construct a new profile~$Q$ from~$P$ by modifying the ballots of voters who prefer candidate~2 to candidate~1 as follows:
\begin{itemize}
    \item A fraction $\tfrac{2\theta}{1 - \theta}$ of them vote $(2 \succ m \succ \ldots \succ 3 \succ 1)$;
    \item Let $m'$ be the largest odd integer such that $m' \le m$.  
    A fraction $\tfrac{1 - 3\theta}{1 - \theta}$ of them is then evenly distributed among the $m' - 2$ rankings obtained by circularly permuting the candidates $(3, \ldots, m')$ within the order $(2 \succ \ldots \succ m \succ 1)$.
\end{itemize}
The second fraction is positive since $\theta < \tfrac{1}{4}$, and the two fractions clearly sum to~1.

Let $\delta > 0$ and consider a profile~$Q'$ such that 
$d_\infty(Q, Q') < \tfrac{\delta}{m!}$.  
We have:
\[
\begin{aligned}
&W(1, 2, Q') - \tfrac{1}{2} 
    = \frac{\theta}{2} + b(\zeta + \delta), \\
&W(2, j, Q') - \tfrac{1}{2} 
    = \frac{1 + 2\theta}{6} + b(\zeta + \delta),\\
&W(j, 1, Q') - \tfrac{1}{2} 
    = \frac{1 - 4\theta}{6} + b(\zeta + \delta),\\
&W(j, m, Q') - \tfrac{1}{2} 
    = \frac{1 - 3 \theta}{4} + b(\zeta + \delta) 
    \quad \text{(if $m$ is even),}\\
\end{aligned}
\]
For $\zeta = \delta$ small enough, all the quantities above are positive. Moreover:
\[
W(3, k, Q') - \tfrac{1}{2} 
    = \tfrac{1}{4}(1 - \theta) + \theta 
        + \frac{m' - k + 1}{m' - 2} \frac{1 - 3\theta}{2}
        + b(\zeta + \delta).
\]
For $\zeta = \delta$ small enough, this quantity is positive for $k \le \tfrac{3 + m'}{2}$, and negative otherwise.
The (unweighted) majority matrix is therefore as illustrated by Table~\ref{tab:umm_manip_copeland} for the example $m=8$.

\begin{table}[ht]
\centering
	\caption{Unweighted majority matrix $M(Q)$ in the proof of the subcritical regime for Copeland, for $m=8$. Null coefficients are omitted for legibility.}\label{tab:umm_manip_copeland}
	\(\begin{array}{c|c|c|c|c|c|c|c|c|}
		  & 1 & 2 & 3 & 4 & 5 & 6 & 7 & 8 \\ \hline
		1 &   & 1 &   &   &   &   &   &   \\ \hline
		2 &   &   & 1 & 1 & 1 & 1 & 1 & 1 \\ \hline
		3 & 1 &   &   & 1 & 1 &   &   & 1 \\ \hline
		4 & 1 &   &   &   & 1 & 1 &   & 1 \\ \hline
		5 & 1 &   &   &   &   & 1 & 1 & 1 \\ \hline
		6 & 1 &   & 1 &   &   &   & 1 & 1 \\ \hline
		7 & 1 &   & 1 & 1 &   &   &   & 1 \\ \hline
		8 & 1 &   &   &   &   &   &   &   \\ \hline
	\end{array}\)
\end{table}

Note that the submatrix corresponding to candidates $\{3, \ldots, m'\}$ is circulant,  
so that each of these candidates wins against exactly half of the others.  
As a consequence:
\[
\begin{aligned}
&s_\Cop(1, Q') = 1, \\
&s_\Cop(2, Q') = m - 2, \\
&s_\Cop(j, Q') = \left\lfloor \tfrac{m}{2} \right\rfloor
    \;\; \text{for } j \in \{3, \ldots, m'\},\\
&s_\Cop(m, Q') = 1 \quad \text{if $m$ is even.} \\
\end{aligned}
\]
In particular,
\[
s_\Cop(2, Q') - s_\Cop(j, Q') 
    = \left\lceil \tfrac{m}{2} \right\rceil - 2
    > \left\lceil \tfrac{5}{2} \right\rceil - 2 > 0,
\]
hence candidate~2 obtains the highest Copeland score and is therefore elected.  
We conclude by Lemma~\ref{lem_delta_stable_cm}.
\end{proof}

In Section~\ref{sec:rcw} of the main paper, we already observed that for Slater and Copeland with $m=3$, the critical concentration parameter $\theta_c(f,3)$ may range from $0$ to $\tfrac{1}{4}$, depending on the tie-breaking rule. We now provide a more complete analysis of the cases excluded from Theorem~\ref{thm:critical_theta_black_family}: Slater with $m=3$, and Copeland with $m \in \{3, 4\}$. When $m=3$, the Slater and Copeland rules are equivalent; it therefore suffices to study Copeland with $m \in \{3, 4\}$.

Throughout the paper, we often refer simply to the \emph{Copeland rule}, since the parameter~$\alpha$ does not affect our main results.
Recall, however, that in the general definition given in Section~\ref{sec:zoology}, Copeland is parameterized by $\alpha\in[0,1]$, which specifies the additional score awarded for each tied pairwise comparison.
For the small values of~$m$ considered here, this parameter does play a role in the critical concentration parameters, together with the choice of tie-breaking rule.

\begin{proposition}\label{thm:copeland_few_candidates}
Let $f$ be $\alpha$-Copeland with $\alpha \in [0, 1]$ and let $m \in \{3, 4\}$. Depending on the tie-breaking rule, the critical concentration parameters may vary, but they always satisfy the following bounds:
\begin{enumerate}
\item The lower critical concentration parameter satisfies $\theta_\ell(f, m) \in [0, \frac{1}{4}]$.\label{enum:theta_ell}
\item The upper critical concentration parameter satisfies:\label{enum:theta_u}
\begin{enumerate}
\item If $m=3$, or if $m=4$ and $\alpha = 1$, then $\theta_u(f, m) \in[0, \frac{1}{4}]$.\label{enum:theta_u_1}
\item If $m=4$ and $\alpha < 1$, then $\theta_u(f, m) = \frac{1}{4}$.\label{enum:theta_u_2}
\end{enumerate}
\item The values of $\theta_\ell(f, m)$ and $\theta_u(f, m)$ may coincide or be distinct.\label{enum:theta_coincide_or_differ}
\end{enumerate}

Moreover, all the bounds stated above can be attained under suitable tie-breaking rules.

When $m=3$, the same conclusions hold for the Slater rule, since it coincides with Copeland.
\end{proposition}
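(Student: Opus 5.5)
Items \ref{enum:theta_ell} and \ref{enum:theta_u_1} need no Perturbed Culture analysis. Copeland is Condorcet-consistent, and for $m=3$ it coincides with Slater. Corollary~\ref{thm:theta_u_if_protected_by_rcw} therefore gives $\theta_u(f,m)\le\tfrac14$ for every tie-breaking rule and every $\alpha$. By definition, $\theta_\ell(f,m)\le\theta_u(f,m)$: for $\theta$ strictly between the two, the limit cannot be both $1$ and $0$. Hence both parameters lie in $[0,\tfrac14]$. The substance of the proposition is therefore three things: the lower bound $\theta_u\ge\tfrac14$ in case \ref{enum:theta_u_2}, the attainability of the extreme values, and item \ref{enum:theta_coincide_or_differ}.

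For \ref{enum:theta_u_2} ($m=4$, $\alpha<1$), I only need that, for $\theta<\tfrac14$, $\rho(f,4,n,\theta)$ does \emph{not} tend to $0$. A positive $\limsup$ suffices, so I will restrict to even $n$.

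I would start from the unison manipulation of Section~\ref{sec:analysis_um_2}, but split the manipulators between $(2\succ3\succ4\succ1)$ and $(2\succ4\succ3\succ1)$. Table~\ref{tab:wmm_um_2} and the Black construction (Table~\ref{tab:black_manipulated_profile}) show the following for $\theta<\tfrac14$: $1$ beats $2$, $2$ beats $3$ and $4$, and $3$ and $4$ both beat $1$. This holds stably near $\hat{P}$. The split is a free parameter that moves $W(3,4,Q)$ across an interval containing $\tfrac12$. With $n$ even and the manipulators forming a positive fraction of voters, an integer split achieving an exact tie between $3$ and $4$ exists. The Copeland scores are then $2$ for candidate~$2$, $1+\alpha<2$ for candidates $3$ and $4$, and $1$ for candidate~$1$. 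So candidate~$2$ wins strictly, whatever the tie-breaking rule.

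This is not $\delta$-stable, because the tie is exact. So I would argue directly on discrete profiles, via the weak law of large numbers as in Lemma~\ref{lem_um_in_neighborhood_generalized}. Along even $n$ the CM rate tends to $1$, so $\theta_u\ge\tfrac14$. The step I expect to be the main obstacle is checking that the achievable range of $W(3,4)$ always straddles the tie point uniformly in a neighborhood of $\hat{P}$. Sincere voters contribute about $\tfrac14(1-\theta)+\theta$ to $3\succ4$ and $\tfrac14(1-\theta)$ to $4\succ3$, and manipulators have weight $\tfrac12(1-\theta)$. Since $\theta<\tfrac14$, one has $\tfrac14(1-\theta)+\theta<\tfrac12$ and $\tfrac14(1-\theta)+\tfrac12(1-\theta)>\tfrac12$, which should suffice.

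For attainability and item \ref{enum:theta_coincide_or_differ}, I would use tie-breaking rules that depend on the profile, and treat the two cases as follows.
\begin{itemize}
\item For $m=3$, every profile without a Condorcet winner is a $3$-cycle with all Copeland scores equal, so the rule is entirely determined by the tie-break. Breaking ties with IRV gives Condorcet-IRV, hence $\theta_\ell=\theta_u=0$ by \citet{durand2025irv}. Breaking ties with Borda gives Black, hence $\theta_\ell=\theta_u=\tfrac14$ by Theorem~\ref{thm:critical_theta_black_family}; its manipulated profile is indeed a cycle.
\item For distinct values, I would use the IRV tie-break when $n$ is odd and the Borda tie-break when $n$ is even. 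For $\theta\in(0,\tfrac14)$ the rate then oscillates between limits $0$ and $1$, giving $\theta_\ell=0$ and $\theta_u=\tfrac14$.
\item For $m=4$ with $\alpha=1$, a tie between $3$ and $4$ gives them score $2$, the same as candidate~$2$. Analogous IRV-based versus Borda-based tie-breaks among top-scoring candidates should reproduce the endpoints $0$ and $\tfrac14$. For $\theta_\ell$ with $\alpha<1$, the same parity trick yields both $0$ and $\tfrac14$.
\end{itemize}
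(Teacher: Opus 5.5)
Your upper-bound argument and your treatment of case~2(b) match the paper's proof. That includes the exact $3$--$4$ tie along even $n$, justified by $\tfrac14(1-\theta)+\theta<\tfrac12$ and $\tfrac34(1-\theta)>\tfrac12$. Your $m=3$ constructions are also sound, but one claim needs a small repair. For even $n$, pairwise ties can occur, so Copeland with an IRV tie-break is not literally Condorcet-IRV. The fix is short. The target $c$ is still strictly beaten by $1$, so $s_\Cop(c,Q)\le 1\le s_\Cop(1,Q)$. Hence $c$ can only be tied at the top together with $1$, and the SCW property of $1$ then makes IRV among the tied candidates reject $c$.

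The genuine gap is $m=4$. There, every endpoint $0$ (item~1, and item~2(a) with $\alpha=1$) requires a tie-break that you have not exhibited, and the one you suggest fails. Take $\theta<\tfrac14$ and the unison manipulation $(2\succ3\succ4\succ1)$. With high probability all majorities are strict, for either parity of $n$. The Copeland scores of candidates $1,2,3,4$ are then $1,2,2,1$. So the top tie is between $2$ and $3$, and candidate~$1$ is not part of it. Any IRV-based tie-break among top-scoring candidates then elects $2$:
\begin{itemize}
\item IRV restricted to $\{2,3\}$ is just their majority contest, which $2$ wins since $W(2,3,Q)=\tfrac23(1-\theta)+\theta>\tfrac12$;
\item in the full IRV count of $Q$, both $3$ and $4$ are eliminated before $2$.
\end{itemize}
So IRV tie-breaking gives $\theta_\ell=\theta_u=\tfrac14$ for $m=4$. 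Consequently neither $\theta_\ell=0$, nor $\theta_u=0$ for $\alpha=1$, nor your parity construction with distinct values goes through for $m=4$. The underlying reason is that for $m=3$ the target has at most one point and so must tie with $1$, whereas for $m=4$ it can reach two points while $1$ has only one.

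The missing idea is a tie-break that protects candidate~$1$ even when $1$ is not among the tied candidates. The paper uses the following one: prefer $1$; otherwise prefer a candidate not defeated by $1$. It works by a counting argument, valid when $\alpha=1$ or $n$ is odd, so that all scores are integers:
\begin{itemize}
\item The target $c$ is still defeated by $1$, so $s_\Cop(c,Q)\le 2$.
\item The other three candidates share at least four points, so one of them has at least two.
\item If that candidate is not $1$, then $1$ has a single point and was defeated by both others. A rival with at least two points and not defeated by $1$ therefore exists, and the tie-break prefers it to $c$.
\end{itemize}
This gives $\theta_\ell=\theta_u=0$ for $\alpha=1$. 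For $\alpha<1$, it gives $\liminf_n\rho=0$ along odd $n$, hence $\theta_\ell=0$. To get distinct values for $\alpha=1$, use this tie-break on odd $n$ and one favoring $2$ on even $n$. Your Borda tie-break suffices for the even case, since in the unison profile $s_\Bor(2,Q)-s_\Bor(3,Q)=\tfrac{1}{12}(1+11\theta)>0$. For $\alpha<1$, the paper's tie-break alone already gives distinct values, because $\theta_u=\tfrac14$ holds regardless of the tie-break.
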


\begin{proof}[Proof of Proposition~\ref{thm:copeland_few_candidates}]
Since $f$ is Condorcet-consistent, Corollary~\ref{thm:theta_u_if_protected_by_rcw}
yields $\theta_\ell(f,m) \le \theta_u(f,m) \le \tfrac{1}{4}$. This establishes all the upper bounds in items (\ref{enum:theta_ell}), (\ref{enum:theta_u_1}), and~(\ref{enum:theta_u_2}).

Let $T$ be a tie-breaking rule defined as follows: among tied candidates, elect candidate~2 if possible; otherwise, elect an arbitrary candidate.
Consider the unison manipulation leading to the profile~$Q$ described in Section~\ref{sec:analysis_um_2}. 
The Copeland scores satisfy 
$s_\Cop(1, Q) = 1$ (candidate~1 defeats candidate~2), 
$s_\Cop(2, Q) = m-2$ (candidate~2 defeats all candidates except~1 and itself), 
$s_\Cop(3, Q) = m-2$ (candidate~3 defeats all candidates except~2 and itself), 
and, if $m=4$, $s_\Cop(4, Q) = 1$ (candidate~4 defeats candidate~1).
The tie-breaking rule~$T$ therefore selects candidate~2 as the winner.
It follows that $\theta_\ell(f,m) = \theta_u(f,m) = \tfrac{1}{4}$ under~$T$, showing that all the stated upper bounds are attainable.

Let $T'$ be a tie-breaking rule defined as follows: among tied candidates, elect candidate~1 if possible; otherwise, elect a candidate who does not suffer a pairwise defeat against~1 if possible; otherwise, elect an arbitrary candidate.
Although somewhat unconventional, this rule has a natural interpretation in a setting where candidate~1 represents the status quo and the other candidates represent proposals for change: in the event of a tie, the mechanism designer may wish to favor the status quo, or, failing that, a change that is preferred to it by a majority.
We now consider a manipulation attempt to a target profile~$Q$ in favor of some candidate $c \neq 1$.

\begin{itemize}
\item If $m=3$, then after manipulation, candidate~1 still defeats candidate~$c$, so that $s_\Cop(1, Q) \geq 1$ and $s_\Cop(c, Q) \leq 1$. The tie-breaking rule~$T'$ therefore ensures that candidate~$c$ cannot be elected (only the first clause of the tie-breaking rule is relevant in this case). Hence, $\theta_\ell(f,m) = \theta_u(f,m) = 0$.
\item If $m=4$ and $\alpha = 1$, then after manipulation, candidate~$c$ still suffers a defeat against candidate~$1$, and therefore $s_\Cop(c, Q) \leq 2$.
There are at least four integer Copeland points to be distributed among the three remaining candidates, denoted $(1,d,e)$, so at least one of them must receive at least two points.
If this candidate is~$1$, then the tie-breaking rule~$T'$ ensures that candidate~$c$ cannot be elected.
Otherwise, candidate~$1$ is defeated by both $d$ and~$e$, so one of these candidates has at least two points and is favored by the tie-breaking rule~$T'$.
In both cases, the manipulation attempt fails, which implies that $\theta_\ell(f,m) = \theta_u(f,m) = 0$.
\item If $m=4$ and $\alpha < 1$, the parity of $n$ must be taken into account.
If $n$ tends to infinity through odd values, the reasoning above still applies, and any manipulation attempt fails. Therefore, for any $\theta > 0$, we have $\liminf_{n\to\infty} \rho(f,m,n,\theta) = 0$. This implies that $\theta_\ell(f,m) = 0$.
\end{itemize}
In summary, the reasoning above shows that the (trivial) lower bound~$0$ is attainable for $\theta_\ell(f,m)$ in all cases of item~(\ref{enum:theta_ell}), and for $\theta_u(f,m)$ in the two cases of item~(\ref{enum:theta_u_1}).

We now examine the case where $m=4$, $\alpha < 1$, and $n$ tends to infinity through even values, regardless of the tie-breaking rule.
Assume that $\theta < \tfrac{1}{4}$.
We show that there exists a manipulation to a target profile~$Q$ in favor of candidate~$2$.
To begin with, manipulators rank candidate~$2$ first and candidate~$1$ last.
As in the unison manipulation described in Section~\ref{sec:analysis_um_2}, this yields
$s_\Cop(1,Q)=1$ and $s_\Cop(2,Q)=2$.
Next, in the expected profile~$\hat{P}$, Table~\ref{tab:wmm_sincere_voters} shows that in the pairwise comparison between candidates~$3$ and~$4$, each receives a score of at most $\frac{1}{4}(1-\theta)+\theta < \frac{1}{2}$.
Consequently, with high probability, their pairwise scores in the random profile~$P$ are both smaller than $\frac{n}{2}$.
Manipulators can therefore coordinate to enforce a tie between candidates~$3$ and~$4$, resulting in $s_\Cop(3,Q)=s_\Cop(4,Q)=1+\alpha$. Since $\alpha<1$, the manipulation succeeds. Thus, for $\theta < \tfrac{1}{4}$, $\limsup_{n\to\infty}\rho(f,m,n,\theta)=1$. This establishes $\theta_u(f,m)\ge \tfrac{1}{4}$, the lower bound in case~(\ref{enum:theta_u_2}).

We now show that the critical concentration parameters $\theta_\ell(f,m)$ and $\theta_u(f,m)$ may coincide or be distinct. The tie-breaking rule~$T$ yields $\theta_\ell(f,m)=\theta_u(f,m)=\tfrac{1}{4}$, whereas the tie-breaking rule~$T'$ yields $\theta_\ell(f,m)=\theta_u(f,m)=0$ in all cases where this is possible. Moreover, when $m=4$ and $\alpha<1$, we have seen that under~$T'$ the two critical concentration parameters differ, namely $\theta_\ell(f,m)=0$ and $\theta_u(f,m)=\tfrac{1}{4}$. In the general case, it suffices to use~$T'$ when $n$ is odd and~$T$ when $n$ is even to obtain $\theta_\ell(f, m) = 0$ and $\theta_u(f, m) = \frac{1}{4}$. This establishes item~(\ref{enum:theta_coincide_or_differ}).
\end{proof}

For $m=4$ and $\alpha<1$, note that the tie-breaking rule~$T'$ constructed in the proof preserves the homogeneity of~$f$.
It therefore provides an example of a voting rule that does not exhibit a phase transition, as in the counterexample introduced by \citet{durand2025irv},
but with the additional property of being homogeneous.
Moreover, the Copeland rule is standard; only the tie-breaking rule can be regarded as somewhat exotic.

\subsection{Coombs}

\begin{theorem}\label{thm:critical_theta_coombs}
For Coombs, the critical concentration parameter is
\[
\theta_c(\Coo, m) = \frac{m-1}{3m-1}.
\]
\end{theorem}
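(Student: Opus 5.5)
The plan is to follow the general strategy of Section~\ref{sec:appendix_manipulation_lemmas}. I first check that candidate~1 wins in a neighborhood of~$\hat{P}$. Then I show non-CM for $\theta > \frac{m-1}{3m-1}$ via Lemma~\ref{lem_not_CM_in_neighborhood}, and UM for $\theta < \frac{m-1}{3m-1}$ via Lemma~\ref{lem_um_in_neighborhood}. The whole analysis revolves around last-place (veto) counts in restricted profiles $\hat{P}_K$.

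\emph{Sincere winner.} In $\hat{P}_K$, each candidate of~$K$ is ranked last by weight $\frac{1-\theta}{|K|}$ of the IC part. The Dirac part~$\theta$ adds its weight to $\max(K)$. Hence Coombs eliminates $m, m-1, \ldots, 2$ in turn, with strict margins, and candidate~1 wins in a neighborhood of~$\hat{P}$.

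\emph{Supercritical regime.} Consider a manipulation in favor of some $c \neq 1$. The manipulators are the voters preferring $c$ to~1, of weight $\frac{1-\theta}{2}$. In the final duel between~1 and~$c$, the outcome is the pairwise majority, which~1 wins even after manipulation, since sincere voters keep $1 \succ c$ with weight $\frac{1}{2}(1-\theta)+\theta > \frac12$. So the manipulation must eliminate~1 at some round, with remaining set $K \ni 1, c$ of size $k \ge 3$.

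Sincere voters rank~1 above~$c$, so they never put~1 last. Therefore 1's last-place count is at most $\frac{1-\theta}{2}$. On the other hand, manipulators cannot decrease sincere last-place counts. Consider $e=\max(K)$ (or, if $e=c$, the argument still applies to $c$ itself, which must not be eliminated first). It receives at least the whole Dirac weight~$\theta$, plus the IC voters with $1 \succ c$ who rank~$e$ last in~$K$, of weight $\frac{1-\theta}{2k}$.

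Eliminating~1 before~$e$ thus requires $\frac{1-\theta}{2} \ge \theta + \frac{1-\theta}{2k}$, that is, $\theta \le \frac{k-1}{3k-1}$. This bound is increasing in~$k$, so it is at most $\frac{m-1}{3m-1}$. Hence, for $\theta > \frac{m-1}{3m-1}$, $\hat{P}$ is non-CM. The inequalities are strict, so this extends to a neighborhood of~$\hat{P}$, and Lemma~\ref{lem_not_CM_in_neighborhood} concludes.

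The point needing care here is the tie-breaking case $c = \max(K)$, together with a clean accounting of the sincere IC mass within~$K$. I expect this to be routine.

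\emph{Subcritical regime.} I use the unison ballot $(2 \succ 3 \succ \cdots \succ m \succ 1)$ of Section~\ref{sec:analysis_um_2}. In the first round, 1's last-place count is $\frac{1-\theta}{2}$. Candidate~$m$ has $\theta + \frac{1-\theta}{2m}$, each $j \in \{3,\ldots,m-1\}$ has $\frac{1-\theta}{2m}$, and candidate~2 has $\frac{1-\theta}{m}$ (from sincere voters only). For $\theta < \frac{m-1}{3m-1}$, candidate~1 strictly has the most last places and is eliminated first.

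Afterwards, the manipulators' ballots put $m$, then $m-1$, and so on, last. Each such candidate collects at least $\frac{1-\theta}{2}$ last places. By contrast, candidate~2 only collects sincere voters ranking it last among the remaining candidates, of weight at most $\frac{1-\theta}{2}\cdot\frac{2}{k}$ or so, which is strictly smaller for $k \ge 3$. In the final duel against some~$j$, candidate~2 wins because $W(2,j,Q) = \frac{2}{3}(1-\theta)+\theta > \frac12$ (Table~\ref{tab:wmm_um_2}). All comparisons are strict, so UM holds in a neighborhood of~$\hat{P}$, and Lemma~\ref{lem_um_in_neighborhood} concludes.

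I expect the main obstacle to be this round-by-round verification: that candidate~2 is never the one with the most last places in intermediate rounds $3 \le k < m$. It requires computing, in $\hat{P}^{1\succ 2}$ restricted to $\{2,\ldots,k\}$, the weight of sincere voters ranking~2 last. My first attempt will be to bound it by $\frac{1-\theta}{k-1}$ and compare it with the manipulators' $\frac{1-\theta}{2}$. If this bound fails for small~$k$, I will fall back on the explicit restricted tallies.
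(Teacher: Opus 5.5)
Your proposal is correct and follows essentially the same route as the paper. For non-CM, it bounds candidate~1's last places by $\frac{1-\theta}{2}$ and compares them with at least $\theta + \frac{1-\theta}{2k}$ sincere last places for $\max(K)$; for the subcritical regime, it uses the same unison ballot $(2 \succ \cdots \succ m \succ 1)$. The round-by-round obstacle you anticipate disappears as in the paper: once candidate~1 is out, $\max(K)$ receives all the manipulators' and all the Dirac last places, at least $\theta + \frac{1-\theta}{2} > \frac12$ of the total, so it is always the one eliminated and no restricted tally for candidate~2 is needed.
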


\begin{proof}[Proof of Theorem~\ref{thm:critical_theta_coombs}]
In the expected profile~$\hat{P}$, candidate~$m$ receives a fraction $\theta + \tfrac{1 - \theta}{m}$ of the vetoes,  
while every other candidate receives only $\tfrac{1 - \theta}{m}$.  
Thus, candidate~$m$ is eliminated first.  
At the next counting round, the same reasoning applies: candidate~$m-1$ is eliminated,  
and the process continues until only candidate~1 remains.  
Hence, candidate~1 is the winner in~$\hat{P}$ and in a neighborhood of it.

Assume that Coombs is coalitionally manipulable from the expected profile~$\hat{P}$ to another profile~$Q$ in favor of some candidate~$c$.  
Let $\{1, c, j_1, \ldots, j_k\}$ be the candidates still in contention at the round where candidate~1 is eliminated.  
Candidate~1 receives no vetoes from the sincere voters by definition, and at most $\tfrac{1}{2}(1 - \theta)$ vetoes from the manipulators.  
If $\max(c, j_1, \ldots, j_k) = j_\ell$ for some $\ell$, that is, if the opponent of candidate~1 of highest index is one $j_\ell$, then this candidate~$j_\ell$ receives $\tfrac{1}{2(k + 2)}(1 - \theta) + \theta$ vetoes from the sincere voters.  
Therefore, we must have
\[
\tfrac{1}{2}(1 - \theta) \ge \frac{1}{2(k + 2)}(1 - \theta) + \theta 
    \ge \frac{1}{2m}(1 - \theta) + \theta,
\]
which simplifies to $\theta \le \tfrac{m - 1}{3m - 1}$.  
On the other hand, if we have $\max(c, j_1, \ldots, j_k) = c$, that is, if the opponent of highest index is $c$, then candidate~$c$ receives $\tfrac{1}{k + 2}(1 - \theta) + \theta$ vetoes from the sincere voters,  
which leads to an even more restrictive condition on $\theta$.  
In either case, if $\theta > \tfrac{m - 1}{3m - 1}$, then Coombs is non-CM in~$\hat{P}$,  
and this property holds in a neighborhood of it.

Now consider the attempt of unison manipulation in favor of candidate~2 with ballot $(2 \succ \ldots \succ m \succ 1)$, as described in Section~\ref{sec:analysis_um_2}.  
At the first round, candidate~1 receives $\tfrac{1}{2}(1 - \theta)$ vetoes,  
candidate~2 receives $\tfrac{1}{m}(1 - \theta)$ (strictly fewer than candidate~1),  
candidate~$m$ receives $\tfrac{1}{2m}(1 - \theta) + \theta$,  
and any candidate $j \in \{3, \ldots, m - 1\}$ receives $\tfrac{1}{2m}(1 - \theta)$ (strictly fewer than candidate~$m$).  
Thus the worst veto score at round 1 is attained by either candidate~1 or~$m$, and only these two candidates need to be compared.\footnote{Note the importance of the assumption $m \ge 3$ here: if $m = 2$, candidates~2 and~$m$ coincide, and candidate~2 then receives $\tfrac{1}{2}(1 - \theta) + \theta$ vetoes.} If $\theta < \tfrac{m - 1}{3m - 1}$, candidate~1 obtains more vetoes than candidate~$m$ and is therefore eliminated.  
At the second round, candidate~$m$ receives more than $\tfrac{1}{2}(1 - \theta) + \theta > \tfrac{1}{2}$ vetoes and is thus eliminated.  
Similarly, in the subsequent counting rounds, the remaining candidates $3, \ldots, m - 1$ (if any, i.e., for $m \ge 4$)  
are eliminated in decreasing order of index, until candidate~2 is declared the winner.  
Therefore, Coombs is unison-manipulable in~$\hat{P}$ and in a neighborhood of it.

We conclude by applying Lemmas~\ref{lem_not_CM_in_neighborhood} and~\ref{lem_um_in_neighborhood}.
\end{proof}

\subsection{Bucklin}\label{sec:proof_bucklin}

While in the main body of the paper, we defined the Bucklin rule based on the median rank for concision, we use here the equivalent and more usual round-based definition of the Bucklin rule. At each round~$t$, the Bucklin score of a candidate~$c$ in profile~$P$ is defined as
\[
s_\Buc^t(c, P) = w(P^{r(c) \leq t}).
\]
If, at some round~$t$, a candidate reaches a score greater than $w(P)/2$,  
the procedure stops and the candidate with the highest current score is declared the winner. This process necessarily terminates, since for every candidate~$c$, we have $s_\Buc^{m(P)}(c, P) = w(P)$.

\begin{theorem}\label{thm:critical_theta_bucklin}
For Bucklin, the critical concentration parameter is
\[
\theta_c(\Buc, m) = \frac{m-2}{2m-2}.
\]
\end{theorem}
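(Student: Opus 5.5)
The key observation is that, unlike for the rules treated so far, candidate~1 need not win in a neighborhood of~$\hat{P}$. In the expected profile, the round-$t$ score of a candidate $k \le t$ is $\theta + \tfrac{t}{m}(1-\theta)$, and for $k>t$ it is $\tfrac{t}{m}(1-\theta)$. So candidates $1,\ldots,t$ are \emph{tied} at every round. I therefore plan to follow the variant of the general strategy announced in Section~\ref{sec:appendix_manipulation_lemmas}: split according to whether the process stops at round~1, and handle an arbitrary winner near~$\hat{P}$ in the subcritical case. The threshold comes from the identity
\[
\theta + \tfrac{1-\theta}{m} > \tfrac12 \iff \theta > \tfrac{m-2}{2m-2},
\]
which is exactly the condition for candidate~1 to have a strict majority of first places in~$\hat{P}$.

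\emph{Supercritical regime.} Assume $\theta > \tfrac{m-2}{2m-2}$. Then, in a neighborhood of~$\hat{P}$, candidate~1 has round-1 score strictly above $w(P)/2$. No other candidate can reach that score, so~1 wins at round~1. For a manipulation in favor of some $d\neq 1$, every voter ranking~1 first prefers~1 to~$d$ and is therefore sincere. Hence in any target profile~$Q$, candidate~1 still has round-1 score above $w(Q)/2$, the process still stops at round~1, and no other candidate can match that score. So~1 is still elected, the profile is non-CM, and Lemma~\ref{lem_not_CM_in_neighborhood} applies.

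\emph{Subcritical regime.} Assume $\theta < \tfrac{m-2}{2m-2}$. In a neighborhood of~$\hat{P}$, nobody has a majority at round~1. The stopping round~$t$ is the first round at which $\theta+\tfrac{t}{m}(1-\theta)>\tfrac12$, and the winner~$w$ is one of the nearly tied candidates $1,\ldots,t$, selected by the random fluctuations. I would avoid identifying~$w$ and instead treat two cases, proving $\delta$-stable-CM in each and concluding with Lemma~\ref{lem_delta_stable_cm}.

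\begin{enumerate}
\item \emph{Case $w\neq 1$.} All voters preferring~1 to~$w$, namely the Dirac part $\theta$ together with about half of the IC part, vote with~1 first and~$w$ last. Candidate~1 then obtains round-1 score about $\theta+\tfrac12(1-\theta)>\tfrac12$, with a margin that is robust under perturbations, and wins immediately.
\item \emph{Case $w=1$.} The voters preferring~2 to~1 (weight about $\tfrac12(1-\theta)$) rank~2 first and~1 last. They spread the candidates $3,\ldots,m$ over the middle positions by circular permutations, so that none of these candidates gains much at round~2. At round~1, candidate~1 has about $\theta+\tfrac{1-\theta}{m}<\tfrac12$, candidate~2 about $\tfrac12(1-\theta)<\tfrac12$, and the others less, so there is still no majority. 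At round~2, candidate~2 gets about $\tfrac12(1-\theta)+\theta+\tfrac{1-\theta}{m(m-1)}>\tfrac12$, by adding the manipulators and the sincere voters with top pair $(1,2)$. Candidate~1 gets only its sincere round-2 score, which is strictly smaller since the manipulators no longer contribute to it. A short computation should show that each $j\ge 3$ stays below~2 as well. So the process stops at round~2 with~2 elected, and all these inequalities are strict, hence stable under $\ell^\infty$-perturbations.
\end{enumerate}

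The main obstacle is the case split on the unknown winner, because Lemma~\ref{lem_delta_stable_cm} needs a manipulation for every profile of the neighborhood. Since both cases give strict, uniform margins with a common $\delta$, this should be fine. I will also double-check that the round-2 scores of candidates $j\ge 3$ in case~2 stay strictly below that of~2, including for $m=3$, where no circular spreading is available. If that inequality is tight, I would switch the target to candidate~3 or adjust the manipulators' middle orderings.
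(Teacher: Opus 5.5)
Your skeleton matches the paper's proof. The supercritical argument is the same: voters ranking~1 first are sincere, so the round-1 majority of~1 survives. The case split on whether the winner near~$\hat{P}$ is~1 is the same. The case-1 manipulation is also the same: voters preferring~1 to~$w$ push~1 to a round-1 majority of about $\theta+\frac12(1-\theta)$.

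Your case~2 differs from the paper's, and the difference matters. The paper uses the unison ballot $(2\succ m\succ\cdots\succ 1)$ and asserts $s_{\Buc}^2(j,R)=\frac{1-\theta}{m}$ for every $j>2$. That count includes only sincere voters. Since every manipulator ranks~$m$ second, in fact $s_{\Buc}^2(m,R)=\frac{1-\theta}{m}+\frac{1-\theta}{2}$. This exceeds $s_{\Buc}^2(2,R)=\frac{1-\theta}{2}+\theta+\frac{1-\theta}{m(m-1)}$ as soon as $\theta<\frac{m-2}{m^2-2}$. Your circular spreading of $3,\ldots,m$ over second place repairs this for $m\ge 4$. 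Each $j\ge 3$ then has round-2 score $(1-\theta)\big(\frac1m+\frac{1}{2(m-2)}\big)\le\frac{1-\theta}{2}$, so candidate~2 leads with a margin exceeding~$\theta$.

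The genuine gap is $m=3$ with $\theta\le\frac17$. You flag this case but leave it open, and neither fallback closes it.
\begin{itemize}
\item With~2 first and~1 last, the only manipulator ballot is $(2\succ 3\succ 1)$. It gives $s_{\Buc}^2(3)=\frac56(1-\theta)\ge\frac23(1-\theta)+\theta=s_{\Buc}^2(2)$, and there is no middle ordering left to adjust.
\item Retargeting to~3 is infeasible. For $m=3$, absent a round-1 majority, round~2 elects the candidate with fewest last places. The voters preferring~3 to~1 would then need to place more than $\theta+\frac{1-\theta}{3}$ last places on~1 and more than $\theta+\frac{1-\theta}{6}$ on~2. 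That exceeds their total weight $\frac{1-\theta}{2}$.
\end{itemize}

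The missing idea is to drop ``1 last'' for part of the coalition. Let a fraction $a=\frac{1-4\theta}{3(1-\theta)}\in(0,\frac13)$ of the voters preferring~2 to~1 vote $(2\succ 1\succ 3)$, and the rest vote $(2\succ 3\succ 1)$. Then:
\begin{itemize}
\item nobody has a round-1 majority;
\item $s_{\Buc}^2(2)=\frac{2+\theta}{3}>\frac12$;
\item $s_{\Buc}^2(2)-s_{\Buc}^2(1)=s_{\Buc}^2(2)-s_{\Buc}^2(3)=\frac{\theta}{2}$.
\end{itemize}
So candidate~2 wins with uniform strict margins, and Lemma~\ref{lem_delta_stable_cm} applies; any $a\in\big(\frac13-\frac{2\theta}{1-\theta},\frac13\big)$ works. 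This split also fixes the paper's own argument at $m=3$, which has the same defect there.
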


\begin{proof}[Proof of Theorem~\ref{thm:critical_theta_bucklin}]
Assume $\theta > \tfrac{m - 2}{2m - 2}$.  
In the expected profile~$\hat{P}$, the first-round score of candidate~1 is
$\theta + \tfrac{1 - \theta}{m} > \tfrac{1}{2}$,  
so candidate~1 is immediately elected.  
Consequently, Bucklin is non-CM,  
since this inequality would remain valid in the manipulated profile.

Now assume $\theta < \tfrac{m - 2}{2m - 2}$.  
In $\hat{P}$, and in a neighborhood of it, candidate~1 does not reach a majority in the first round.  
For now, we leave aside the question of who the eventual winner is.

Let $j$ be a fixed candidate and consider the profile~$Q$ obtained from~$\hat{P}$ by having all voters who prefer candidate~1 to candidate~$j$ cast the ballot $(1 \succ \ldots \succ m)$.  
Then
\[
s_\Buc^1(1, Q) = \theta + \tfrac{1}{2}(1 - \theta) > \tfrac{1}{2},
\]
so candidate~1 wins in~$Q$.  
The same conclusion holds when applying the same transformation to any profile~$P$ in a neighborhood of~$\hat{P}$.

Consider now the profile~$R$ obtained from~$\hat{P}$ by having all voters who prefer candidate~2 over candidate~1 cast the ballot $(2 \succ m \succ \ldots \succ 1)$.  
At round~2, we have:
\[
s_\Buc^2(2, R) = \theta + \frac{1}{m(m-1)}(1 - \theta) + \tfrac{1}{2}(1 - \theta) > \tfrac{1}{2}.
\]
Indeed, there are $(m - 2)!$ permutations in which candidate~2 is in second position and candidate~1 is above her (hence in first position),  
out of $m!$ possible permutations.  
Therefore, candidate~2 attains a strict majority. As for candidate~1, we have:
\[
s_\Buc^2(1, R) = \theta + \frac{1}{m}(1 - \theta) + \frac{m - 2}{m(m - 1)}(1 - \theta).
\]
This is because there are $(m - 2)(m - 2)!$ permutations where candidate~1 is in second position,  
candidate~2 is below her (yielding $m - 2$ possibilities),  
and the remaining candidates appear in any order, out of $m!$ total permutations.  
Hence,
\[
s_\Buc^2(2, R) - s_\Buc^2(1, R) 
    = \frac{m^2 - 5m + 8}{2m(m - 1)}(1 - \theta) > 0.
\]
Finally, for any $j > 2$, 
\[
s_\Buc^2(j, R) = \frac{1}{2m}(1 - \theta) + \frac{1}{2m}(1 - \theta) < s_\Buc^2(2, R),
\]
so candidate~2 is the winner.  
This conclusion also holds when applying the same transformation to any profile~$P$ in a neighborhood of~$\hat{P}$.

Now consider any profile~$P$ in a sufficiently small neighborhood of~$\hat{P}$ so that all the above inequalities hold.  
If $\Buc(P) = 1$, then $\Buc$ is unison-manipulable in favor of candidate~2;  
otherwise, it is unison-manipulable in favor of candidate~1.

We conclude by applying Lemmas~\ref{lem_not_CM_in_neighborhood} and~\ref{lem_um_in_neighborhood}.
\end{proof}

\subsection{Borda}

\begin{theorem}\label{thm:critical_theta_borda}
For Borda, the critical concentration parameter is
\[
\theta_c(\Bor, m) = \frac{m-2}{m+1}.
\]
\end{theorem}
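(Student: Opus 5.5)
The plan follows the general three-step strategy of Appendix~\ref{sec:appendix_manipulation_lemmas}, and relies on the identity $s_\Bor(c,P)=\sum_{d\neq c}W(c,d,P)$. This identity lets every Borda computation be read off a weighted majority matrix.

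\emph{Winner in the expected profile.} From Table~\ref{tab:wmm_original}, $s_\Bor(1,\hat P)=(m-1)\big[\tfrac12(1-\theta)+\theta\big]$. For the candidate $k\geq 2$, $s_\Bor(k,\hat P)=\tfrac{m-1}{2}(1-\theta)+(m-k)\theta$. Candidate~1 therefore wins strictly, and hence also in a neighborhood of~$\hat P$.

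\emph{Supercritical regime.} I would fix a target candidate $c=k\geq 2$ and bound the best a coalition can do. Only voters preferring $c$ to~1 may change ballots, and their total weight is $\tfrac12(1-\theta)$. For Borda it is optimal for them to rank $c$ first and~1 last, since this maximizes $s_\Bor(c,\cdot)$ and minimizes $s_\Bor(1,\cdot)$ simultaneously; I ignore the scores of the other candidates, which only weakens the manipulators. The resulting bounds are
\[
s_\Bor(1,Q)\geq \tfrac12(1-\theta)+\theta+(m-2)\big[\tfrac13(1-\theta)+\theta\big],
\qquad
s_\Bor(c,Q)\leq \tfrac12(1-\theta)+\sum_{d\notin\{1,c\}}\Big[\tfrac12(1-\theta)+w(\hat P^{1\succ c\text{ and } c\succ d})\Big].
\]
The sincere term $w(\hat P^{1\succ c\text{ and }c\succ d})$ equals $\tfrac16(1-\theta)$, plus $\theta$ if $d>c$. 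It is therefore largest for $c=2$, where it equals $\tfrac16(1-\theta)+\theta$ for every~$d$. In that case the gap is
\[
s_\Bor(2,Q)-s_\Bor(1,Q)=\frac{(m-2)-(m+1)\theta}{3},
\]
which is negative exactly when $\theta>\tfrac{m-2}{m+1}$. For $c=k>2$ the gap is smaller, so no manipulation succeeds from~$\hat P$. All inequalities are strict and continuous in the profile, so the conclusion holds in a neighborhood, and Lemma~\ref{lem_not_CM_in_neighborhood} applies. I should check one subtlety: within a neighborhood of $\hat P$ the bounds must hold with small error terms for every target~$Q$. This is fine because the bounds depend only on the sincere voters.

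\emph{Subcritical regime.} This is the step I expect to be the main obstacle. The naive unison ballot $(2\succ\cdots\succ m\succ1)$ fails for large~$m$: candidate~3 then gains $\tfrac34(1-\theta)+\theta$ against each later candidate (Table~\ref{tab:wmm_um_2}) and could overtake~2. I would instead reuse exactly the manipulation built in the subcritical proof for Black. There, a fraction $\tfrac{1-3\theta}{2(1-\theta)}$ of the manipulators votes $(2\succ\cdots\succ m\succ1)$ and the rest votes $(2\succ m\succ\cdots\succ3\succ1)$, which equalizes the contests among candidates $j\geq3$ at $\tfrac12$ (Table~\ref{tab:black_manipulated_profile}). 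That proof already computed, up to errors $b(2(m-1)(\zeta+\delta))$,
\[
s_\Bor(2,Q')-s_\Bor(1,Q')=\frac{(m-2)-(m+1)\theta}{3},
\qquad
s_\Bor(2,Q')-s_\Bor(j,Q')=\frac{m-2+(2m-1)\theta}{6}.
\]
Both quantities are positive for $\theta<\tfrac{m-2}{m+1}$ once $\zeta=\delta$ is small enough. Candidate~2 is then the Borda winner in every $Q'$ near~$Q$, so Borda is $\delta$-stable-CM in a neighborhood of~$\hat P$, and Lemma~\ref{lem_delta_stable_cm} concludes. Combining the two regimes gives $\theta_c(\Bor,m)=\tfrac{m-2}{m+1}$.
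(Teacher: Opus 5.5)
\textbf{Gap: the subcritical regime for $\theta>\tfrac13$.} Your winner computation and your supercritical argument are correct. For $\theta\le\tfrac13$, your subcritical manipulation is exactly the one the paper uses.

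The Black construction asks a fraction $\tfrac{1-3\theta}{2(1-\theta)}$ of the manipulators to vote $(2\succ3\succ\cdots\succ m\succ1)$. This weight is nonnegative only when $\theta\le\tfrac13$. In the Black proof that was harmless, because there $\theta<\tfrac14$. The Borda subcritical interval $\big(0,\tfrac{m-2}{m+1}\big)$, however, extends beyond $\tfrac13$ for every $m\ge4$: the threshold is $\tfrac25$ for $m=4$, $\tfrac12$ for $m=5$, and tends to $1$.

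This is not mere bookkeeping. For $3\le j<k$, the sincere voters alone give $W(j,k)\ge\tfrac14(1-\theta)+\theta$, which exceeds $\tfrac12$ once $\theta>\tfrac13$. So the $\tfrac12$ entries of Table~\ref{tab:black_manipulated_profile} cannot be realized, and the formula you quote, $s_\Bor(2,Q')-s_\Bor(j,Q')=\tfrac{m-2+(2m-1)\theta}{6}$, corresponds to no valid profile. As written, your argument yields only $\theta_\ell(\Bor,m)\ge\min\big(\tfrac13,\tfrac{m-2}{m+1}\big)$, which completes the proof only for $m=3$.

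\textbf{How to fix it.} The missing case needs a different ballot; this is why the paper splits at $\theta=\tfrac13$. For $\theta>\tfrac13$, let all manipulators vote $(2\succ m\succ\cdots\succ3\succ1)$.
\begin{itemize}
\item For $j\ge3$ one gets $s_\Bor(2,Q)-s_\Bor(j,Q)=\tfrac1{12}\big[5m-6j+5+(18j-5m-29)\theta\big]$.
\item The coefficient of $j$ is $18\theta-6>0$, so the strongest challenger is candidate~3.
\item $s_\Bor(2,Q)-s_\Bor(3,Q)=\tfrac1{12}\big[5m-13+(25-5m)\theta\big]$ is affine in $\theta$. It equals $\tfrac{5m-13}{12}>0$ at $\theta=0$ and $1$ at $\theta=1$, so it is positive on the whole range.
\item Any ballot with $2$ first and $1$ last gives $s_\Bor(2,Q)-s_\Bor(1,Q)=\tfrac{(m-2)-(m+1)\theta}{3}>0$.
\end{itemize}
This is therefore a unison manipulation with strict inequalities, and Lemma~\ref{lem_um_in_neighborhood} applies.

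Conversely, for small $\theta$ and $m>5$, this pure reversed ballot lets candidate~$m$ overtake~2, since $s_\Bor(2,Q)-s_\Bor(m,Q)=\tfrac1{12}\big[5-m+(13m-29)\theta\big]$. That is why your mixture is the right tool below $\tfrac13$. The two constructions together cover the whole subcritical range.
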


\begin{proof}[Proof of Theorem~\ref{thm:critical_theta_borda}]
Assume that the expected profile~$\hat{P}$ is coalitionally manipulable towards some profile~$Q$ in favor of a candidate~$c$.  
In particular, candidate~$c$ must obtain a higher Borda score than candidate~1 in~$Q$.  
The most favorable case clearly occurs for $c = 2$, when the manipulators place~2 at the top and~1 at the bottom of their rankings,  
as in the unison manipulation described in Section~\ref{sec:analysis_um_2}.  
We then have:
\[
\begin{aligned}
s_\Bor(1, Q) &= \tfrac{1}{2}(1 - \theta) + \theta 
    + (m - 2)\left[\tfrac{1}{3}(1 - \theta) + \theta\right], \\
s_\Bor(2, Q) &= \tfrac{1}{2}(1 - \theta) 
    + (m - 2)\left[\tfrac{2}{3}(1 - \theta) + \theta\right].
\end{aligned}
\]
For the manipulation to succeed, we require $s_\Bor(2, Q) \ge s_\Bor(1, Q)$,  
which simplifies to
\[
\theta \le \frac{m - 2}{m + 1}.
\]
By contraposition, if $\theta > \tfrac{m - 2}{m + 1}$, then Borda is non-CM,  
and this property also holds in a neighborhood of~$\hat{P}$.

\medskip
Assume now that $\theta < \tfrac{m - 2}{m + 1}$.  
We construct a manipulation in which all voters who prefer candidate~2 over candidate~1 cast ballots with candidate~2 first and candidate~1 last.  
By the above calculus, we already know that candidate~2 then obtains a strictly higher score than candidate~1;  
it therefore remains to ensure that candidate~2 also outperforms every other candidate~$j$.

\paragraph{Case $\theta > \tfrac{1}{3}$}

Consider a manipulation attempt~$Q$ in favor of candidate~2,  
where all manipulators cast the common ballot $(2 \succ m \succ \ldots \succ 3 \succ 1)$.  
For any candidate $j \notin \{1, 2\}$, we have:
\[
s_\Bor(j, Q)
    = \frac{1 - \theta}{6} + \frac{1 - \theta}{3} + (m - 3)\frac{1 - \theta}{4}
      + \theta (m - j) + \frac{1 - \theta}{2}(j - 2).
\]
From this, we deduce:
\[
s_\Bor(2, Q) - s_\Bor(j, Q)
    = \tfrac{1}{12} \left[ 5m - 6j + 5 + (-5m + 18j - 29)\theta \right].
\]
The dependence on~$j$ is given by the coefficient $(-6 + 18\theta)$.  
Since $\theta > \tfrac{1}{3}$, this coefficient is positive,  
and the most threatening contender is therefore candidate~3.  
We then have:
\[
s_\Bor(2, Q) - s_\Bor(3, Q)
    = \tfrac{1}{12}\left[ 5m - 13 + (-5m + 25)\theta \right].
\]
This expression is affine in~$\theta$,  
so it suffices to check that it is positive at the extreme values of~$\theta$.  
For $\theta = 0$:
\[
s_\Bor(2, Q) - s_\Bor(3, Q) = \tfrac{1}{12}(5m - 13) > 0.
\]
For $\theta = 1$:
\[
s_\Bor(2, Q) - s_\Bor(3, Q) = 1 > 0.
\]
Hence, the manipulation succeeds.

\paragraph{Case $\theta \leq \tfrac{1}{3}$}
Let $Q$ be the profile obtained from~$\hat{P}$ by having all voters who prefer candidate~2 over candidate~1 modify their ballots as follows:
\begin{itemize}
    \item A fraction~$\alpha$ of them vote $(2 \succ m \succ \ldots \succ 3 \succ 1)$,
    \item A fraction~$(1 - \alpha)$ of them vote $(2 \succ 3 \succ \ldots \succ m \succ 1)$,
\end{itemize}
where $\alpha \in [0, 1]$ will be chosen shortly.  
For any $j \notin \{1, 2\}$, we have:
\[
\begin{aligned}
s_\Bor(j, Q) 
    &= \frac{1 - \theta}{6} + \frac{1 - \theta}{3} 
       + (m - 3)\frac{1 - \theta}{4} + \theta (m - j) \\
    &\quad + \alpha \frac{1 - \theta}{2}(j - 2)
       + (1 - \alpha)\frac{1 - \theta}{2}(m + 1 - j).
\end{aligned}
\]
Choosing $\alpha = \tfrac{1 + \theta}{2(1 - \theta)}$ makes this expression independent of~$j$, with resulting value:
\[
s_\Bor(j, Q) = \tfrac{1}{12}\big[ 6m - 6 - 12\theta \big].
\]
We then obtain:
\[
s_\Bor(2, Q) - s_\Bor(j, Q) 
    = \tfrac{1}{6}\big[ m - 2 + (2m - 1)\theta \big] > 0.
\]
Hence, the manipulation succeeds. It just remains to check that we have $\alpha \leq 1$, which is the case because $\theta \leq \frac{1}{3}$.

We conclude by using Lemmas~\ref{lem_not_CM_in_neighborhood}, \ref{lem_um_in_neighborhood} and \ref{lem_delta_stable_cm}.
\end{proof}

\subsection{Kim--Roush}

\begin{theorem}\label{thm:critical_theta_kim_roush}
For Kim--Roush, the critical concentration parameter is
\[
\theta_c(\KR, m) = \frac{m-2}{m}.
\]
\end{theorem}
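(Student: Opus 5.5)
The plan follows the general strategy of Appendix~\ref{sec:appendix_manipulation_lemmas}. First I check that candidate~1 wins in a neighborhood of~$\hat{P}$. Next, for $\theta > \tfrac{m-2}{m}$, I show that $\hat{P}$ is non-CM, with strict inequalities so that this extends to a neighborhood, and apply Lemma~\ref{lem_not_CM_in_neighborhood}. Finally, for $\theta < \tfrac{m-2}{m}$, I exhibit a unison manipulation in favor of candidate~2 that succeeds in a neighborhood, and apply Lemma~\ref{lem_um_in_neighborhood}. Throughout, I use the fact that Kim--Roush eliminates, in a round with candidate set $S$ of size $k$ and normalized total weight~1, exactly the candidates whose share of last-place votes in $\hat{P}_S$ (or the relevant restricted profile) is strictly greater than the average $\tfrac{1}{k}$.

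Sincere winner. In a round with remaining set $S$, let $h=\max S$ be the remaining candidate of highest index. It receives $\theta + \tfrac{1-\theta}{k}$ vetoes, which is strictly above $\tfrac{1}{k}$. Every other candidate receives $\tfrac{1-\theta}{k} < \tfrac{1}{k}$. Hence exactly $h$ is eliminated at each round, and candidate~1 wins. Because all these inequalities are strict, the same elimination sequence holds in a neighborhood of~$\hat{P}$.

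Supercritical regime. Suppose a manipulation in favor of some $c \neq 1$ succeeds. Then candidate~1 must be eliminated at some round with candidate set $S \ni 1, c$ of size $k$, where $2 \le k \le m$. Sincere voters prefer~1 to~$c$, so they never rank~1 last within~$S$. Candidate~1's vetoes therefore come only from manipulators, whose weight is $w(\hat{P}^{c\succ 1}) = \tfrac{1}{2}(1-\theta)$. Elimination requires strictly more than average, so we need
\[
\tfrac{1}{2}(1-\theta) > \tfrac{1}{k} \ge \tfrac{1}{m}.
\]
This is equivalent to $\theta < \tfrac{m-2}{m}$. So for $\theta > \tfrac{m-2}{m}$, $\hat{P}$ is non-CM. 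The violated inequality is strict, and it remains violated for profiles near~$\hat{P}$ because the manipulator weight is continuous in the profile.

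Subcritical regime. All voters preferring~2 to~1 cast the ballot $(2 \succ 3 \succ \cdots \succ m \succ 1)$; any ballot with~2 first and~1 last would do.
\begin{itemize}
\item Round~1. Candidate~1 receives $\tfrac{1}{2}(1-\theta)$ vetoes, which exceeds $\tfrac{1}{m}$ exactly when $\theta < \tfrac{m-2}{m}$, so candidate~1 is eliminated. Candidate~2 receives only the vetoes of sincere IC voters ranking it last, namely $\tfrac{1-\theta}{m} < \tfrac{1}{m}$, so it survives.
\item Later rounds, with $1 \notin S$ and $|S| = k$. Manipulators rank~2 first and Dirac voters rank~2 above $3,\ldots,m$, so neither group vetoes~2. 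Candidate~2's vetoes come only from IC voters placing it last within~$S$. These have weight at most $\tfrac{1-\theta}{k} < \tfrac{1}{k}$, so~2 is never eliminated and ends up as the winner.
\end{itemize}
Other candidates, such as~$m$, may or may not be eliminated together with~1, and which ones go can vary within the neighborhood. This does not matter, because only the strict inequalities concerning candidates~1 and~2 are used. The main point needing care is exactly this robustness: I deliberately track only the fates of~1 and~2, so the argument does not depend on the exact elimination pattern of the other candidates. The profile is thus UM in a neighborhood of~$\hat{P}$, and Lemma~\ref{lem_um_in_neighborhood} gives $\lim_{n\to\infty}\rho(\KR,m,n,\theta)=1$, completing the proof that $\theta_c(\KR,m)=\tfrac{m-2}{m}$.
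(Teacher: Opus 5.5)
Your proposal is correct and follows the paper's proof essentially step by step. It has the same sincere elimination order ($m, m-1, \ldots$), the same bound $\tfrac12(1-\theta)$ on candidate~1's vetoes at the round it is eliminated (which forces $\theta \le \tfrac{m-2}{m}$), and the same unison ballot $(2 \succ \cdots \succ m \succ 1)$ in the subcritical regime; your ``at most $\tfrac{1-\theta}{k}$'' bound on candidate~2's later-round vetoes is slightly more careful than the paper's stated value (the exact figure is $\tfrac{1-\theta}{k+1}$), and the paper's non-strict ``at least the average'' condition additionally covers a degenerate round in which all remaining candidates are tied.
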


\begin{proof}[Proof of Theorem~\ref{thm:critical_theta_kim_roush}]
In the expected profile~$\hat{P}$, candidate~$m$ receives $\theta + \tfrac{1 - \theta}{m}$ vetoes,  
while every other candidate receives only $\tfrac{1 - \theta}{m}$.  
Hence, only candidate~$m$ is eliminated in the first round.  
The same reasoning applies at each subsequent round:  
the candidate with the highest remaining index is eliminated,  
until candidate~1 is declared the winner.

Assume that Kim--Roush is coalitionally manipulable from the expected profile~$\hat{P}$ to some profile~$Q$ in favor of a candidate~$c$.  
Consider the round in which candidate~1 is eliminated.  
Since~$c$ must still be present at that stage, candidate~1 receives no vetoes from the sincere voters  
and at most $\tfrac{1}{2}(1 - \theta)$ vetoes from the manipulators.  
In order to be eliminated, candidate~1 must receive a fraction of vetoes
at least equal to the average. Hence, it must hold that
\[
\tfrac{1}{2}(1 - \theta) \ge \frac{1}{k},
\]
where $k$ denotes the number of remaining candidates. 
In particular, we must have $\tfrac{1}{2}(1 - \theta) \ge \tfrac{1}{m}$,  
which simplifies to $\theta \le \tfrac{m - 2}{m}$.  
Hence, if $\theta > \tfrac{m - 2}{m}$, the Kim--Roush rule is non-CM  
in~$\hat{P}$ and in a neighborhood of it.

Now consider the case where all voters who prefer candidate~2 to candidate~1 cast the ballot $(2 \succ \ldots \succ m \succ 1)$, as described in Section~\ref{sec:analysis_um_2}.  
At the first round, candidate~1 receives $\tfrac{1}{2}(1 - \theta)$ vetoes.  
If $\theta < \tfrac{m - 2}{m}$, this quantity exceeds $\tfrac{1}{m}$,  
so candidate~1 is eliminated.  
Conversely, candidate~2 receives only $\tfrac{1}{m}(1 - \theta) < \tfrac{1}{m}$ vetoes and is therefore not eliminated.  
In the subsequent rounds, since there is always at least one candidate with index greater than~2,  
candidate~2 continues to receive $\tfrac{1}{k}(1 - \theta) < \tfrac{1}{k}$ vetoes  
(where $k$ denotes the number of remaining candidates).  
Hence, candidate~2 is never eliminated and eventually wins.  
The same conclusion holds for any profile in a neighborhood of~$\hat{P}$.

We conclude by using Lemmas \ref{lem_not_CM_in_neighborhood} and \ref{lem_um_in_neighborhood}.
\end{proof}

\subsection{Veto}

\begin{theorem}\label{thm:critical_theta_veto}
For Veto, the critical concentration parameter is
\[
\theta_c(\Vet, m) = 1.
\]
\end{theorem}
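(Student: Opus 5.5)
The statement concerns only the regime $\theta<1$. Indeed, $\theta_u(\Vet,m)\le 1$ holds trivially, so it suffices to show $\theta_\ell(\Vet,m)\ge 1$. That is, for every $\theta\in(0,1)$ and $m\ge 3$, I want to show that $\lim_{n\to\infty}\rho(\Vet,m,n,\theta)=1$.

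The subtlety, already announced in the main text, is that candidate~1 does not win robustly around~$\hat P$. In $\hat P$, candidate~$m$ receives $\theta+\tfrac{1-\theta}{m}$ vetoes, while every candidate in $\{1,\dots,m-1\}$ receives exactly $\tfrac{1-\theta}{m}$. These $m-1$ candidates are therefore tied, and in a profile $P$ near $\hat P$ the winner $w=\Vet(P)$ is some candidate of $\{1,\dots,m-1\}$ determined by fluctuations and tie-breaking. I will therefore not identify the winner. Instead, I will show that every $P$ in a small neighborhood of $\hat P$ is $\delta$-stable-CM, whatever its winner $w$, and then conclude by Lemma~\ref{lem_delta_stable_cm}.

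Fix such a $P$, with winner $w\in\{1,\dots,m-1\}$, and pick any other $c\in\{1,\dots,m-1\}\setminus\{w\}$; this is possible because $m\ge 3$. The manipulators are all voters with $c\succ w$. Since their only constraint is to prefer $c$ to $w$, they may cast arbitrary ballots, and none of them can be vetoing $w$'s rival~$c$. I first compute (up to $O(\zeta)$ errors) the vetoes coming from sincere voters ($w\succ c$). Candidate $c$ receives $\tfrac{1-\theta}{m}$, since every voter ranking $c$ last prefers $w$ to $c$. Candidate $w$ receives $0$. Each $d\notin\{c,w\}$ receives $\tfrac{1-\theta}{2m}$, plus an extra $\theta$ for $d=m$ when $w<c$, because the Dirac voters are then sincere. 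The manipulator weight is $\tfrac{1-\theta}{2}$, plus $\theta$ when $c<w$. The manipulators then distribute their vetoes so that $w$ and every $d\notin\{c,w\}$ end up with strictly more than $\tfrac{1-\theta}{m}$ vetoes, while $c$ keeps $\tfrac{1-\theta}{m}$.

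The key counting step is the following. Ignoring the $\theta$ terms, the deficit to fill is
\[
\tfrac{1-\theta}{m}+(m-2)\tfrac{1-\theta}{2m}=\tfrac{1-\theta}{2},
\]
which exactly equals the basic manipulator weight. This is the main obstacle: the construction is tight at $\theta=0$, which is consistent with Impartial Culture being borderline. The positive $\theta$ provides the strict slack in both cases. If $c<w$, the Dirac voters join the manipulators and bring extra weight $\theta$. If $c>w$, candidate $m\notin\{c,w\}$ already receives $\theta$ extra sincere vetoes, so the manipulators need $\theta$ less. In either case, I can split the slack $\theta$ evenly, giving each of $w$ and each $d\notin\{c,w\}$ a margin of order $\theta/m$ above $\tfrac{1-\theta}{m}$.

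The resulting target $Q$ elects $c$ with strict inequalities of size $\Omega(\theta/m)$. Hence, for $\zeta$ and $\delta$ small enough relative to $\theta/m$, every $Q'$ with $d_\infty(Q,Q')<\delta$ still elects $c$, independently of tie-breaking. Every voter who changed ballot prefers $c$ to $w$, so $P$ is $\delta$-stable-CM. The choices of $c$ and of the split depend on $P$, but the margins are uniform over the neighborhood, which is all that Lemma~\ref{lem_delta_stable_cm} requires.
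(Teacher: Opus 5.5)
Your route is essentially the paper's. The paper also avoids identifying the winner near $\hat P$ and proves $\delta$-stable-CM for every nearby profile. When~1 wins, it uses your counting with $(w,c)=(1,2)$. When some $w\neq 1$ wins, it simply lets the $\theta$ Dirac voters spread their vetoes evenly over $2,\dots,m$, which elects~1 with margin $\tfrac{\theta}{m-1}$. Your single computation over all pairs $(w,c)$ is a harmless variant of this.

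One step is wrong as stated, and it is exactly where $\theta<1$ has to enter; your argument never uses it. Consider the case $c>w$, which is forced when $w=1$. Candidate~$m$ already has $\tfrac{1-\theta}{2m}+\theta$ sincere vetoes, so it needs $\max\bigl(0,\tfrac{1-\theta}{2m}-\theta\bigr)$ from the manipulators. You cannot subtract more than its whole requirement, so the saving is $\min\bigl(\theta,\tfrac{1-\theta}{2m}\bigr)$, not $\theta$.

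Concretely, for $\theta>\tfrac{1}{2m+1}$ the manipulators must still supply $\tfrac{(m-1)(1-\theta)}{2m}$ out of their $\tfrac{1-\theta}{2}$. That leaves slack only $\tfrac{1-\theta}{2m}$, which vanishes at $\theta=1$. It must vanish there, since at $\theta=1$ nobody can manipulate against winner~1. This is what the $\max(0,\cdot)$ term in the paper's inequality accounts for.

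The repair is immediate. The slack is positive for every $\theta\in(0,1)$, so each candidate other than~$c$ can be given a margin of at least $\min\bigl(\theta,\tfrac{1-\theta}{2m}\bigr)/(m-1)$. This margin is uniform over the neighborhood, so Lemma~\ref{lem_delta_stable_cm} applies. Only your claim of a margin of order $\theta/m$ has to be dropped.
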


\begin{proof}[Proof of Theorem~\ref{thm:critical_theta_veto}]
In a neighborhood of~$\hat{P}$, candidate~$m$ cannot win, while any other candidate may be elected. For now, we leave aside the question of who the actual winner is.

Consider the profile~$Q$ obtained from~$\hat{P}$ by the following transformation.  
Select a fraction~$\theta$ of voters who have the sincere ranking $(1 \succ \ldots \succ m)$ (intuitively, the ``Dirac'' component of the profile), and modify their ballots so that their vetoes are evenly distributed among the $m-1$ other candidates.
By symmetry of the remaining voters, it is then clear that candidate~1 becomes the winner. Note that this argument requires $\theta > 0$.

Consider now a manipulation attempt in favor of candidate~2, performed by voters who prefer candidate~2 to candidate~1.
The contributions of the sincere voters to the Veto scores are as follows:
\[
\begin{aligned}
&s_\Vet(1, \hat{P}^{1 \succ 2}) = 0,\\
&s_\Vet(2, \hat{P}^{1 \succ 2}) = -\frac{1}{m}(1 - \theta),\\
&s_\Vet(j \notin \{1, 2, m\}, \hat{P}^{1 \succ 2}) = -\frac{1}{2m}(1 - \theta),\\
&s_\Vet(m, \hat{P}^{1 \succ 2}) = -\theta - \frac{1}{2m}(1 - \theta).
\end{aligned}
\]
We need a sufficient number of manipulators to compensate for the score differences  
between candidate~2 and the others ($1, 3, \ldots, m$):
\[
\begin{aligned}
\frac{1 - \theta}{2} >\;& 
(s_\Vet(1, \hat{P}^{1 \succ 2}) - s_\Vet(2, \hat{P}^{1 \succ 2})) \\
&+ (m - 3)\big(s_\Vet(3, \hat{P}^{1 \succ 2}) - s_\Vet(2, \hat{P}^{1 \succ 2})\big) \\
&+ \max\big(0,\, s_\Vet(m, \hat{P}^{1 \succ 2}) - s_\Vet(2, \hat{P}^{1 \succ 2})\big),
\end{aligned}
\]
which simplifies to
\[
\theta > 0.
\]
This inequality always holds,  
so it is possible to distribute the manipulators’ vetoes in such a way  
that candidate~2 obtains the highest score.

Now consider any profile~$P$ in a sufficiently small neighborhood of~$\hat{P}$.  
If $\Vet(P) = 1$, then Veto is coalitionally manipulable in favor of candidate~2;  
otherwise, it is manipulable in favor of candidate~1.  
In both cases, one can establish $\delta$-stable manipulability using the same arguments as in the previous proofs.  

We conclude by applying Lemma~\ref{lem_delta_stable_cm}.
\end{proof}

Let us temporarily allow the case $\theta = 0$, corresponding to Impartial Culture.
In this model, \citet{kim1996manipulability} showed that the common limiting CM rate of Kim--Roush and Veto is strictly less than~1.
However, in both cases, we have established that the limiting CM rate is~1 for $\theta \in (0, \theta_c(f, m))$, which is a non-empty interval since $\theta_c(f, m) > 0$.
Each of these two rules thus has the surprising property that its limiting CM rate is not a decreasing function of~$\theta$, because of their singular behavior when $\theta=0$.

\medskip
Taken together, the results of Appendix~\ref{sec:appendix_proof_of_theoretical_results} establish the critical concentration parameters for all voting rules considered in this work, as summarized in Theorem~\ref{thm:critical_thetas}, and additionally show that $\theta_c(f, m) = \frac{1}{7}$ for Split Cycle and Viennot.

%\ThmCriticalThetas*

\section{Additional Figures}\label{sec:appendix_additional_figures}

\renewcommand{\axisWidth}{14cm}
\renewcommand{\axisHeight}{8.1cm}
\begin{figure}[t]
\begin{subfigure}[t]{\linewidth}
  \centering
  \input{netflix_more_rules_cm_rate_bar_plot.tex}
  \caption{Netflix dataset.}
  \label{fig:netflix_cm_rate_bar_plot_more}
\end{subfigure}

\bigskip\bigskip
\begin{subfigure}[t]{\linewidth}
  \centering
  \input{fairvote_more_rules_cm_rate_bar_plot.tex}
  \caption{FairVote dataset.}
  \label{fig:fairvote_cm_rate_bar_plot_more_rules}
\end{subfigure}

\caption{CM rates for all values of $m$. Solid bars show the fraction of coalitionally manipulable profiles, with thin vertical lines indicating algorithmic uncertainty. Colors group rules by critical concentration parameter $\theta_c(f,m)$ (Table~\ref{tab:theo_results}). Dashed horizontal lines indicate, from top to bottom, the fraction of profiles without a Resistant, Set-Safe, Pair-Safe, or Super Condorcet winner.}
\label{fig:cm_rate_bar_plot}
\end{figure}
\renewcommand{\axisWidth}{\axisWidthMemory}
\renewcommand{\axisHeight}{\axisHeightMemory}

Figures~\ref{fig:netflix_cm_rate_bar_plot_more} and~\ref{fig:fairvote_cm_rate_bar_plot_more_rules} revisit Figures~\ref{fig:netflix_cm_rate_bar_plot} and~\ref{fig:fairvote_cm_rate_bar_plot}.
They again report overall CM rates, but now include all values of~$m$, rather than being restricted to $m \geq 5$.
In addition, they incorporate the two voting rules analyzed only in the appendix,
namely Split Cycle and Viennot.

First, we observe that for Slater and Copeland, the CM rates are less closely aligned with the RCW bound and with the CM rate of Black.
This behavior was expected, since our theoretical results do not apply to Slater when $m=3$, nor to Copeland when $m \in \{3,4\}$.

Split Cycle empirically confirms its classification within the Maximin family.

The same observation holds for Viennot, and this case is particularly interesting.
Indeed, Viennot shares the same critical concentration parameter as the other rules in the family, but not the same structural connection with the PSCW notion.
This further supports the predictive power of the critical concentration parameter.

\end{document}

%% file: commands.tex
\usepackage{subcaption}
\usepackage{etoc}
\usepackage{ifthen}
\usepackage{thmtools}

\usepackage{pgfplots}
\pgfplotsset{compat=1.18}
\newcommand{\axisWidth}{7.5cm}
\newcommand{\axisWidthMemory}{7.5cm}
\newcommand{\axisSmallerWidth}{7.5cm}
\newcommand{\axisHeight}{5.8cm}
\newcommand{\axisHeightMemory}{5.8cm}
\newcommand{\legendFont}{\small}

\usepackage{microtype}

\newcommand{\Bal}{\textnormal{Bal}}

\newcommand{\Bor}{\textnormal{Bor}}
\newcommand{\Buc}{\textnormal{Buc}}
\newcommand{\Coo}{\textnormal{Coo}}
\newcommand{\Cop}{\textnormal{Cop}}
\newcommand{\Dod}{\textnormal{Dod}}
\newcommand{\IRV}{\textnormal{IRV}}

\newcommand{\Kem}{\textnormal{Kem}}
\newcommand{\KR}{\textnormal{KR}}
\newcommand{\Max}{\textnormal{Max}}

\newcommand{\Nan}{\textnormal{Nan}}
\newcommand{\Plu}{\textnormal{Plu}}
\newcommand{\PR}{\textnormal{PR}}
\newcommand{\RP}{\textnormal{RP}}

\newcommand{\Sla}{\textnormal{Sla}}
\newcommand{\Sch}{\textnormal{Sch}}
\newcommand{\SC}{\textnormal{SC}}
\newcommand{\SD}{\textnormal{SD}}

\newcommand{\Vet}{\textnormal{Vet}}
\newcommand{\Vie}{\textnormal{Vie}}
\newcommand{\You}{\textnormal{You}}

\newcommand{\Strength}{\textnormal{Strength}}

\newcommand{\PSCW}{\textnormal{PSCW}}
\newcommand{\SSCW}{\textnormal{SSCW}}
\newcommand{\RCW}{\textnormal{RCW}}

%% file: cm_rate_of_theta_and_n_v_plu.tex
% This file was created with tikzplotlib v0.10.1.
\begin{tikzpicture}

\definecolor{crimson2143940}{RGB}{214,39,40}
\definecolor{darkgrey176}{RGB}{176,176,176}
\definecolor{darkorange25512714}{RGB}{255,127,14}
\definecolor{darkturquoise23190207}{RGB}{23,190,207}
\definecolor{forestgreen4416044}{RGB}{44,160,44}
\definecolor{goldenrod18818934}{RGB}{188,189,34}
\definecolor{grey127}{RGB}{127,127,127}
\definecolor{lightgrey204}{RGB}{204,204,204}
\definecolor{mediumpurple148103189}{RGB}{148,103,189}
\definecolor{orchid227119194}{RGB}{227,119,194}
\definecolor{sienna1408675}{RGB}{140,86,75}
\definecolor{steelblue31119180}{RGB}{31,119,180}

\begin{axis}[reverse legend,
height=\axisHeight,
legend cell align={left},
legend style={font=\legendFont, fill opacity=1, draw opacity=1, text opacity=1, draw=lightgrey204},
tick align=outside,
tick pos=left,
width=\axisWidth,
x grid style={darkgrey176},
xlabel={$\theta$},
xmin=0, xmax=1,
xtick style={color=black},
y grid style={darkgrey176},
ylabel={$\rho(\text{Plu}, 4, n, \theta)$},
ymin=-0.05, ymax=1.05,
ytick={-0.2, 0.0, 0.2, 0.4000000000000001, 0.6000000000000001, 0.8, 1.0000000000000002, 1.2000000000000002},
ytick style={color=black}
]
\addplot [semithick, steelblue31119180]
table {%
0 0
0.01 0
0.02 0
0.03 0
0.04 0
0.05 0
0.06 0
0.07 0
0.08 0
0.09 0
0.1 0
0.11 0
0.12 0
0.13 0
0.14 0
0.15 0
0.16 0
0.17 0
0.18 0
0.19 0
0.2 0
0.21 0
0.22 0
0.23 0
0.24 0
0.25 0
0.26 0
0.27 0
0.28 0
0.29 0
0.3 0
0.31 0
0.32 0
0.33 0
0.34 0
0.35 0
0.36 0
0.37 0
0.38 0
0.39 0
0.4 0
0.41 0
0.42 0
0.43 0
0.44 0
0.45 0
0.46 0
0.47 0
0.48 0
0.49 0
0.5 0
0.51 0
0.52 0
0.53 0
0.54 0
0.55 0
0.56 0
0.57 0
0.58 0
0.59 0
0.6 0
0.61 0
0.62 0
0.63 0
0.64 0
0.65 0
0.66 0
0.67 0
0.68 0
0.69 0
0.7 0
0.71 0
0.72 0
0.73 0
0.74 0
0.75 0
0.76 0
0.77 0
0.78 0
0.79 0
0.8 0
0.81 0
0.82 0
0.83 0
0.84 0
0.85 0
0.86 0
0.87 0
0.88 0
0.89 0
0.9 0
0.91 0
0.92 0
0.93 0
0.94 0
0.95 0
0.96 0
0.97 0
0.98 0
0.99 0
1 0
};
\addlegendentry{$n = 1$}
\addplot [semithick, darkorange25512714]
table {%
0 0.207732
0.01 0.204175
0.02 0.199946
0.03 0.195708
0.04 0.192215
0.05 0.187506
0.06 0.184117
0.07 0.180447
0.08 0.177001
0.09 0.172327
0.1 0.16927
0.11 0.164601
0.12 0.160935
0.13 0.15768
0.14 0.15455
0.15 0.15061
0.16 0.147011
0.17 0.143495
0.18 0.13991
0.19 0.136604
0.2 0.133536
0.21 0.130266
0.22 0.126442
0.23 0.123513
0.24 0.120831
0.25 0.117202
0.26 0.114439
0.27 0.111191
0.28 0.108487
0.29 0.105033
0.3 0.101291
0.31 0.099283
0.32 0.09597
0.33 0.093233
0.34 0.090576
0.35 0.087931
0.36 0.085217
0.37 0.082931
0.38 0.079866
0.39 0.077803
0.4 0.075076
0.41 0.071843
0.42 0.070146
0.43 0.067456
0.44 0.065254
0.45 0.063015
0.46 0.061013
0.47 0.058269
0.48 0.056407
0.49 0.054366
0.5 0.05168
0.51 0.050222
0.52 0.048341
0.53 0.046402
0.54 0.043922
0.55 0.041898
0.56 0.039777
0.57 0.03826
0.58 0.036545
0.59 0.03498
0.6 0.033385
0.61 0.031669
0.62 0.030263
0.63 0.028589
0.64 0.027034
0.65 0.02552
0.66 0.024157
0.67 0.022534
0.68 0.02128
0.69 0.019835
0.7 0.018815
0.71 0.017478
0.72 0.01637
0.73 0.015174
0.74 0.013953
0.75 0.01324
0.76 0.011859
0.77 0.011133
0.78 0.009983
0.79 0.009217
0.8 0.008301
0.81 0.007558
0.82 0.006872
0.83 0.006035
0.84 0.005299
0.85 0.004554
0.86 0.004148
0.87 0.003461
0.88 0.002978
0.89 0.002476
0.9 0.00206
0.91 0.001644
0.92 0.001368
0.93 0.000965
0.94 0.00073
0.95 0.00053
0.96 0.000332
0.97 0.000174
0.98 9.2e-05
0.99 2e-05
1 0
};
\addlegendentry{$n = 2$}
\addplot [semithick, forestgreen4416044]
table {%
0 0.368973
0.01 0.366565
0.02 0.363367
0.03 0.360834
0.04 0.35691
0.05 0.353492
0.06 0.350395
0.07 0.345719
0.08 0.341126
0.09 0.336075
0.1 0.332154
0.11 0.327438
0.12 0.322577
0.13 0.316439
0.14 0.311059
0.15 0.307323
0.16 0.30173
0.17 0.295447
0.18 0.289788
0.19 0.283882
0.2 0.277534
0.21 0.271726
0.22 0.265731
0.23 0.258926
0.24 0.253573
0.25 0.247583
0.26 0.240423
0.27 0.235092
0.28 0.227814
0.29 0.222189
0.3 0.21539
0.31 0.209343
0.32 0.203722
0.33 0.197454
0.34 0.191455
0.35 0.184644
0.36 0.178151
0.37 0.17231
0.38 0.166726
0.39 0.160415
0.4 0.154802
0.41 0.148571
0.42 0.143288
0.43 0.137241
0.44 0.132106
0.45 0.12655
0.46 0.121315
0.47 0.115329
0.48 0.110502
0.49 0.105005
0.5 0.099915
0.51 0.095001
0.52 0.090863
0.53 0.085928
0.54 0.081332
0.55 0.076918
0.56 0.072642
0.57 0.068851
0.58 0.064197
0.59 0.060664
0.6 0.056856
0.61 0.053596
0.62 0.04981
0.63 0.045791
0.64 0.043308
0.65 0.039775
0.66 0.037031
0.67 0.033962
0.68 0.031473
0.69 0.029055
0.7 0.026353
0.71 0.023782
0.72 0.021663
0.73 0.01964
0.74 0.017938
0.75 0.015683
0.76 0.014361
0.77 0.012556
0.78 0.011076
0.79 0.009863
0.8 0.008604
0.81 0.007387
0.82 0.006346
0.83 0.005411
0.84 0.004428
0.85 0.003773
0.86 0.0032
0.87 0.002519
0.88 0.001902
0.89 0.001473
0.9 0.001177
0.91 0.000819
0.92 0.000605
0.93 0.000415
0.94 0.000253
0.95 0.000122
0.96 7.3e-05
0.97 2.9e-05
0.98 1e-05
0.99 1e-06
1 0
};
\addlegendentry{$n = 4$}
\addplot [semithick, crimson2143940]
table {%
0 0.589223
0.01 0.586979
0.02 0.585222
0.03 0.581738
0.04 0.57882
0.05 0.573567
0.06 0.5682
0.07 0.562479
0.08 0.556307
0.09 0.549662
0.1 0.542904
0.11 0.534304
0.12 0.524903
0.13 0.517299
0.14 0.508149
0.15 0.498124
0.16 0.487872
0.17 0.477992
0.18 0.467986
0.19 0.455803
0.2 0.444581
0.21 0.432529
0.22 0.421017
0.23 0.410596
0.24 0.397331
0.25 0.384923
0.26 0.372543
0.27 0.36113
0.28 0.34804
0.29 0.335284
0.3 0.32384
0.31 0.311101
0.32 0.29856
0.33 0.285986
0.34 0.273976
0.35 0.262312
0.36 0.250528
0.37 0.238124
0.38 0.226434
0.39 0.215721
0.4 0.204777
0.41 0.194028
0.42 0.18333
0.43 0.173126
0.44 0.163162
0.45 0.153248
0.46 0.143695
0.47 0.134924
0.48 0.126386
0.49 0.117006
0.5 0.109051
0.51 0.101492
0.52 0.094558
0.53 0.086842
0.54 0.08056
0.55 0.073986
0.56 0.067528
0.57 0.061878
0.58 0.056555
0.59 0.051487
0.6 0.046658
0.61 0.042036
0.62 0.037858
0.63 0.034161
0.64 0.030226
0.65 0.027224
0.66 0.023823
0.67 0.021017
0.68 0.018637
0.69 0.015975
0.7 0.014217
0.71 0.011976
0.72 0.010408
0.73 0.00878
0.74 0.007499
0.75 0.006311
0.76 0.005115
0.77 0.004399
0.78 0.003653
0.79 0.002865
0.8 0.002353
0.81 0.001811
0.82 0.001496
0.83 0.001057
0.84 0.00078
0.85 0.000631
0.86 0.000484
0.87 0.000276
0.88 0.000194
0.89 0.000127
0.9 9.2e-05
0.91 6.4e-05
0.92 3.4e-05
0.93 2.2e-05
0.94 4e-06
0.95 6e-06
0.96 1e-06
0.97 0
0.98 0
0.99 0
1 0
};
\addlegendentry{$n = 8$}
\addplot [semithick, mediumpurple148103189]
table {%
0 0.794358
0.01 0.7923
0.02 0.788882
0.03 0.784128
0.04 0.778883
0.05 0.772479
0.06 0.764058
0.07 0.755292
0.08 0.745125
0.09 0.732845
0.1 0.722153
0.11 0.708055
0.12 0.693631
0.13 0.677565
0.14 0.662609
0.15 0.644964
0.16 0.626332
0.17 0.607244
0.18 0.589451
0.19 0.568264
0.2 0.549275
0.21 0.527985
0.22 0.507383
0.23 0.487049
0.24 0.4653
0.25 0.444741
0.26 0.422981
0.27 0.401507
0.28 0.381003
0.29 0.359236
0.3 0.339184
0.31 0.320074
0.32 0.299286
0.33 0.280044
0.34 0.261541
0.35 0.243414
0.36 0.225829
0.37 0.210099
0.38 0.192963
0.39 0.17813
0.4 0.163016
0.41 0.149005
0.42 0.136292
0.43 0.124024
0.44 0.112302
0.45 0.101748
0.46 0.091005
0.47 0.082101
0.48 0.073172
0.49 0.064549
0.5 0.057848
0.51 0.051089
0.52 0.044959
0.53 0.03884
0.54 0.034276
0.55 0.029673
0.56 0.025434
0.57 0.021885
0.58 0.018579
0.59 0.01597
0.6 0.013487
0.61 0.011089
0.62 0.009451
0.63 0.007801
0.64 0.00654
0.65 0.00544
0.66 0.004292
0.67 0.003371
0.68 0.002715
0.69 0.002107
0.7 0.001643
0.71 0.001352
0.72 0.00103
0.73 0.000773
0.74 0.000555
0.75 0.000408
0.76 0.0003
0.77 0.000234
0.78 0.00014
0.79 9.8e-05
0.8 7.9e-05
0.81 5.2e-05
0.82 3.3e-05
0.83 1.5e-05
0.84 1e-05
0.85 7e-06
0.86 3e-06
0.87 1e-06
0.88 0
0.89 0
0.9 0
0.91 1e-06
0.92 0
0.93 0
0.94 0
0.95 0
0.96 0
0.97 0
0.98 0
0.99 0
1 0
};
\addlegendentry{$n = 16$}
\addplot [semithick, sienna1408675]
table {%
0 0.936218
0.01 0.935184
0.02 0.932124
0.03 0.927985
0.04 0.921957
0.05 0.915007
0.06 0.90574
0.07 0.895191
0.08 0.882736
0.09 0.868221
0.1 0.851553
0.11 0.83342
0.12 0.81295
0.13 0.792013
0.14 0.767657
0.15 0.743074
0.16 0.715169
0.17 0.686861
0.18 0.657805
0.19 0.626563
0.2 0.595789
0.21 0.563472
0.22 0.531347
0.23 0.498495
0.24 0.465848
0.25 0.433702
0.26 0.401707
0.27 0.370892
0.28 0.339185
0.29 0.310865
0.3 0.282981
0.31 0.254935
0.32 0.230045
0.33 0.206712
0.34 0.183655
0.35 0.16246
0.36 0.143566
0.37 0.125538
0.38 0.109728
0.39 0.095611
0.4 0.082248
0.41 0.070059
0.42 0.060079
0.43 0.050567
0.44 0.042574
0.45 0.035646
0.46 0.029089
0.47 0.023965
0.48 0.019406
0.49 0.015988
0.5 0.0127
0.51 0.010187
0.52 0.007996
0.53 0.006223
0.54 0.004873
0.55 0.003655
0.56 0.002883
0.57 0.002209
0.58 0.001534
0.59 0.001219
0.6 0.000821
0.61 0.000613
0.62 0.000433
0.63 0.000277
0.64 0.000223
0.65 0.000148
0.66 0.000112
0.67 6.4e-05
0.68 4.6e-05
0.69 3.1e-05
0.7 1.9e-05
0.71 1.1e-05
0.72 9e-06
0.73 1e-06
0.74 2e-06
0.75 3e-06
0.76 1e-06
0.77 1e-06
0.78 0
0.79 0
0.8 0
0.81 0
0.82 0
0.83 0
0.84 0
0.85 0
0.86 0
0.87 0
0.88 0
0.89 0
0.9 0
0.91 0
0.92 0
0.93 0
0.94 0
0.95 0
0.96 0
0.97 0
0.98 0
0.99 0
1 0
};
\addlegendentry{$n = 32$}
\addplot [semithick, orchid227119194]
table {%
0 0.992108
0.01 0.991459
0.02 0.990203
0.03 0.988416
0.04 0.985248
0.05 0.981105
0.06 0.975577
0.07 0.968034
0.08 0.957749
0.09 0.946067
0.1 0.930737
0.11 0.913792
0.12 0.893012
0.13 0.867962
0.14 0.840927
0.15 0.809798
0.16 0.77555
0.17 0.738084
0.18 0.697445
0.19 0.655613
0.2 0.611995
0.21 0.5652
0.22 0.519436
0.23 0.47254
0.24 0.426761
0.25 0.381237
0.26 0.33815
0.27 0.297511
0.28 0.259133
0.29 0.223252
0.3 0.190384
0.31 0.159965
0.32 0.134172
0.33 0.111047
0.34 0.090586
0.35 0.073832
0.36 0.05825
0.37 0.046151
0.38 0.03607
0.39 0.027701
0.4 0.021318
0.41 0.01587
0.42 0.011742
0.43 0.008539
0.44 0.006313
0.45 0.004532
0.46 0.003214
0.47 0.002125
0.48 0.001593
0.49 0.00107
0.5 0.000677
0.51 0.00041
0.52 0.000267
0.53 0.000163
0.54 0.000114
0.55 6.6e-05
0.56 3.4e-05
0.57 2.6e-05
0.58 1.9e-05
0.59 8e-06
0.6 7e-06
0.61 2e-06
0.62 2e-06
0.63 1e-06
0.64 0
0.65 0
0.66 0
0.67 0
0.68 0
0.69 0
0.7 0
0.71 0
0.72 0
0.73 0
0.74 0
0.75 0
0.76 0
0.77 0
0.78 0
0.79 0
0.8 0
0.81 0
0.82 0
0.83 0
0.84 0
0.85 0
0.86 0
0.87 0
0.88 0
0.89 0
0.9 0
0.91 0
0.92 0
0.93 0
0.94 0
0.95 0
0.96 0
0.97 0
0.98 0
0.99 0
1 0
};
\addlegendentry{$n = 64$}
\addplot [semithick, grey127]
table {%
0 0.999836
0.01 0.999776
0.02 0.999662
0.03 0.999369
0.04 0.998928
0.05 0.998151
0.06 0.996847
0.07 0.994362
0.08 0.990959
0.09 0.985436
0.1 0.977492
0.11 0.966516
0.12 0.951367
0.13 0.931379
0.14 0.905278
0.15 0.873772
0.16 0.834883
0.17 0.789698
0.18 0.738013
0.19 0.680977
0.2 0.619712
0.21 0.555958
0.22 0.489951
0.23 0.424132
0.24 0.361902
0.25 0.30226
0.26 0.248279
0.27 0.200416
0.28 0.157577
0.29 0.122175
0.3 0.092011
0.31 0.068409
0.32 0.049453
0.33 0.034961
0.34 0.024136
0.35 0.0163
0.36 0.010678
0.37 0.00685
0.38 0.004314
0.39 0.002714
0.4 0.001585
0.41 0.000918
0.42 0.000493
0.43 0.000286
0.44 0.000155
0.45 8.8e-05
0.46 4.4e-05
0.47 2.7e-05
0.48 6e-06
0.49 6e-06
0.5 4e-06
0.51 0
0.52 0
0.53 0
0.54 0
0.55 0
0.56 0
0.57 0
0.58 0
0.59 0
0.6 0
0.61 0
0.62 0
0.63 0
0.64 0
0.65 0
0.66 0
0.67 0
0.68 0
0.69 0
0.7 0
0.71 0
0.72 0
0.73 0
0.74 0
0.75 0
0.76 0
0.77 0
0.78 0
0.79 0
0.8 0
0.81 0
0.82 0
0.83 0
0.84 0
0.85 0
0.86 0
0.87 0
0.88 0
0.89 0
0.9 0
0.91 0
0.92 0
0.93 0
0.94 0
0.95 0
0.96 0
0.97 0
0.98 0
0.99 0
1 0
};
\addlegendentry{$n = 128$}
\addplot [semithick, goldenrod18818934]
table {%
0 1
0.01 1
0.02 0.999999
0.03 0.999998
0.04 0.999991
0.05 0.999971
0.06 0.999914
0.07 0.999769
0.08 0.999366
0.09 0.998567
0.1 0.996778
0.11 0.993363
0.12 0.987118
0.13 0.976485
0.14 0.959255
0.15 0.934035
0.16 0.897152
0.17 0.847968
0.18 0.784636
0.19 0.710223
0.2 0.626187
0.21 0.535256
0.22 0.441716
0.23 0.353059
0.24 0.27161
0.25 0.201722
0.26 0.142925
0.27 0.097545
0.28 0.063615
0.29 0.039867
0.3 0.023883
0.31 0.01354
0.32 0.007532
0.33 0.003935
0.34 0.001896
0.35 0.000911
0.36 0.000442
0.37 0.000176
0.38 6.2e-05
0.39 2.6e-05
0.4 7e-06
0.41 2e-06
0.42 1e-06
0.43 0
0.44 0
0.45 1e-06
0.46 0
0.47 0
0.48 0
0.49 0
0.5 0
0.51 0
0.52 0
0.53 0
0.54 0
0.55 0
0.56 0
0.57 0
0.58 0
0.59 0
0.6 0
0.61 0
0.62 0
0.63 0
0.64 0
0.65 0
0.66 0
0.67 0
0.68 0
0.69 0
0.7 0
0.71 0
0.72 0
0.73 0
0.74 0
0.75 0
0.76 0
0.77 0
0.78 0
0.79 0
0.8 0
0.81 0
0.82 0
0.83 0
0.84 0
0.85 0
0.86 0
0.87 0
0.88 0
0.89 0
0.9 0
0.91 0
0.92 0
0.93 0
0.94 0
0.95 0
0.96 0
0.97 0
0.98 0
0.99 0
1 0
};
\addlegendentry{$n = 256$}
\addplot [semithick, darkturquoise23190207]
table {%
0 1
0.2 1
0.2 0
1 0
};
\addlegendentry{$n \rightarrow \infty$}
\end{axis}

\end{tikzpicture}

%% file: netflix_selected_rules_cm_rate_bar_plot_n_c_ge=5.tex
% This file was created with tikzplotlib v0.10.1.
\begin{tikzpicture}

\definecolor{crimson2143940}{RGB}{214,39,40}
\definecolor{darkgrey176}{RGB}{176,176,176}
\definecolor{darkorange25512714}{RGB}{255,127,14}
\definecolor{darkturquoise23190207}{RGB}{23,190,207}
\definecolor{forestgreen4416044}{RGB}{44,160,44}
\definecolor{goldenrod18818934}{RGB}{188,189,34}
\definecolor{grey127}{RGB}{127,127,127}
\definecolor{mediumpurple148103189}{RGB}{148,103,189}
\definecolor{orchid227119194}{RGB}{227,119,194}
\definecolor{sienna1408675}{RGB}{140,86,75}
\definecolor{steelblue31119180}{RGB}{31,119,180}

\begin{axis}[
height=\axisHeight,
tick align=outside,
tick pos=left,
width=\axisWidth,
x grid style={darkgrey176},
xmin=-1, xmax=20,
xtick style={color=black},
xtick={0,1,2,3,4,5,6,7,8,9,10,11,12,13,14,15,16,17,18,19},
xticklabels={IRV,PR,You,RP,Max,Sch,Dod,Bal,Nan,Kem,SD,Buc,Cop,Sla,Plu,Bla,Bor,Vet,KR,Coo},
xtick={0, 1, 2, 3, 4, 5, 6, 7, 8, 9, 10, 11, 12, 13, 14, 15, 16, 17, 18, 19},
xticklabels = {IRV, PR, You, RP, Max, Sch, Dod, Bal, Nan, Kem, SD, Buc, Cop, Sla, Plu, Bla, Bor, Vet, KR, Coo},
xticklabel style={font=\small},
y grid style={darkgrey176},
ylabel={CM rate},
ymajorgrids,
ymin=0, ymax=1.05,
ytick={0.0, 0.1, 0.2, 0.30000000000000004, 0.4, 0.5, 0.6000000000000001, 0.7000000000000001, 0.8, 0.9, 1.0},
ytick style={color=black}
]
\path [draw=orchid227119194, semithick, dash pattern=on 5.55pt off 2.4pt]
(axis cs:-1,0.970852428964253)
--(axis cs:20,0.970852428964253);

\path [draw=darkturquoise23190207, semithick, dash pattern=on 5.55pt off 2.4pt]
(axis cs:-1,0.842713107241063)
--(axis cs:20,0.842713107241063);

\path [draw=grey127, semithick, dash pattern=on 5.55pt off 2.4pt]
(axis cs:-1,0.912740604949588)
--(axis cs:20,0.912740604949588);

\path [draw=black, semithick, dash pattern=on 5.55pt off 2.4pt]
(axis cs:-1,0.0691109074243813)
--(axis cs:20,0.0691109074243813);

\draw[draw=none,fill=black] (axis cs:-0.4,0) rectangle (axis cs:0.4,0.0687442713107241);
\draw[draw=none,fill=steelblue31119180] (axis cs:0.6,0) rectangle (axis cs:1.4,0.767552703941338);
\draw[draw=none,fill=darkturquoise23190207] (axis cs:1.6,0) rectangle (axis cs:2.4,0.838496791934006);
\draw[draw=none,fill=darkturquoise23190207] (axis cs:2.6,0) rectangle (axis cs:3.4,0.842163153070577);
\draw[draw=none,fill=darkturquoise23190207] (axis cs:3.6,0) rectangle (axis cs:4.4,0.842346471127406);
\draw[draw=none,fill=darkturquoise23190207] (axis cs:4.6,0) rectangle (axis cs:5.4,0.842346471127406);
\draw[draw=none,fill=grey127] (axis cs:5.6,0) rectangle (axis cs:6.4,0.909440879926673);
\draw[draw=none,fill=grey127] (axis cs:6.6,0) rectangle (axis cs:7.4,0.91237396883593);
\draw[draw=none,fill=grey127] (axis cs:7.6,0) rectangle (axis cs:8.4,0.912557286892759);
\draw[draw=none,fill=grey127] (axis cs:8.6,0) rectangle (axis cs:9.4,0.912557286892759);
\draw[draw=none,fill=grey127] (axis cs:9.6,0) rectangle (axis cs:10.4,0.912740604949588);
\draw[draw=none,fill=crimson2143940] (axis cs:10.6,0) rectangle (axis cs:11.4,0.944087992667278);
\draw[draw=none,fill=orchid227119194] (axis cs:11.6,0) rectangle (axis cs:12.4,0.960036663611366);
\draw[draw=none,fill=orchid227119194] (axis cs:12.6,0) rectangle (axis cs:13.4,0.965352887259395);
\draw[draw=none,fill=mediumpurple148103189] (axis cs:13.6,0) rectangle (axis cs:14.4,0.96865261228231);
\draw[draw=none,fill=orchid227119194] (axis cs:14.6,0) rectangle (axis cs:15.4,0.970669110907424);
\draw[draw=none,fill=forestgreen4416044] (axis cs:15.6,0) rectangle (axis cs:16.4,0.990100824931256);
\draw[draw=none,fill=sienna1408675] (axis cs:16.6,0) rectangle (axis cs:17.4,0.997066911090742);
\draw[draw=none,fill=darkorange25512714] (axis cs:17.6,0) rectangle (axis cs:18.4,0.9974335472044);
\draw[draw=none,fill=goldenrod18818934] (axis cs:18.6,0) rectangle (axis cs:19.4,0.998900091659028);
\path [draw=black, semithick]
(axis cs:0,0.0687442713107241)
--(axis cs:0,0.0687442713107241);

\path [draw=black, semithick]
(axis cs:1,0.767552703941338)
--(axis cs:1,0.767552703941338);

\path [draw=black, semithick]
(axis cs:2,0.838496791934006)
--(axis cs:2,0.842713107241063);

\path [draw=black, semithick]
(axis cs:3,0.842163153070577)
--(axis cs:3,0.842346471127406);

\path [draw=black, semithick]
(axis cs:4,0.842346471127406)
--(axis cs:4,0.842346471127406);

\path [draw=black, semithick]
(axis cs:5,0.842346471127406)
--(axis cs:5,0.842713107241063);

\path [draw=black, semithick]
(axis cs:6,0.909440879926673)
--(axis cs:6,0.963703024747938);

\path [draw=black, semithick]
(axis cs:7,0.91237396883593)
--(axis cs:7,0.912557286892759);

\path [draw=black, semithick]
(axis cs:8,0.912557286892759)
--(axis cs:8,0.912557286892759);

\path [draw=black, semithick]
(axis cs:9,0.912557286892759)
--(axis cs:9,0.912740604949588);

\path [draw=black, semithick]
(axis cs:10,0.912740604949588)
--(axis cs:10,0.913107241063245);

\path [draw=black, semithick]
(axis cs:11,0.944087992667278)
--(axis cs:11,0.944087992667278);

\path [draw=black, semithick]
(axis cs:12,0.960036663611366)
--(axis cs:12,0.968835930339138);

\path [draw=black, semithick]
(axis cs:13,0.965352887259395)
--(axis cs:13,0.969019248395967);

\path [draw=black, semithick]
(axis cs:14,0.96865261228231)
--(axis cs:14,0.96865261228231);

\path [draw=black, semithick]
(axis cs:15,0.970669110907424)
--(axis cs:15,0.970852428964253);

\path [draw=black, semithick]
(axis cs:16,0.990100824931256)
--(axis cs:16,0.990100824931256);

\path [draw=black, semithick]
(axis cs:17,0.997066911090742)
--(axis cs:17,0.997066911090742);

\path [draw=black, semithick]
(axis cs:18,0.9974335472044)
--(axis cs:18,0.9974335472044);

\path [draw=black, semithick]
(axis cs:19,0.998900091659028)
--(axis cs:19,0.998900091659028);

\draw (axis cs:-0.5-.50,0.960852428964253+.010) node[
  inner sep=1pt,
  fill=white,
  opacity=.6,
  text opacity=1,
  scale=1.0,
  anchor=south west,
  text=orchid227119194,
  rotate=0.0
]{RCW bound};
\draw (axis cs:-0.5-.50,0.832713107241063+.010) node[
  inner sep=1pt,
  text opacity=1,
  scale=1.0,
  anchor=south west,
  text=darkturquoise23190207,
  rotate=0.0
]{PSCW bound};
\draw (axis cs:-0.5-.50,0.902740604949588+.010) node[
  inner sep=1pt,
  text opacity=1,
  scale=1.0,
  anchor=south west,
  text=grey127,
  rotate=0.0
]{SSCW bound};
\draw (axis cs:-0.5-.50,0.0591109074243813+.010) node[
  inner sep=1pt,
  fill=white,
  opacity=.6,
  text opacity=1,
  scale=1.0,
  anchor=south west,
  text=black,
  rotate=0.0
]{SCW bound};
\end{axis}

\end{tikzpicture}

%% file: fairvote_selected_rules_cm_rate_bar_plot_n_c_ge=5.tex
% This file was created with tikzplotlib v0.10.1.
\begin{tikzpicture}

\definecolor{crimson2143940}{RGB}{214,39,40}
\definecolor{darkgrey176}{RGB}{176,176,176}
\definecolor{darkorange25512714}{RGB}{255,127,14}
\definecolor{darkturquoise23190207}{RGB}{23,190,207}
\definecolor{forestgreen4416044}{RGB}{44,160,44}
\definecolor{goldenrod18818934}{RGB}{188,189,34}
\definecolor{grey127}{RGB}{127,127,127}
\definecolor{mediumpurple148103189}{RGB}{148,103,189}
\definecolor{orchid227119194}{RGB}{227,119,194}
\definecolor{sienna1408675}{RGB}{140,86,75}
\definecolor{steelblue31119180}{RGB}{31,119,180}

\begin{axis}[
height=\axisHeight,
tick align=outside,
tick pos=left,
width=\axisWidth,
x grid style={darkgrey176},
xmin=-1, xmax=20,
xtick style={color=black},
xtick={0,1,2,3,4,5,6,7,8,9,10,11,12,13,14,15,16,17,18,19},
xticklabels={IRV,PR,Max,RP,Sch,You,Bal,Nan,Kem,SD,Dod,Plu,Cop,Sla,Bla,Buc,Bor,Coo,KR,Vet},
xtick={0, 1, 2, 3, 4, 5, 6, 7, 8, 9, 10, 11, 12, 13, 14, 15, 16, 17, 18, 19},
xticklabels = {IRV, PR, Max, RP, Sch, You, Bal, Nan, Kem, SD, Dod, Plu, Cop, Sla, Bla, Buc, Bor, Coo, KR, Vet},
xticklabel style={font=\small},
y grid style={darkgrey176},
ylabel={CM rate},
ymajorgrids,
ymin=0, ymax=1.05,
ytick={0.0, 0.1, 0.2, 0.30000000000000004, 0.4, 0.5, 0.6000000000000001, 0.7000000000000001, 0.8, 0.9, 1.0},
ytick style={color=black}
]
\path [draw=orchid227119194, semithick, dash pattern=on 5.55pt off 2.4pt]
(axis cs:-1,0.717878622197922)
--(axis cs:20,0.717878622197922);

\path [draw=darkturquoise23190207, semithick, dash pattern=on 5.55pt off 2.4pt]
(axis cs:-1,0.52159650082012)
--(axis cs:20,0.52159650082012);

\path [draw=grey127, semithick, dash pattern=on 5.55pt off 2.4pt]
(axis cs:-1,0.595680699835976)
--(axis cs:20,0.595680699835976);

\path [draw=black, semithick, dash pattern=on 5.55pt off 2.4pt]
(axis cs:-1,0.0508474576271186)
--(axis cs:20,0.0508474576271186);

\draw[draw=none,fill=black] (axis cs:-0.4,0) rectangle (axis cs:0.4,0.0508474576271186);
\draw[draw=none,fill=steelblue31119180] (axis cs:0.6,0) rectangle (axis cs:1.4,0.129032258064516);
\draw[draw=none,fill=darkturquoise23190207] (axis cs:1.6,0) rectangle (axis cs:2.4,0.521323127392017);
\draw[draw=none,fill=darkturquoise23190207] (axis cs:2.6,0) rectangle (axis cs:3.4,0.521323127392017);
\draw[draw=none,fill=darkturquoise23190207] (axis cs:3.6,0) rectangle (axis cs:4.4,0.521323127392017);
\draw[draw=none,fill=darkturquoise23190207] (axis cs:4.6,0) rectangle (axis cs:5.4,0.521323127392017);
\draw[draw=none,fill=grey127] (axis cs:5.6,0) rectangle (axis cs:6.4,0.59513395297977);
\draw[draw=none,fill=grey127] (axis cs:6.6,0) rectangle (axis cs:7.4,0.595680699835976);
\draw[draw=none,fill=grey127] (axis cs:7.6,0) rectangle (axis cs:8.4,0.595680699835976);
\draw[draw=none,fill=grey127] (axis cs:8.6,0) rectangle (axis cs:9.4,0.595680699835976);
\draw[draw=none,fill=grey127] (axis cs:9.6,0) rectangle (axis cs:10.4,0.595680699835976);
\draw[draw=none,fill=mediumpurple148103189] (axis cs:10.6,0) rectangle (axis cs:11.4,0.644067796610169);
\draw[draw=none,fill=orchid227119194] (axis cs:11.6,0) rectangle (axis cs:12.4,0.674138873701476);
\draw[draw=none,fill=orchid227119194] (axis cs:12.6,0) rectangle (axis cs:13.4,0.694641880809185);
\draw[draw=none,fill=orchid227119194] (axis cs:13.6,0) rectangle (axis cs:14.4,0.717878622197922);
\draw[draw=none,fill=crimson2143940] (axis cs:14.6,0) rectangle (axis cs:15.4,0.728813559322034);
\draw[draw=none,fill=forestgreen4416044] (axis cs:15.6,0) rectangle (axis cs:16.4,0.864406779661017);
\draw[draw=none,fill=goldenrod18818934] (axis cs:16.6,0) rectangle (axis cs:17.4,0.983050847457627);
\draw[draw=none,fill=darkorange25512714] (axis cs:17.6,0) rectangle (axis cs:18.4,0.998359759431383);
\draw[draw=none,fill=sienna1408675] (axis cs:18.6,0) rectangle (axis cs:19.4,0.999453253143794);
\path [draw=black, semithick]
(axis cs:0,0.0508474576271186)
--(axis cs:0,0.0508474576271186);

\path [draw=black, semithick]
(axis cs:1,0.129032258064516)
--(axis cs:1,0.129032258064516);

\path [draw=black, semithick]
(axis cs:2,0.521323127392017)
--(axis cs:2,0.521323127392017);

\path [draw=black, semithick]
(axis cs:3,0.521323127392017)
--(axis cs:3,0.521323127392017);

\path [draw=black, semithick]
(axis cs:4,0.521323127392017)
--(axis cs:4,0.52159650082012);

\path [draw=black, semithick]
(axis cs:5,0.521323127392017)
--(axis cs:5,0.52159650082012);

\path [draw=black, semithick]
(axis cs:6,0.59513395297977)
--(axis cs:6,0.595680699835976);

\path [draw=black, semithick]
(axis cs:7,0.595680699835976)
--(axis cs:7,0.595680699835976);

\path [draw=black, semithick]
(axis cs:8,0.595680699835976)
--(axis cs:8,0.595680699835976);

\path [draw=black, semithick]
(axis cs:9,0.595680699835976)
--(axis cs:9,0.595680699835976);

\path [draw=black, semithick]
(axis cs:10,0.595680699835976)
--(axis cs:10,0.64406779661017);

\path [draw=black, semithick]
(axis cs:11,0.644067796610169)
--(axis cs:11,0.644067796610169);

\path [draw=black, semithick]
(axis cs:12,0.674138873701476)
--(axis cs:12,0.710497539639147);

\path [draw=black, semithick]
(axis cs:13,0.694641880809185)
--(axis cs:13,0.711864406779661);

\path [draw=black, semithick]
(axis cs:14,0.717878622197922)
--(axis cs:14,0.717878622197922);

\path [draw=black, semithick]
(axis cs:15,0.728813559322034)
--(axis cs:15,0.728813559322034);

\path [draw=black, semithick]
(axis cs:16,0.864406779661017)
--(axis cs:16,0.864406779661017);

\path [draw=black, semithick]
(axis cs:17,0.983050847457627)
--(axis cs:17,0.983050847457627);

\path [draw=black, semithick]
(axis cs:18,0.998359759431383)
--(axis cs:18,0.998359759431383);

\path [draw=black, semithick]
(axis cs:19,0.999453253143794)
--(axis cs:19,0.999453253143794);

\draw (axis cs:-0.5-.50,0.707878622197922+.010) node[
  inner sep=1pt,
  fill=white,
  opacity=.6,
  text opacity=1,
  scale=1.0,
  anchor=south west,
  text=orchid227119194,
  rotate=0.0
]{RCW bound};
\draw (axis cs:-0.5-.50,0.51159650082012+.010) node[
  inner sep=1pt,
  fill=white,
  opacity=.6,
  text opacity=1,
  scale=1.0,
  anchor=south west,
  text=darkturquoise23190207,
  rotate=0.0
]{PSCW bound};
\draw (axis cs:-0.5-.50,0.585680699835976+.010) node[
  inner sep=1pt,
  fill=white,
  opacity=.6,
  text opacity=1,
  scale=1.0,
  anchor=south west,
  text=grey127,
  rotate=0.0
]{SSCW bound};
\draw (axis cs:-0.5-.50,0.0408474576271186+.010) node[
  inner sep=1pt,
  fill=white,
  opacity=.6,
  text opacity=1,
  scale=1.0,
  anchor=south west,
  text=black,
  rotate=0.0
]{SCW bound};
\end{axis}

\end{tikzpicture}

%% file: plot_theoretical_results.tex
% This file was created with tikzplotlib v0.10.1.
\begin{tikzpicture}

\definecolor{crimson2143940}{RGB}{214,39,40}
\definecolor{darkgrey176}{RGB}{176,176,176}
\definecolor{darkorange25512714}{RGB}{255,127,14}
\definecolor{darkturquoise23190207}{RGB}{23,190,207}
\definecolor{forestgreen4416044}{RGB}{44,160,44}
\definecolor{goldenrod18818934}{RGB}{188,189,34}
\definecolor{grey127}{RGB}{127,127,127}
\definecolor{mediumpurple148103189}{RGB}{148,103,189}
\definecolor{orchid227119194}{RGB}{227,119,194}
\definecolor{sienna1408675}{RGB}{140,86,75}
\definecolor{steelblue31119180}{RGB}{31,119,180}

\begin{axis}[clip=false,
height=\axisHeight,
tick align=outside,
tick pos=left,
width=\axisWidth,
x grid style={darkgrey176},
xlabel={Number of candidates $m$},
xmin=3, xmax=11,
xtick distance={1}, xtick style={color=black},
y grid style={darkgrey176},
ylabel={$\theta_c(f, m)$},
ymin=-0.05, ymax=1.05,
ytick={-0.2, 0.0, 0.2, 0.4000000000000001, 0.6000000000000001, 0.8, 1.0000000000000002, 1.2000000000000002},
ytick style={color=black}
]
\addplot [semithick, sienna1408675]
table {%
3 1
4 1
5 1
6 1
7 1
8 1
9 1
10 1
11 1
};
\addplot [semithick, darkorange25512714]
table {%
3 0.333333333333333
4 0.5
5 0.6
6 0.666666666666667
7 0.714285714285714
8 0.75
9 0.777777777777778
10 0.8
11 0.818181818181818
};
\addplot [semithick, forestgreen4416044]
table {%
3 0.25
4 0.4
5 0.5
6 0.571428571428571
7 0.625
8 0.666666666666667
9 0.7
10 0.727272727272727
11 0.75
};
\addplot [semithick, crimson2143940]
table {%
3 0.25
4 0.333333333333333
5 0.375
6 0.4
7 0.416666666666667
8 0.428571428571429
9 0.4375
10 0.444444444444444
11 0.45
};
\addplot [semithick, goldenrod18818934]
table {%
3 0.25
4 0.272727272727273
5 0.285714285714286
6 0.294117647058824
7 0.3
8 0.304347826086957
9 0.307692307692308
10 0.310344827586207
11 0.3125
};
\addplot [semithick, mediumpurple148103189]
table {%
3 0.142857142857143
4 0.2
5 0.230769230769231
6 0.25
7 0.263157894736842
8 0.272727272727273
9 0.28
10 0.285714285714286
11 0.290322580645161
};
\addplot [semithick, orchid227119194]
table {%
3 0.25
4 0.25
5 0.25
6 0.25
7 0.25
8 0.25
9 0.25
10 0.25
11 0.25
};
\addplot [semithick, grey127]
table {%
3 0.142857142857143
4 0.181818181818182
5 0.2
6 0.210526315789474
7 0.217391304347826
8 0.222222222222222
9 0.225806451612903
10 0.228571428571429
11 0.230769230769231
};
\addplot [semithick, steelblue31119180]
table {%
3 0
4 0.0588235294117647
5 0.0909090909090909
6 0.111111111111111
7 0.125
8 0.135135135135135
9 0.142857142857143
10 0.148936170212766
11 0.153846153846154
};
\addplot [semithick, darkturquoise23190207]
table {%
3 0.142857142857143
4 0.142857142857143
5 0.142857142857143
6 0.142857142857143
7 0.142857142857143
8 0.142857142857143
9 0.142857142857143
10 0.142857142857143
11 0.142857142857143
};
\addplot [semithick, black]
table {%
3 0
4 0
5 0
6 0
7 0
8 0
9 0
10 0
11 0
};
\draw (axis cs:11.05,1) node[
  scale=1.0,
  anchor=west,
  text=sienna1408675,
  rotate=0.0
]{\footnotesize Vet};
\draw (axis cs:11.05,0.818181818181818) node[
  scale=1.0,
  anchor=west,
  text=darkorange25512714,
  rotate=0.0
]{\footnotesize KR};
\draw (axis cs:11.05,0.75) node[
  scale=1.0,
  anchor=west,
  text=forestgreen4416044,
  rotate=0.0
]{\footnotesize Bor};
\draw (axis cs:11.05,0.45) node[
  scale=1.0,
  anchor=west,
  text=crimson2143940,
  rotate=0.0
]{\footnotesize Buc};
\draw (axis cs:11.05,0.345) node[
  scale=1.0,
  anchor=base west,
  text=goldenrod18818934,
  rotate=0.0
]{\footnotesize Coo};
\draw (axis cs:11.05,0.29) node[
  scale=1.0,
  anchor=base west,
  text=mediumpurple148103189,
  rotate=0.0
]{\footnotesize Plu};
\draw (axis cs:11.05,0.235) node[
  scale=1.0,
  anchor=base west,
  text=orchid227119194,
  rotate=0.0
]{\footnotesize Bla, Sla$^\dagger$, Cop$^\dagger$};
\draw (axis cs:11.05,0.18) node[
  scale=1.0,
  anchor=base west,
  text=grey127,
  rotate=0.0
]{\footnotesize Bal, Nan, Kem, Dod, SD};
\draw (axis cs:11.05,0.125) node[
  scale=1.0,
  anchor=base west,
  text=steelblue31119180,
  rotate=0.0
]{\footnotesize PR};
\draw (axis cs:11.05,0.07) node[
  scale=1.0,
  anchor=base west,
  text=darkturquoise23190207,
  rotate=0.0
]{\footnotesize Max, RP, Sch, You};
\draw (axis cs:11.05,0) node[
  scale=1.0,
  anchor=west,
  text=black,
  rotate=0.0
]{\footnotesize IRV};
\end{axis}

\end{tikzpicture}

%% file: netflix_selected_rules_nb_candidates_rate_line_plot.tex
% This file was created with tikzplotlib v0.10.1.
\begin{tikzpicture}

\definecolor{crimson2143940}{RGB}{214,39,40}
\definecolor{darkgrey176}{RGB}{176,176,176}
\definecolor{darkorange25512714}{RGB}{255,127,14}
\definecolor{darkturquoise23190207}{RGB}{23,190,207}
\definecolor{forestgreen4416044}{RGB}{44,160,44}
\definecolor{goldenrod18818934}{RGB}{188,189,34}
\definecolor{grey127}{RGB}{127,127,127}
\definecolor{lightgrey204}{RGB}{204,204,204}
\definecolor{mediumpurple148103189}{RGB}{148,103,189}
\definecolor{orchid227119194}{RGB}{227,119,194}
\definecolor{sienna1408675}{RGB}{140,86,75}
\definecolor{steelblue31119180}{RGB}{31,119,180}

\colorlet{Vet}{sienna1408675}\colorlet{KR}{darkorange25512714}\colorlet{Bor}{forestgreen4416044}\colorlet{Buc}{crimson2143940}\colorlet{Coo}{goldenrod18818934}\colorlet{Plu}{mediumpurple148103189}\colorlet{Bla}{orchid227119194}\colorlet{Cop}{orchid227119194}\colorlet{Sla}{orchid227119194}\colorlet{Bal}{grey127}\colorlet{Nan}{grey127}\colorlet{Kem}{grey127}\colorlet{Dod}{grey127}\colorlet{SD}{grey127}\colorlet{PR}{steelblue31119180}\colorlet{Max}{darkturquoise23190207}\colorlet{RP}{darkturquoise23190207}\colorlet{Sch}{darkturquoise23190207}\colorlet{You}{darkturquoise23190207}\colorlet{IRV}{black}

\begin{axis}[clip=false,
height=\axisHeight,
legend cell align={left},
legend style={
  fill opacity=1,
  draw opacity=1,
  text opacity=1,
  at={(1.05,1)},
  anchor=north west,
  draw=lightgrey204
},
tick align=outside,
tick pos=left,
width=\axisSmallerWidth,
x grid style={darkgrey176},
xlabel={Number of candidates $m$},
%xmajorgrids,
xmin=3, xmax=11,
xtick distance={1}, xtick style={color=black},
y grid style={darkgrey176},
ylabel={CM rate},
ymajorgrids,
ymin=0.555409836065574, ymax=1.04, restrict y to domain=0.55:1.04
ytick={0.0, 0.1, 0.2, 0.30000000000000004, 0.4, 0.5, 0.6000000000000001, 0.7000000000000001, 0.8, 0.9, 1.0},
ytick distance={.1}, ytick style={color=black}
]
\addplot [semithick, forestgreen4416044]
table {%
3 0.765106382978723
4 0.953020134228188
5 0.975081967213115
6 0.994713656387665
7 0.99390243902439
8 1
9 0.990476190476191
10 1
11 1
};
%\addlegendentry{Bor}
\addplot [semithick, sienna1408675]
table {%
3 0.573333333333333
4 0.865771812080537
5 0.99016393442623
6 0.999118942731278
7 1
8 1
9 1
10 1
11 1
};
%\addlegendentry{Vet}
\addplot [semithick, darkorange25512714]
table {%
3 0.56709219858156
4 0.868903803131991
5 0.991475409836066
6 0.999118942731278
7 1
8 1
9 1
10 1
11 1
};
%\addlegendentry{KR}
\addplot [semithick, goldenrod18818934]
table {%
3 0.891631205673759
4 0.995078299776286
5 0.996065573770492
6 1
7 1
8 1
9 1
10 1
11 1
};
%\addlegendentry{Coo}
\addplot [semithick, mediumpurple148103189]
table {%
3 0.589787234042553
4 0.882774049217002
5 0.931803278688525
6 0.971806167400881
7 0.986585365853659
8 0.99236641221374
9 0.973333333333333
10 0.988505747126437
11 1
};
%\addlegendentry{Plu}
\addplot [semithick, orchid227119194]
table {%
3 0.756028368794326
4 0.928859060402685
5 0.952131147540983
6 0.970044052863436
7 0.978048780487805
8 0.989312977099237
9 0.967619047619048
10 0.977011494252874
11 0.997222222222222
};
%\addlegendentry{Bla*}
\addplot [semithick, orchid227119194]
table {%
3 0.675744680851064
4 0.92751677852349
5 0.942950819672131
6 0.964757709251101
7 0.975609756097561
8 0.984732824427481
9 0.961904761904762
10 0.977011494252874
11 0.994444444444444
};
%\addlegendentry{Sla*}
\addplot [semithick, orchid227119194]
table {%
3 0.775602836879433
4 0.90738255033557
5 0.937704918032787
6 0.959471365638767
7 0.969512195121951
8 0.984732824427481
9 0.954285714285714
10 0.970114942528736
11 0.986111111111111
};
%\addlegendentry{Cop*}
\addplot [semithick, steelblue31119180]
table {%
3 0.0178723404255319
4 0.276957494407159
5 0.55672131147541
6 0.73568281938326
7 0.852439024390244
8 0.914503816793893
9 0.851428571428572
10 0.933333333333333
11 0.977777777777778
};
%\addlegendentry{PR}
\addplot [semithick, grey127]
table {%
3 0.588368794326241
4 0.808948545861298
5 0.868852459016393
6 0.885462555066079
7 0.934146341463415
8 0.957251908396947
9 0.927619047619048
10 0.958620689655172
11 0.972222222222222
};
%\addlegendentry{Bal}
\addplot [semithick, grey127]
table {%
3 0.588368794326241
4 0.808948545861298
5 0.869508196721311
6 0.885462555066079
7 0.934146341463415
8 0.957251908396947
9 0.927619047619048
10 0.958620689655172
11 0.972222222222222
};
%\addlegendentry{Kem}
\addplot [semithick, crimson2143940]
table {%
3 0.52709219858156
4 0.88993288590604
5 0.897704918032787
6 0.948898678414097
7 0.958536585365854
8 0.981679389312977
9 0.952380952380952
10 0.977011494252874
11 0.972222222222222
};
%\addlegendentry{Buc}
\addplot [semithick, grey127]
table {%
3 0.588368794326241
4 0.808948545861298
5 0.869508196721311
6 0.885462555066079
7 0.934146341463415
8 0.957251908396947
9 0.927619047619048
10 0.958620689655172
11 0.972222222222222
};
%\addlegendentry{Nan}
\addplot [semithick, grey127]
table {%
3 0.588368794326241
4 0.808501118568233
5 0.870163934426229
6 0.885462555066079
7 0.934146341463415
8 0.96030534351145
9 0.925714285714286
10 0.958620689655172
11 0.969444444444444
};
%\addlegendentry{SD}
\addplot [semithick, grey127]
table {%
3 0.588368794326241
4 0.808053691275168
5 0.866229508196721
6 0.884581497797357
7 0.932926829268293
8 0.957251908396947
9 0.914285714285714
10 0.954022988505747
11 0.969444444444444
};
%\addlegendentry{Dod*}
\addplot [semithick, darkturquoise23190207]
table {%
3 0.588368794326241
4 0.725279642058166
5 0.809836065573771
6 0.825550660792952
7 0.870731707317073
8 0.887022900763359
9 0.8
10 0.857471264367816
11 0.930555555555556
};
%\addlegendentry{Max}
\addplot [semithick, darkturquoise23190207]
table {%
3 0.588368794326241
4 0.725279642058166
5 0.809836065573771
6 0.825550660792952
7 0.870731707317073
8 0.887022900763359
9 0.8
10 0.857471264367816
11 0.930555555555556
};
%\addlegendentry{Sch}
\addplot [semithick, darkturquoise23190207]
table {%
3 0.588368794326241
4 0.725279642058166
5 0.809180327868853
6 0.825550660792952
7 0.870731707317073
8 0.887022900763359
9 0.8
10 0.857471264367816
11 0.930555555555556
};
%\addlegendentry{RP}
\addplot [semithick, darkturquoise23190207]
table {%
3 0.586950354609929
4 0.723937360178971
5 0.805245901639344
6 0.824669603524229
7 0.869512195121951
8 0.883969465648855
9 0.786666666666667
10 0.852873563218391
11 0.927777777777778
};
%\addlegendentry{You*}
\addplot [semithick, black]
table {%
3 0.0178723404255319
4 0.0371364653243848
5 0.0655737704918033
6 0.0775330396475771
7 0.0585365853658537
8 0.067175572519084
9 0.0552380952380952
10 0.0735632183908046
11 0.0944444444444444
};
%\addlegendentry{IRV}

\draw (axis cs:11.05,1.020) node[
  anchor=base west,
]{\footnotesize\textcolor{Coo}{Coo}, \textcolor{Vet}{Vet}, \textcolor{KR}{KR}, \textcolor{Bor}{Bor}, \textcolor{Plu}{Plu}};
\draw (axis cs:11.05,0.990) node[
  anchor=base west,
]{\footnotesize\textcolor{Bla}{Bla*, Sla*, Cop*}};
\draw (axis cs:11.05,0.960) node[
  anchor=base west,
]{\footnotesize\textcolor{Buc}{Buc}};
\draw (axis cs:11.05,0.930) node[
  anchor=base west,
]{\footnotesize\textcolor{Bal}{Bal, Nan, Kem, Dod*, SD}};
\draw (axis cs:11.05,0.900) node[
  anchor=base west,
]{\footnotesize\textcolor{PR}{PR}};
\draw (axis cs:11.05,0.870) node[
  anchor=base west,
]{\footnotesize\textcolor{Max}{Max, RP, Sch, You*}};

\end{axis}

\end{tikzpicture}

%% file: netflix_selected_rules_scatter_rank_theta_c_rank_cm_rate_n_c=5_revised.tex
% This file was created with tikzplotlib v0.10.1.
\begin{tikzpicture}

\definecolor{crimson2143940}{RGB}{214,39,40}
\definecolor{darkgrey176}{RGB}{176,176,176}
\definecolor{darkorange25512714}{RGB}{255,127,14}
\definecolor{darkturquoise23190207}{RGB}{23,190,207}
\definecolor{forestgreen4416044}{RGB}{44,160,44}
\definecolor{goldenrod18818934}{RGB}{188,189,34}
\definecolor{grey127}{RGB}{127,127,127}
\definecolor{mediumpurple148103189}{RGB}{148,103,189}
\definecolor{orchid227119194}{RGB}{227,119,194}
\definecolor{sienna1408675}{RGB}{140,86,75}
\definecolor{steelblue31119180}{RGB}{31,119,180}

\begin{axis}[clip=false,
height=\axisHeight,
tick align=outside,
tick pos=left,
width=\axisWidth,
x grid style={darkgrey176},
xlabel={Rank by $\theta_c(f, 5)$},
xmajorgrids,
xmin=0, xmax=20,
xtick distance={5}, xtick style={color=black},
y grid style={darkgrey176},
ylabel={Rank by CM rate ($m = 5$)},
ymajorgrids,
ymin=0, ymax=20,
ytick distance={5},
ytick style={color=black}
]
\addplot [line width=0.32pt, black]
table {%
0 0
20 20
};
\path [draw=black, thin]
(axis cs:0,0)
--(axis cs:0,0);

\path [draw=steelblue31119180, thin]
(axis cs:1,1)
--(axis cs:1,1);

\path [draw=darkturquoise23190207, thin]
(axis cs:3.35,2)
--(axis cs:3.35,5);

\path [draw=darkturquoise23190207, thin]
(axis cs:3.45,2)
--(axis cs:3.45,5);

\path [draw=darkturquoise23190207, thin]
(axis cs:3.55,2)
--(axis cs:3.55,5);

\path [draw=darkturquoise23190207, thin]
(axis cs:3.65,2)
--(axis cs:3.65,5);

\path [draw=grey127, thin]
(axis cs:7.8,6)
--(axis cs:7.8,10);

\path [draw=grey127, thin]
(axis cs:7.9,6)
--(axis cs:7.9,10);

\path [draw=grey127, thin]
(axis cs:8,6)
--(axis cs:8,10);

\path [draw=grey127, thin]
(axis cs:8.1,6)
--(axis cs:8.1,10);

\path [draw=grey127, thin]
(axis cs:8.2,6)
--(axis cs:8.2,14);

\path [draw=crimson2143940, thin]
(axis cs:16,10)
--(axis cs:16,11);

\path [draw=mediumpurple148103189, thin]
(axis cs:11,11)
--(axis cs:11,14);

\path [draw=orchid227119194, thin]
(axis cs:12.9,11)
--(axis cs:12.9,15);

\path [draw=orchid227119194, thin]
(axis cs:13,11)
--(axis cs:13,15);

\path [draw=orchid227119194, thin]
(axis cs:13.1,13)
--(axis cs:13.1,15);

\path [draw=forestgreen4416044, thin]
(axis cs:17,16)
--(axis cs:17,18);

\path [draw=sienna1408675, thin]
(axis cs:19,16)
--(axis cs:19,19);

\path [draw=darkorange25512714, thin]
(axis cs:18,16)
--(axis cs:18,19);

\path [draw=white, thick] % To hide grid locally
(axis cs:15,17-.15)
--(axis cs:15,19+.15);
\path [draw=goldenrod18818934, thin]
(axis cs:15,17)
--(axis cs:15,19);

\addplot [semithick, black, mark=*, mark size=1, mark options={solid}, only marks]
table {%
0 0
};
\addplot [semithick, steelblue31119180, mark=*, mark size=1, mark options={solid}, only marks]
table {%
1 1
};
\addplot [semithick, darkturquoise23190207, mark=*, mark size=1, mark options={solid}, only marks]
table {%
3.35 3.5
};
\addplot [semithick, darkturquoise23190207, mark=*, mark size=1, mark options={solid}, only marks]
table {%
3.45 3.5
};
\addplot [semithick, darkturquoise23190207, mark=*, mark size=1, mark options={solid}, only marks]
table {%
3.55 3.5
};
\addplot [semithick, darkturquoise23190207, mark=*, mark size=1, mark options={solid}, only marks]
table {%
3.65 3.5
};
\addplot [semithick, grey127, mark=*, mark size=1, mark options={solid}, only marks]
table {%
7.8 8
};
\addplot [semithick, grey127, mark=*, mark size=1, mark options={solid}, only marks]
table {%
7.9 8
};
\addplot [semithick, grey127, mark=*, mark size=1, mark options={solid}, only marks]
table {%
8 8
};
\addplot [semithick, grey127, mark=*, mark size=1, mark options={solid}, only marks]
table {%
8.1 8
};
\addplot [semithick, grey127, mark=*, mark size=1, mark options={solid}, only marks]
table {%
8.2 10
};
\addplot [semithick, crimson2143940, mark=*, mark size=1, mark options={solid}, only marks]
table {%
16 10.5
};
\addplot [semithick, mediumpurple148103189, mark=*, mark size=1, mark options={solid}, only marks]
table {%
11 12.5
};
\addplot [semithick, orchid227119194, mark=*, mark size=1, mark options={solid}, only marks]
table {%
12.9 13
};
\addplot [semithick, orchid227119194, mark=*, mark size=1, mark options={solid}, only marks]
table {%
13 13
};
\addplot [semithick, orchid227119194, mark=*, mark size=1, mark options={solid}, only marks]
table {%
13.1 14
};
\addplot [semithick, forestgreen4416044, mark=*, mark size=1, mark options={solid}, only marks]
table {%
17 17
};
\addplot [semithick, sienna1408675, mark=*, mark size=1, mark options={solid}, only marks]
table {%
19 17.5
};
\addplot [semithick, darkorange25512714, mark=*, mark size=1, mark options={solid}, only marks]
table {%
18 17.5
};
\addplot [semithick, goldenrod18818934, mark=*, mark size=1, mark options={solid}, only marks]
table {%
15 18
};

\draw (axis cs:0,0+.3) node[
  scale=1.0,
  anchor=north west,
  text=black,
  rotate=0.0
]{\footnotesize IRV};

\draw (axis cs:1+.3,1) node[
  scale=1.0,
  anchor=west,
  text=steelblue31119180,
  rotate=0.0
]{\footnotesize PR};

\draw (axis cs:3.5+0.3,3.5+.05) node[
  scale=1.0,
  anchor=west,
  text=darkturquoise23190207,
  rotate=0.0
]{\footnotesize Max, RP, Sch, You};

\draw (axis cs:8-.1,8) node[
  scale=1.0,
  anchor=east,
  text=grey127,
  rotate=0.0
]{\footnotesize Bal, Nan, Kem, SD};

\draw (axis cs:8+.4,10+1) node[
  scale=1.0,
  anchor=east,
  text=grey127,
  rotate=0.0
]{\footnotesize Dod};

\draw (axis cs:16,10.5) node[
  scale=1.0,
  anchor=west,
  text=crimson2143940,
  rotate=0.0
]{\footnotesize Buc};

\draw (axis cs:11+.1,12.5) node[
  scale=1.0,
  anchor=east,
  text=mediumpurple148103189,
  rotate=0.0
]{\footnotesize Plu};

\draw (axis cs:13,11+.3) node[
  scale=1.0,
  anchor=north,
  text=orchid227119194,
  rotate=0.0
]{\footnotesize Cop, Sla};

\draw (axis cs:13-.2,15-.2) node[
  scale=1.0,
  anchor=south west,
  text=orchid227119194,
  rotate=0.0
]{\footnotesize Bla};

\draw (axis cs:17,16+.3) node[
  scale=1.0,
  anchor=north,
  text=forestgreen4416044,
  rotate=0.0
]{\footnotesize Bor};

\draw (axis cs:19,16+.3) node[
  scale=1.0,
  anchor=north,
  text=sienna1408675,
  rotate=0.0
]{\footnotesize Vet};

\draw (axis cs:18+.25,19+.2) node[
  scale=1.0,
  anchor=east,
  text=darkorange25512714,
  rotate=0.0
]{\footnotesize KR};

\draw (axis cs:15+.1,18) node[
  scale=1.0,
  anchor=east,
  text=goldenrod18818934,
  rotate=0.0
]{\footnotesize Coo};
\end{axis}

\end{tikzpicture}%

%% file: fairvote_selected_rules_scatter_rank_theta_c_rank_cm_rate_n_c=5_revised.tex
% This file was created with tikzplotlib v0.10.1.
\begin{tikzpicture}

\definecolor{crimson2143940}{RGB}{214,39,40}
\definecolor{darkgrey176}{RGB}{176,176,176}
\definecolor{darkorange25512714}{RGB}{255,127,14}
\definecolor{darkturquoise23190207}{RGB}{23,190,207}
\definecolor{forestgreen4416044}{RGB}{44,160,44}
\definecolor{goldenrod18818934}{RGB}{188,189,34}
\definecolor{grey127}{RGB}{127,127,127}
\definecolor{mediumpurple148103189}{RGB}{148,103,189}
\definecolor{orchid227119194}{RGB}{227,119,194}
\definecolor{sienna1408675}{RGB}{140,86,75}
\definecolor{steelblue31119180}{RGB}{31,119,180}

\begin{axis}[clip=false,
height=\axisHeight,
tick align=outside,
tick pos=left,
width=\axisWidth,
x grid style={darkgrey176},
xlabel={Rank by $\theta_c(f, 5)$},
xmajorgrids,
xmin=0, xmax=20,
xtick distance={5}, xtick style={color=black},
y grid style={darkgrey176},
ymajorgrids,
ymin=0, ymax=20,
ytick distance={5},
yticklabel style={color=white},
ytick style={color=black}
]
\addplot [line width=0.32pt, black]
table {%
0 0
20 20
};
\path [draw=black, thin]
(axis cs:0,0)
--(axis cs:0,0);

\path [draw=steelblue31119180, thin]
(axis cs:1,1)
--(axis cs:1,1);

\path [draw=darkturquoise23190207, thin]
(axis cs:3.35,2)
--(axis cs:3.35,5);

\path [draw=darkturquoise23190207, thin]
(axis cs:3.45,2)
--(axis cs:3.45,5);

\path [draw=darkturquoise23190207, thin]
(axis cs:3.55,2)
--(axis cs:3.55,5);

\path [draw=darkturquoise23190207, thin]
(axis cs:3.65,2)
--(axis cs:3.65,5);

\path [draw=grey127, thin]
(axis cs:7.8,6)
--(axis cs:7.8,10);

\path [draw=grey127, thin]
(axis cs:7.9,6)
--(axis cs:7.9,10);

\path [draw=grey127, thin]
(axis cs:8,6)
--(axis cs:8,10);

\path [draw=grey127, thin]
(axis cs:8.1,6)
--(axis cs:8.1,10);

\path [draw=grey127, thin]
(axis cs:8.2,6)
--(axis cs:8.2,11);

\path [draw=mediumpurple148103189, thin]
(axis cs:11,10)
--(axis cs:11,11);

\path [draw=crimson2143940, thin]
(axis cs:16,12)
--(axis cs:16,12);

\path [draw=orchid227119194, thin]
(axis cs:12.9,13)
--(axis cs:12.9,15);

\path [draw=orchid227119194, thin]
(axis cs:13,13)
--(axis cs:13,15);

\path [draw=orchid227119194, thin]
(axis cs:13.1,13)
--(axis cs:13.1,15);

\path [draw=forestgreen4416044, thin]
(axis cs:17,16)
--(axis cs:17,16);

\path [draw=white, thick] % To hide grid locally
(axis cs:15,17-.3)
--(axis cs:15,17+.3);
\path [draw=goldenrod18818934, thin]
(axis cs:15,17)
--(axis cs:15,17);

\path [draw=darkorange25512714, thin]
(axis cs:18,18)
--(axis cs:18,19);

\path [draw=sienna1408675, thin]
(axis cs:19,18)
--(axis cs:19,19);

\addplot [semithick, black, mark=*, mark size=1, mark options={solid}, only marks]
table {%
0 0
};
\addplot [semithick, steelblue31119180, mark=*, mark size=1, mark options={solid}, only marks]
table {%
1 1
};
\addplot [semithick, darkturquoise23190207, mark=*, mark size=1, mark options={solid}, only marks]
table {%
3.35 3.5
};
\addplot [semithick, darkturquoise23190207, mark=*, mark size=1, mark options={solid}, only marks]
table {%
3.45 3.5
};
\addplot [semithick, darkturquoise23190207, mark=*, mark size=1, mark options={solid}, only marks]
table {%
3.55 3.5
};
\addplot [semithick, darkturquoise23190207, mark=*, mark size=1, mark options={solid}, only marks]
table {%
3.65 3.5
};
\addplot [semithick, grey127, mark=*, mark size=1, mark options={solid}, only marks]
table {%
7.8 8
};
\addplot [semithick, grey127, mark=*, mark size=1, mark options={solid}, only marks]
table {%
7.9 8
};
\addplot [semithick, grey127, mark=*, mark size=1, mark options={solid}, only marks]
table {%
8 8
};
\addplot [semithick, grey127, mark=*, mark size=1, mark options={solid}, only marks]
table {%
8.1 8
};
\addplot [semithick, grey127, mark=*, mark size=1, mark options={solid}, only marks]
table {%
8.2 8.5
};
\addplot [semithick, mediumpurple148103189, mark=*, mark size=1, mark options={solid}, only marks]
table {%
11 10.5
};
\addplot [semithick, crimson2143940, mark=*, mark size=1, mark options={solid}, only marks]
table {%
16 12
};
\addplot [semithick, orchid227119194, mark=*, mark size=1, mark options={solid}, only marks]
table {%
12.9 14
};
\addplot [semithick, orchid227119194, mark=*, mark size=1, mark options={solid}, only marks]
table {%
13 14
};
\addplot [semithick, orchid227119194, mark=*, mark size=1, mark options={solid}, only marks]
table {%
13.1 14
};
\addplot [semithick, forestgreen4416044, mark=*, mark size=1, mark options={solid}, only marks]
table {%
17 16
};
\addplot [semithick, goldenrod18818934, mark=*, mark size=1, mark options={solid}, only marks]
table {%
15 17
};
\addplot [semithick, darkorange25512714, mark=*, mark size=1, mark options={solid}, only marks]
table {%
18 18.5
};
\addplot [semithick, sienna1408675, mark=*, mark size=1, mark options={solid}, only marks]
table {%
19 18.5
};

\draw (axis cs:0,0+.3) node[
  scale=1.0,
  anchor=north west,
  text=black,
  rotate=0.0
]{\footnotesize IRV};

\draw (axis cs:1+.3,1) node[
  scale=1.0,
  anchor=west,
  text=steelblue31119180,
  rotate=0.0
]{\footnotesize PR};

\draw (axis cs:3.5+0.3,3.5+.05) node[
  scale=1.0,
  anchor=west,
  text=darkturquoise23190207,
  rotate=0.0
]{\footnotesize Max, RP, Sch, You};

\draw (axis cs:8-.1,8) node[
  scale=1.0,
  anchor=east,
  text=grey127,
  rotate=0.0
]{\footnotesize Bal, Nan, Kem, SD};

\draw (axis cs:8+.4,10+1) node[
  scale=1.0,
  anchor=east,
  text=grey127,
  rotate=0.0
]{\footnotesize Dod};

\draw (axis cs:11,10.5) node[
  scale=1.0,
  anchor=west,
  text=mediumpurple148103189,
  rotate=0.0
]{\footnotesize Plu};

\draw (axis cs:16,12) node[
  scale=1.0,
  anchor=west,
  text=crimson2143940,
  rotate=0.0
]{\footnotesize Buc};

\draw (axis cs:13-.1,14) node[
  scale=1.0,
  anchor=east,
  text=orchid227119194,
  rotate=0.0
]{\footnotesize Cop, Sla, Bla};

\draw (axis cs:17,16) node[
  scale=1.0,
  anchor=west,
  text=forestgreen4416044,
  rotate=0.0
]{\footnotesize Bor};

\draw (axis cs:15+.1,17) node[
  scale=1.0,
  anchor=east,
  text=goldenrod18818934,
  rotate=0.0
]{\footnotesize Coo};

\draw (axis cs:18+.25,19+.2) node[
  scale=1.0,
  anchor=east,
  text=darkorange25512714,
  rotate=0.0
]{\footnotesize KR};

\draw (axis cs:19,18+.3) node[
  scale=1.0,
  anchor=north,
  text=sienna1408675,
  rotate=0.0
]{\footnotesize Vet};
\end{axis}

\end{tikzpicture}%

%% file: netflix_more_rules_cm_rate_bar_plot.tex
% This file was created with tikzplotlib v0.10.1.
\begin{tikzpicture}

\definecolor{crimson2143940}{RGB}{214,39,40}
\definecolor{darkgrey176}{RGB}{176,176,176}
\definecolor{darkorange25512714}{RGB}{255,127,14}
\definecolor{darkturquoise23190207}{RGB}{23,190,207}
\definecolor{forestgreen4416044}{RGB}{44,160,44}
\definecolor{goldenrod18818934}{RGB}{188,189,34}
\definecolor{grey127}{RGB}{127,127,127}
\definecolor{mediumpurple148103189}{RGB}{148,103,189}
\definecolor{orchid227119194}{RGB}{227,119,194}
\definecolor{sienna1408675}{RGB}{140,86,75}
\definecolor{steelblue31119180}{RGB}{31,119,180}

\begin{axis}[
height=\axisHeight,
tick align=outside,
tick pos=left,
width=\axisWidth,
x grid style={darkgrey176},
xmin=-1, xmax=22,
xtick style={color=black},
xtick={0,1,2,3,4,5,6,7,8,9,10,11,12,13,14,15,16,17,18,19,20,21},
xticklabels={IRV,PR,You,Vie,RP,SC,Max,Sch,Dod,Bal,Nan,Kem,SD,Buc,Plu,KR,Vet,Sla,Cop,Bla,Bor,Coo},
xtick={0, 1, 2, 3, 4, 5, 6, 7, 8, 9, 10, 11, 12, 13, 14, 15, 16, 17, 18, 19, 20, 21},
xticklabels = {IRV, PR, You, Vie, RP, SC, Max, Sch, Dod, Bal, Nan, Kem, SD, Buc, Plu, KR, Vet, Sla, Cop, Bla, Bor, Coo},
xticklabel style={font=\small},
y grid style={darkgrey176},
ylabel={CM rate},
ymajorgrids,
ymin=0, ymax=1.05,
ytick={0.0, 0.1, 0.2, 0.30000000000000004, 0.4, 0.5, 0.6000000000000001, 0.7000000000000001, 0.8, 0.9, 1.0},
ytick style={color=black}
]
\path [draw=orchid227119194, semithick, dash pattern=on 5.55pt off 2.4pt]
(axis cs:-1,0.928756130182791)
--(axis cs:22,0.928756130182791);

\path [draw=darkturquoise23190207, semithick, dash pattern=on 5.55pt off 2.4pt]
(axis cs:-1,0.74079358002675)
--(axis cs:22,0.74079358002675);

\path [draw=grey127, semithick, dash pattern=on 5.55pt off 2.4pt]
(axis cs:-1,0.791440035666518)
--(axis cs:22,0.791440035666518);

\path [draw=black, semithick, dash pattern=on 5.55pt off 2.4pt]
(axis cs:-1,0.0471689701292912)
--(axis cs:22,0.0471689701292912);

\draw[draw=none,fill=black] (axis cs:-0.4,0) rectangle (axis cs:0.4,0.0464556397681676);
\draw[draw=none,fill=steelblue31119180] (axis cs:0.6,0) rectangle (axis cs:1.4,0.434150691038787);
\draw[draw=none,fill=darkturquoise23190207] (axis cs:1.6,0) rectangle (axis cs:2.4,0.736602764155149);
\draw[draw=none,fill=darkturquoise23190207] (axis cs:2.6,0) rectangle (axis cs:3.4,0.738564422648239);
\draw[draw=none,fill=darkturquoise23190207] (axis cs:3.6,0) rectangle (axis cs:4.4,0.739099420419082);
\draw[draw=none,fill=darkturquoise23190207] (axis cs:4.6,0) rectangle (axis cs:5.4,0.739099420419082);
\draw[draw=none,fill=darkturquoise23190207] (axis cs:5.6,0) rectangle (axis cs:6.4,0.739188586714222);
\draw[draw=none,fill=darkturquoise23190207] (axis cs:6.6,0) rectangle (axis cs:7.4,0.739188586714222);
\draw[draw=none,fill=grey127] (axis cs:7.6,0) rectangle (axis cs:8.4,0.788319215336603);
\draw[draw=none,fill=grey127] (axis cs:8.6,0) rectangle (axis cs:9.4,0.789924208649131);
\draw[draw=none,fill=grey127] (axis cs:9.6,0) rectangle (axis cs:10.4,0.790013374944271);
\draw[draw=none,fill=grey127] (axis cs:10.6,0) rectangle (axis cs:11.4,0.790013374944271);
\draw[draw=none,fill=grey127] (axis cs:11.6,0) rectangle (axis cs:12.4,0.790013374944271);
\draw[draw=none,fill=crimson2143940] (axis cs:12.6,0) rectangle (axis cs:13.4,0.802229157378511);
\draw[draw=none,fill=mediumpurple148103189] (axis cs:13.6,0) rectangle (axis cs:14.4,0.832456531431119);
\draw[draw=none,fill=darkorange25512714] (axis cs:14.6,0) rectangle (axis cs:15.4,0.836558181007579);
\draw[draw=none,fill=sienna1408675] (axis cs:15.6,0) rectangle (axis cs:16.4,0.837717342844405);
\draw[draw=none,fill=orchid227119194] (axis cs:16.6,0) rectangle (axis cs:17.4,0.866785555060187);
\draw[draw=none,fill=orchid227119194] (axis cs:17.6,0) rectangle (axis cs:18.4,0.891573785109229);
\draw[draw=none,fill=orchid227119194] (axis cs:18.6,0) rectangle (axis cs:19.4,0.894872938029425);
\draw[draw=none,fill=forestgreen4416044] (axis cs:19.6,0) rectangle (axis cs:20.4,0.911992866696389);
\draw[draw=none,fill=goldenrod18818934] (axis cs:20.6,0) rectangle (axis cs:21.4,0.964422648238966);
\path [draw=black, semithick]
(axis cs:0,0.0464556397681676)
--(axis cs:0,0.0464556397681676);

\path [draw=black, semithick]
(axis cs:1,0.434150691038787)
--(axis cs:1,0.434150691038787);

\path [draw=black, semithick]
(axis cs:2,0.736602764155149)
--(axis cs:2,0.74079358002675);

\path [draw=black, semithick]
(axis cs:3,0.738564422648239)
--(axis cs:3,0.739634418189924);

\path [draw=black, semithick]
(axis cs:4,0.739099420419082)
--(axis cs:4,0.739634418189924);

\path [draw=black, semithick]
(axis cs:5,0.739099420419082)
--(axis cs:5,0.74079358002675);

\path [draw=black, semithick]
(axis cs:6,0.739188586714222)
--(axis cs:6,0.739634418189924);

\path [draw=black, semithick]
(axis cs:7,0.739188586714222)
--(axis cs:7,0.740347748551048);

\path [draw=black, semithick]
(axis cs:8,0.788319215336603)
--(axis cs:8,0.831119037004012);

\path [draw=black, semithick]
(axis cs:9,0.789924208649131)
--(axis cs:9,0.790459206419973);

\path [draw=black, semithick]
(axis cs:10,0.790013374944271)
--(axis cs:10,0.790459206419973);

\path [draw=black, semithick]
(axis cs:11,0.790013374944271)
--(axis cs:11,0.791440035666518);

\path [draw=black, semithick]
(axis cs:12,0.790013374944271)
--(axis cs:12,0.791618368256799);

\path [draw=black, semithick]
(axis cs:13,0.802229157378511)
--(axis cs:13,0.802229157378511);

\path [draw=black, semithick]
(axis cs:14,0.832456531431119)
--(axis cs:14,0.832456531431119);

\path [draw=black, semithick]
(axis cs:15,0.836558181007579)
--(axis cs:15,0.836558181007579);

\path [draw=black, semithick]
(axis cs:16,0.837717342844405)
--(axis cs:16,0.837717342844405);

\path [draw=black, semithick]
(axis cs:17,0.866785555060187)
--(axis cs:17,0.904948729380294);

\path [draw=black, semithick]
(axis cs:18,0.891573785109229)
--(axis cs:18,0.900490414623272);

\path [draw=black, semithick]
(axis cs:19,0.894872938029425)
--(axis cs:19,0.928756130182791);

\path [draw=black, semithick]
(axis cs:20,0.911992866696389)
--(axis cs:20,0.911992866696389);

\path [draw=black, semithick]
(axis cs:21,0.964422648238966)
--(axis cs:21,0.964422648238966);

\draw (axis cs:-0.5-.50,0.918756130182791+.010) node[
  inner sep=1pt,
  fill=white,
  opacity=.6,
  text opacity=1,
  scale=1.0,
  anchor=south west,
  text=orchid227119194,
  rotate=0.0
]{RCW bound};
\draw (axis cs:-0.5-.50,0.73079358002675+.010) node[
  inner sep=1pt,
  text opacity=1,
  scale=1.0,
  anchor=south west,
  text=darkturquoise23190207,
  rotate=0.0
]{PSCW bound};
\draw (axis cs:-0.5-.50,0.781440035666518+.010) node[
  inner sep=1pt,
  fill=white,
  opacity=.6,
  text opacity=1,
  scale=1.0,
  anchor=south west,
  text=grey127,
  rotate=0.0
]{SSCW bound};
\draw (axis cs:-0.5-.50,0.0371689701292912+.010) node[
  inner sep=1pt,
  fill=white,
  opacity=.6,
  text opacity=1,
  scale=1.0,
  anchor=south west,
  text=black,
  rotate=0.0
]{SCW bound};
\end{axis}

\end{tikzpicture}

%% file: fairvote_more_rules_cm_rate_bar_plot.tex
% This file was created with tikzplotlib v0.10.1.
\begin{tikzpicture}

\definecolor{crimson2143940}{RGB}{214,39,40}
\definecolor{darkgrey176}{RGB}{176,176,176}
\definecolor{darkorange25512714}{RGB}{255,127,14}
\definecolor{darkturquoise23190207}{RGB}{23,190,207}
\definecolor{forestgreen4416044}{RGB}{44,160,44}
\definecolor{goldenrod18818934}{RGB}{188,189,34}
\definecolor{grey127}{RGB}{127,127,127}
\definecolor{mediumpurple148103189}{RGB}{148,103,189}
\definecolor{orchid227119194}{RGB}{227,119,194}
\definecolor{sienna1408675}{RGB}{140,86,75}
\definecolor{steelblue31119180}{RGB}{31,119,180}

\begin{axis}[
height=\axisHeight,
tick align=outside,
tick pos=left,
width=\axisWidth,
x grid style={darkgrey176},
xmin=-1, xmax=22,
xtick style={color=black},
xtick={0,1,2,3,4,5,6,7,8,9,10,11,12,13,14,15,16,17,18,19,20,21},
xticklabels={IRV,PR,Vie,You,Max,RP,SC,Sch,Bal,Dod,Nan,Kem,SD,Plu,Sla,Cop,Buc,Bla,Bor,Coo,KR,Vet},
xtick={0, 1, 2, 3, 4, 5, 6, 7, 8, 9, 10, 11, 12, 13, 14, 15, 16, 17, 18, 19, 20, 21},
xticklabels = {IRV, PR, Vie, You, Max, RP, SC, Sch, Bal, Dod, Nan, Kem, SD, Plu, Sla, Cop, Buc, Bla, Bor, Coo, KR, Vet},
xticklabel style={font=\small},
y grid style={darkgrey176},
ylabel={CM rate},
ymajorgrids,
ymin=0, ymax=1.05,
ytick={0.0, 0.1, 0.2, 0.30000000000000004, 0.4, 0.5, 0.6000000000000001, 0.7000000000000001, 0.8, 0.9, 1.0},
ytick style={color=black}
]
\path [draw=orchid227119194, semithick, dash pattern=on 5.55pt off 2.4pt]
(axis cs:-1,0.590900039824771)
--(axis cs:22,0.590900039824771);

\path [draw=darkturquoise23190207, semithick, dash pattern=on 5.55pt off 2.4pt]
(axis cs:-1,0.351851851851852)
--(axis cs:22,0.351851851851852);

\path [draw=grey127, semithick, dash pattern=on 5.55pt off 2.4pt]
(axis cs:-1,0.384010354440462)
--(axis cs:22,0.384010354440462);

\path [draw=black, semithick, dash pattern=on 5.55pt off 2.4pt]
(axis cs:-1,0.0370370370370371)
--(axis cs:22,0.0370370370370371);

\draw[draw=none,fill=black] (axis cs:-0.4,0) rectangle (axis cs:0.4,0.037037037037037);
\draw[draw=none,fill=steelblue31119180] (axis cs:0.6,0) rectangle (axis cs:1.4,0.0716845878136201);
\draw[draw=none,fill=darkturquoise23190207] (axis cs:1.6,0) rectangle (axis cs:2.4,0.348466746316209);
\draw[draw=none,fill=darkturquoise23190207] (axis cs:2.6,0) rectangle (axis cs:3.4,0.351154918359219);
\draw[draw=none,fill=darkturquoise23190207] (axis cs:3.6,0) rectangle (axis cs:4.4,0.351354042214257);
\draw[draw=none,fill=darkturquoise23190207] (axis cs:4.6,0) rectangle (axis cs:5.4,0.351354042214257);
\draw[draw=none,fill=darkturquoise23190207] (axis cs:5.6,0) rectangle (axis cs:6.4,0.351354042214257);
\draw[draw=none,fill=darkturquoise23190207] (axis cs:6.6,0) rectangle (axis cs:7.4,0.351354042214257);
\draw[draw=none,fill=grey127] (axis cs:7.6,0) rectangle (axis cs:8.4,0.383512544802867);
\draw[draw=none,fill=grey127] (axis cs:8.6,0) rectangle (axis cs:9.4,0.383612106730386);
\draw[draw=none,fill=grey127] (axis cs:9.6,0) rectangle (axis cs:10.4,0.383711668657905);
\draw[draw=none,fill=grey127] (axis cs:10.6,0) rectangle (axis cs:11.4,0.383711668657905);
\draw[draw=none,fill=grey127] (axis cs:11.6,0) rectangle (axis cs:12.4,0.383811230585424);
\draw[draw=none,fill=mediumpurple148103189] (axis cs:12.6,0) rectangle (axis cs:13.4,0.409099960175229);
\draw[draw=none,fill=orchid227119194] (axis cs:13.6,0) rectangle (axis cs:14.4,0.500995619275189);
\draw[draw=none,fill=orchid227119194] (axis cs:14.6,0) rectangle (axis cs:15.4,0.518020708880924);
\draw[draw=none,fill=crimson2143940] (axis cs:15.6,0) rectangle (axis cs:16.4,0.550278773397053);
\draw[draw=none,fill=orchid227119194] (axis cs:16.6,0) rectangle (axis cs:17.4,0.569095977698128);
\draw[draw=none,fill=forestgreen4416044] (axis cs:17.6,0) rectangle (axis cs:18.4,0.728992433293508);
\draw[draw=none,fill=goldenrod18818934] (axis cs:18.6,0) rectangle (axis cs:19.4,0.817502986857826);
\draw[draw=none,fill=darkorange25512714] (axis cs:19.6,0) rectangle (axis cs:20.4,0.832536837913182);
\draw[draw=none,fill=sienna1408675] (axis cs:20.6,0) rectangle (axis cs:21.4,0.853942652329749);
\path [draw=black, semithick]
(axis cs:0,0.037037037037037)
--(axis cs:0,0.037037037037037);

\path [draw=black, semithick]
(axis cs:1,0.0716845878136201)
--(axis cs:1,0.0716845878136201);

\path [draw=black, semithick]
(axis cs:2,0.348466746316209)
--(axis cs:2,0.351354042214257);

\path [draw=black, semithick]
(axis cs:3,0.351154918359219)
--(axis cs:3,0.351851851851852);

\path [draw=black, semithick]
(axis cs:4,0.351354042214257)
--(axis cs:4,0.351354042214257);

\path [draw=black, semithick]
(axis cs:5,0.351354042214257)
--(axis cs:5,0.351354042214257);

\path [draw=black, semithick]
(axis cs:6,0.351354042214257)
--(axis cs:6,0.351851851851852);

\path [draw=black, semithick]
(axis cs:7,0.351354042214257)
--(axis cs:7,0.351851851851852);

\path [draw=black, semithick]
(axis cs:8,0.383512544802867)
--(axis cs:8,0.383711668657905);

\path [draw=black, semithick]
(axis cs:9,0.383612106730386)
--(axis cs:9,0.409099960175229);

\path [draw=black, semithick]
(axis cs:10,0.383711668657905)
--(axis cs:10,0.383711668657905);

\path [draw=black, semithick]
(axis cs:11,0.383711668657905)
--(axis cs:11,0.384010354440462);

\path [draw=black, semithick]
(axis cs:12,0.383811230585424)
--(axis cs:12,0.384010354440462);

\path [draw=black, semithick]
(axis cs:13,0.409099960175229)
--(axis cs:13,0.409099960175229);

\path [draw=black, semithick]
(axis cs:14,0.500995619275189)
--(axis cs:14,0.559936280366388);

\path [draw=black, semithick]
(axis cs:15,0.518020708880924)
--(axis cs:15,0.552668259657507);

\path [draw=black, semithick]
(axis cs:16,0.550278773397053)
--(axis cs:16,0.550278773397053);

\path [draw=black, semithick]
(axis cs:17,0.569095977698128)
--(axis cs:17,0.590900039824771);

\path [draw=black, semithick]
(axis cs:18,0.728992433293508)
--(axis cs:18,0.728992433293508);

\path [draw=black, semithick]
(axis cs:19,0.817502986857826)
--(axis cs:19,0.817502986857826);

\path [draw=black, semithick]
(axis cs:20,0.832536837913182)
--(axis cs:20,0.832536837913182);

\path [draw=black, semithick]
(axis cs:21,0.853942652329749)
--(axis cs:21,0.853942652329749);

\draw (axis cs:-0.5-.50,0.580900039824771+.010) node[
  inner sep=1pt,
  fill=white,
  opacity=.6,
  text opacity=1,
  scale=1.0,
  anchor=south west,
  text=orchid227119194,
  rotate=0.0
]{RCW bound};
\draw (axis cs:-0.5-.50,0.341851851851852+.010) node[
  inner sep=1pt,
  fill=white,
  opacity=.6,
  text opacity=1,
  scale=1.0,
  anchor=south west,
  text=darkturquoise23190207,
  rotate=0.0
]{PSCW bound};
\draw (axis cs:-0.5-.50,0.374010354440462+.025) node[
  inner sep=1pt,
  text opacity=1,
  scale=1.0,
  anchor=south west,
  text=grey127,
  rotate=0.0
]{SSCW bound};
\draw (axis cs:-0.5-.50,0.0270370370370371+.010) node[
  inner sep=1pt,
  fill=white,
  opacity=.6,
  text opacity=1,
  scale=1.0,
  anchor=south west,
  text=black,
  rotate=0.0
]{SCW bound};
\end{axis}

\end{tikzpicture}